\title{Online Firefighting on Cactus Graphs} %TODO Please add
\author{Max {Hugen}}{Department of Information and Computing Sciences, Utrecht University, The Netherlands}{}{}{}%TODO mandatory, please use full name; only 1 author per \author macro; first two parameters are mandatory, other parameters can be empty. Please provide at least the name of the affiliation and the country. The full address is optional. Use additional curly braces to indicate the correct name splitting when the last name consists of multiple name parts.
\author{Bob {Krekelberg}}{Department of Information and Computing Sciences, Utrecht University, The Netherlands}{b.h.a.f.krekelberg@uu.nl}{https://orcid.org/0009-0000-5517-6095}{}
\author{{Alison Hsiang-Hsuan} {Liu}}{Department of Information and Computing Sciences, Utrecht University, The Netherlands}{h.h.liu@uu.nl}{https://orcid.org/0000-0002-0194-9360}{}
\authorrunning{M. Hugen, B. Krekelberg and A.\,H.\,H. Liu} %TODO mandatory. First: Use abbreviated first/middle names. Second (only in severe cases): Use first author plus 'et al.'
\keywords{Firefighting game,
Online algorithms,
Cactus graphs,
1-almost trees} %TODO mandatory; please add comma-separated list of keywords
\begin{document}
\newcommand{\todo}[1]{{\color{Red}TODO: #1}}
\newcommand{\hide}[1]{}
\newcommand{\maxt}[1]{{\color{Blue}#1 in Max's thesis}\xspace}
\newcommand{\sketch}[1]{\begin{proof}[Proof sketch]#1\end{proof}}
\newcommand{\runtitle}[1]{\textbf{#1}}

\newcommand{\alg}{\text{ALG}\xspace}
\newcommand{\algAlmost}{\text{ALG}_{\texttt{A}}}
\newcommand{\algCactus}{\text{ALG}_{\texttt{C}}}
\newcommand{\opt}{\text{OPT}\xspace}
\newcommand{\algset}{\Psi^\text{A}}
\newcommand{\optset}{\Psi^*}
\newcommand{\instance}{\mathcal{I}}

% algorithm for cactus
\newcommand{\cool}{\texttt{Cool-down}}
\newcommand{\uCutCycle}{\hat{u}}
\newcommand{\CCutCycle}{\hat{C}}
\newcommand{\etol}{\texttt{tol}_\texttt{e}}
\newcommand{\EnewT}[2]{T_e(#1\backslash(#2))}
\newcommand{\umax}{u_*}
\newcommand{\Cmax}{\rootC_*}
\newcommand{\cactusBreakCycle}{\textsc{ImprovedBreak}\xspace}

\newcommand{\delayedSet}{\texttt{D}}
\newcommand{\optDelayedSet}{\texttt{D}_{\optset}}

% time labeling
\newcommand{\valg}{\psi^\texttt{A}}
\newcommand{\vopt}{\psi^*}
\newcommand{\Galg}{G^\texttt{A}}

\newcommand{\CoveredSet}{\kappa}
\newcommand{\exclusiveCoveredSet}{\kappa^\prime}
\newcommand{\weight}{w}
\newcommand{\exclusiveWeight}{w^\prime}
\newcommand{\dist}{\texttt{dist}}
\newcommand{\amountBelow}{count}
\newcommand{\tol}{\texttt{tol}}
\newcommand{\newT}[2]{T(#1\backslash#2)}

\newcommand{\newG}{\widetilde{G}}
\newcommand{\newR}{\tilde{r}}
\newcommand{\rootC}{\widetilde{C}}

\newcommand{\charging}{\chi}
\newcommand{\partition}{\mathcal{P}^*}
\newcommand{\parta}{P^{(a)}}
\newcommand{\partb}{P^{(b)}}
\newcommand{\partc}{P^{(c)}}

\newcommand{\dmax}{d_{\max}}

% 1-almost-tree
\newcommand{\tHeaviest}{v^{(t)}_1}
\newcommand{\sHeaviest}{v^{(s)}_1}

% Even firefighters
\newcommand{\algEven}{\text{ALG}_{\texttt{E}}}
\newcommand{\newWeight}{\bar{\weight}}
\maketitle

\begin{abstract}
    The \emph{firefighting game} is a fundamental problem in theoretical computer science, modeling the containment of a spreading process under limited defensive resources.
    In each round, an algorithm may protect a set of vertices by placing firefighters on them, preventing the fire from spreading to those vertices. 
    Vertices that are never reached by the fire are said to be \emph{saved}.
    The goal is to maximize the number of saved vertices.
    While the offline version has been extensively studied, much less is known about the \emph{competitive complexity} of the online variant, where the underlying graph is known in advance but the number of available firefighters in each round is revealed online.

    We study how \emph{graph structure} governs the power of the adversary in the online firefighting game.
    On trees, the problem is known to be $2$-competitive (Coupechoux et al., 2019), a result that relies on a strong structural alignment between the online algorithm and the optimal solution throughout the process.
    We show that this alignment breaks down as soon as cycles are present.
    In particular, the algorithm and the optimal solution may break a cycle differently, or even break different cycles, and therefore operate on fundamentally different residual graphs.
    As a result, the adversary gains additional leverage by steering the process along residual graph structures that no longer admit a direct comparison between the algorithm and the optimal solution.

    Our main result shows that the presence of a single cycle already increases the competitive complexity of the firefighting game dramatically.
    We first show that even on a tadpole graph (a cycle with a tail), no deterministic online algorithm can achieve a competitive ratio better than $\Omega(\sqrt{n})$, where $n$ is the number of vertices.
    We complement this lower bound with matching upper bounds by designing an $O(\sqrt{n})$-competitive online algorithm for $1$-almost trees, that is, graphs obtained from a tree by adding at most one edge.
    Furthermore, we extend our framework to cactus graphs and prove that, despite the presence of multiple cycles, the competitive complexity remains $\Theta(\sqrt{n})$ as long as the cycles do not share edges.
    Together, these results yield a tight characterization of the adversarial power induced by cycles with non-overlapping edges.
    Finally, considering that cactus graphs have treewidth of $2$, we study a variant in which firefighters are released in pairs, that is, an even number of firefighters becomes available in each round.
    Surprisingly, the competitive complexity is significantly reduced to $3$-competitive for this setting.

    The main technical challenge lies in the analysis, as the algorithm and the optimal solution may break different cycles and therefore operate on different residual graphs.
    As a result, unlike the tree case, it is no longer possible to rely on a direct correspondence between saved vertices in the two solutions.
    To address this, our algorithm makes sure that either it saves a large enough portion of the vertices that the optimal solution can save in the same round, or it delays the timing of the fire reaching a large portion of vertices by breaking a carefully selected cycle at a meticulously selected position.
    For the analysis, we conceptually match the vertices where the algorithm places firefighters with the vertices protected by the optimal solution at the corresponding rounds.
    We then partition the vertices protected by the optimal solution so that their total contribution can be charged to the vertices protected by the algorithm, with each algorithmic vertex being charged only a constant number of times.
\end{abstract}

\newpage

\section{Introduction}
The \emph{online firefighting game} proceeds in rounds as follows. 
Initially, the algorithm is given a graph with a designated fire source.
At the beginning of each round, the algorithm is informed of the number of firefighters available in that round and assigns them to protect unburned vertices.
Subsequently, every unprotected neighbor of a burning vertex catches fire, concluding the round.
The online algorithm aims to maximize the number of saved vertices, without knowing in advance how many firefighters will be available in future rounds.

We evaluate the performance of an online algorithm for the firefighting game using \emph{competitive analysis}.
An algorithm $\alg$ is $c$-competitive if $\sup_\instance\frac{\opt(\instance)}{\alg(\instance)} \leq c$, where $\alg(\instance)$ is the number of vertices saved by $\alg$ on instance $\instance$, and $\opt(\instance)$ is the maximum number of vertices that can be saved with full knowledge of the firefighter sequence.
From the complexity perspective, the competitive ratio of an online problem is defined as the infimum competitive ratio achievable by any online algorithm for the problem.

\runtitle{Related work.}
Introduced by Hartnell in 1995~\cite{hartnell1995firefighter}, the firefighter problem can be seen more generally as a model for a threat spreading through a network.
For example, the problems of minimizing the spread of misinformation or threats in social or computer networks, immunizing a population against a virus, and reducing the spread of an invasive species in an ecosystem are also captured by this model.

The firefighting game is known to be a computationally hard problem.
Even for restricted graph classes such as bipartite graphs~\cite{macgillivray2003firefighter} and even for trees of maximum degree~$3$~\cite{DBLP:journals/dm/FinbowKMR07}, the problem is shown to be NP-hard.

On the positive side, several approximation algorithms are known for the firefighting game.
For example, the greedy algorithm is shown to be a $2$-approximation on trees~\cite{hartnell2000firefighting}, and also, there also exists a polynomial time approximation scheme for the firefighting game on trees~\cite{DBLP:journals/talg/AdjiashviliBZ19}.

The online variant of the firefighting game was introduced by Coupechoux et al.~\cite{DBLP:journals/tcs/CoupechouxDEJ19}.
They showed that the greedy algorithm on trees in the online version of the problem is $2$-competitive and optimal.
Also, in the fractional version of the problem, in which a fraction of a firefighter can be assigned to a vertex to partially protect it, Coupechoux et al.~\cite{DBLP:journals/tcs/CoupechouxDEJ19} show that the greedy algorithm is $2$-competitive on trees.
Furthermore, in the special case where the number of firefighters is bounded by $2$, Coupechoux et al.~\cite{DBLP:journals/tcs/CoupechouxDEJ19} propose a $\phi$-competitive algorithm on trees, where $\phi$ is the golden ratio.
Later, Demange et al.~\cite{DBLP:conf/ictcs/DemangeEG19} investigated sufficient conditions for online containment of fires on infinite grid graphs.
They show that for an algorithm to contain the fire, there needs to exist a turn $t$ such that the sum of firefighters released until (including) turn $t$ is at least $16t$.

\medskip
\begin{wrapfigure}{R}{0.3\linewidth}
    \centering
    \includegraphics[width=0.75\linewidth]{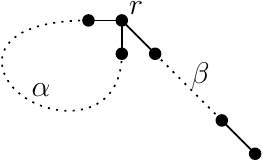}
    \caption{An $(\alpha +1, \beta + 1)$-tadpole graph.}
    \label{fig:tadpole}
\end{wrapfigure}

\runtitle{Our contribution and techniques.}
In this work, we study the competitive complexity of the online firefighting game.
It is known that the game is $2$-competitive on trees, indicating that the problem is relatively simple in this setting.
This raises the question of how the competitive complexity evolves on broader classes of graphs.
Surprisingly, we find that with the addition of a single extra edge to a tree, the problem is no longer constant-competitive.

\begin{theorem}\label{thm:TadpoleLB}
    No deterministic online algorithm can achieve $o(\sqrt{n})$-competitiveness for the online firefighting game on tadpole graphs.
\end{theorem}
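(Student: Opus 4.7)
The plan is to exhibit a family of tadpoles $G_n$ together with an adaptive adversarial firefighter sequence such that every deterministic online algorithm $\alg$ incurs ratio $\Omega(\sqrt n)$. Set $m = \lceil\sqrt n\rceil$ and $2k = n - m$, and let $G_n$ consist of the cycle $c_0, c_1, \ldots, c_{2k-1}$ together with a tail $c_0, p_1, \ldots, p_m$ attached at the junction $c_0$. I place the fire at $c_k$, so that $c_0$ lies at cycle-distance exactly $k$ from the fire along either arc of the cycle.

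The adversary releases one firefighter in round $1$ and then branches on the algorithm's response. First I analyze the branch in which $\alg$ places its round-$1$ firefighter on a cycle vertex $v \ne c_0$: here the adversary releases nothing further. Because $c_0$ is left unprotected, the fire spreads around both arcs of the cycle, crosses the junction $c_0$, and burns the entire tail; the only saved vertex is $v$ itself, so $\alg \le 1$. The offline optimum for this length-one sequence can instead protect $c_0$, blocking all fire from entering the tail and saving $\opt \ge m + 1$ vertices. This forces the ratio $\opt/\alg \ge m + 1 = \Omega(\sqrt n)$.

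Next I analyze the complementary branch, in which $\alg$ places its round-$1$ firefighter at $c_0$ or at some path vertex $p_i$. The adversary releases zero firefighters in rounds $2, \ldots, k-1$ and then a burst of $B \ge 1$ firefighters in round $k$. Since no cycle vertex near the fire was protected in round $1$, by the end of round $k-1$ the fire has consumed every cycle vertex except possibly $c_0$, and in round $k$ the only remaining unburned vertices lie in $\{c_0\} \cup \{p_1, \ldots, p_m\}$; a short case analysis shows that the best the algorithm can do with its $B$ firefighters is to protect $c_0$ (and, if helpful, $p_1$), yielding $\alg \le m + 1$. On the same firefighter sequence, however, $\opt$ plays $c_{k-1}$ in round $1$ to block one arc of the cycle immediately at the source, and then $c_0$ in round $k$ just before the fire reaches the junction, saving the entire short arc together with the full tail, so $\opt \ge k + m$. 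With $m = \Theta(\sqrt n)$ and $2k = n - m$ this yields the ratio $(k + m)/(m + 1) = \Theta(\sqrt n)$, matching the bound obtained in the first branch.

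The main obstacle I expect is verifying the algorithmic upper bound in the second branch: with a large burst $B$ available in round $k$, one must argue that no placement of these firefighters can rescue $\alg$. The key observation is that the firefighters arrive ``too late''---after round $k-1$ every cycle vertex other than $c_0$ is already burning, so the burst can only protect vertices that are either already safe or that can be rescued merely by protecting $c_0$. A secondary subtlety is the subcase $v = p_i$ with $i > 1$, where one must check that the best available option is still to protect $c_0$ in round $k$, recovering at most $m+1$ saved vertices. Finally, the balancing of the two branch ratios $m + 1$ and $(k+m)/(m+1)$ motivates the parameter choice $m = \Theta(\sqrt n)$, at which the two bounds coincide asymptotically and so neither can be avoided by the algorithm's round-$1$ decision.
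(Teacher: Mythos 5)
Your proposal is correct and follows essentially the same two-branch adaptive-adversary argument as the paper: a tadpole with cycle of size $\Theta(n)$ and tail of size $\Theta(\sqrt{n})$, one firefighter in round~1, and the adversary either stops (if the algorithm commits to the cycle) or releases more (if it commits to the tail/junction side), balancing the two ratios at $\Theta(\sqrt{n})$. The only differences are cosmetic --- you place the fire antipodally on the cycle and delay the second release to round $k$, whereas the paper puts the fire at the junction and releases the second firefighter immediately in round~2 --- and both variants check out.
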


\begin{proof}
    Consider an arbitrary deterministic online algorithm \alg.
    Let the input graph $G$ be an $(\alpha+1, \beta+1)$-tadpole graph, consisting of a cycle of $\alpha+1$ vertices and a path of $\beta+1$ vertices joined at a common vertex $r$, where $r$ is the unique degree-$3$ vertex of $G$. 

    \runtitle{Case 1.}
    In the first round, the adversary releases one firefighter. 
    If \alg protects any vertex on the cycle, the adversary stops releasing more firefighters. 
    In this case, \opt protects the neighbor of $r$ on the path and saves $\beta$ vertices of the path (where $r$ is burned), whereas $\alg$ saves only the single protected vertex.

    \runtitle{Case 2.}
    If \alg protects a vertex on the path in the first round, then the adversary releases one more firefighter in the next round and then stops. 
    In this case, \alg can save at most $1 + \beta$ vertices (the initial protection and the path).
    Meanwhile, \opt can protect two vertices on the cycle: one neighbor of $r$ in the first round, and the vertex adjacent to the other neighbor of $r$ on the cycle in the second round.
    Thus, $\opt$ saves at least $\alpha-1$ vertices on the cycle.
    
    Therefore, the competitive ratio is at least $\frac{\opt((G,r,(f)_{i\geq 1}))}{\alg((G,r,(f)_{i\geq 1}))} \geq \min \{\frac{\beta}{1}, \frac{\alpha-1}{1+\beta}\}$.
    To maximize this lower bound, we balance the two terms by setting $\beta = \frac{\alpha - 1}{\beta + 1}$, which yields $\alpha = \beta(\beta+1)+1$.
    Since the number of vertices in $G$, $n= \alpha+\beta+1 = \Theta(\beta^2)$, it follows that the competitive ratio of $\alg$ is at least $\Omega(\sqrt{n})$.
\end{proof}

We then design an $O(\sqrt{n})$-competitive algorithm for $1$-almost trees, that is, graphs formed by adding at most one extra edge to a tree.
Equivalently, a $1$-almost tree contains at most one cycle, thereby generalizing tadpole graphs.
More specifically:

\begin{theorem}\label{thm:Almost}
    There is a $(5\sqrt{n}+2)$-competitive algorithm $\algAlmost$ for the online firefighting game on $1$-almost-trees, which is asymptotically optimal.
\end{theorem}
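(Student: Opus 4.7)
The plan is to design $\algAlmost$ as a threshold-based hybrid of two natural strategies (with threshold of order $\sqrt{n}$) and analyze it via a charging argument against \opt.

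If $G$ is a tree, I simply invoke the $2$-competitive greedy tree algorithm of Coupechoux et al., which already gives $O(\sqrt{n})$-competitiveness. The interesting case is when $G$ has a unique cycle $C$; let $u$ denote the cycle vertex that the fire first reaches (possibly the source itself). I view $G$ as the cycle $C$ together with subtrees $\{T_v : v \in C\}$ dangling from each cycle vertex, plus a stem path from the source to $u$ if the source is not itself on $C$.

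At each round, $\algAlmost$ considers two candidate moves: a \emph{tree move}, namely the next greedy move on the spanning tree obtained from $G$ by deleting a single edge of $C$; and a \emph{cycle-breaking move}, which places firefighters on cycle vertices adjacent to the fire front so as to prevent the fire from wrapping around $C$, thereby saving an entire arc of cycle vertices together with their dangling subtrees. The algorithm commits to the cycle-breaking strategy as soon as such a move can save at least $\sqrt{n}$ vertices; otherwise it keeps playing the greedy tree strategy.

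For the analysis, I build a charging map $\charging : \optset \to \algset$ whose preimage sizes are all $O(\sqrt{n})$. I partition $\optset$ into $\parta$ (saves strictly inside subtrees disjoint from $C$), $\partb$ (saves on $C$ itself), and $\partc$ (saves inside subtrees attached to $C$). Vertices in $\parta$ live in the same tree structure for both $\algAlmost$ and \opt, so the tree $2$-competitiveness of Coupechoux et al.\ charges them with constant multiplicity. The vertices of $\partb \cup \partc$ form a single contiguous arc on $C$ together with its dangling subtrees, of some total weight $W$: if $W \geq \sqrt{n}$ then by design $\algAlmost$ has triggered its cycle-breaking strategy and saves $\Omega(W)$ of these vertices, giving constant multiplicity; if $W < \sqrt{n}$ then $\partb \cup \partc$ contains fewer than $\sqrt{n}$ vertices and can be charged uniformly to any single vertex of $\algset$. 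The main obstacle is that $\algAlmost$ and \opt\ may break $C$ at different edges and thus operate on structurally different residual trees, so vertices in $\partb \cup \partc$ do not have direct analogues in $\algAlmost$'s working tree; the above case split absorbs this mismatch by paying the $\sqrt{n}$ factor only in the low-weight arc case. Asymptotic optimality of the $O(\sqrt{n})$ bound then follows immediately from \vref{thm:TadpoleLB}, since tadpoles are $1$-almost trees.
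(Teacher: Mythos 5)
There is a genuine gap: the algorithm you propose is not $O(\sqrt{n})$-competitive, because its cycle-aware move essentially never becomes available on the hard firefighter sequences. The move you describe (blocking the fire from wrapping around $C$ so that a whole arc together with its dangling subtrees is saved) requires two protected cycle vertices in the same round; when firefighters arrive one at a time --- exactly the regime of the lower bound --- a single protection saves only its own covered set, so your trigger ``this move saves at least $\sqrt{n}$ now'' never fires and the algorithm degenerates to plain greedy. (A further underspecification: greedy measured on ``the spanning tree obtained by deleting an edge of $C$'' uses the wrong weights, since the fire spreads in $G$ and not in that tree; a cycle-neighbor of $r$ can have huge spanning-tree weight but true covered set of size $1$.) Even under the favourable reading where your tree move is greedy with respect to true covered sets in $G$, the algorithm fails: take a cycle $(r,u_1,\dots,u_\alpha)$ with $K\approx\sqrt{n}$ pendant paths of length about $n^{1/4}$ attached at $u_2,\dots,u_{K+1}$, a tail of length $n^{1/4}+2$ at $r$, and the sequence with one firefighter in round $1$ and one in round $\lfloor\alpha/2\rfloor$. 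In round $1$ every single protection saves at most $n^{1/4}+2<\sqrt{n}$ vertices, so you play greedy and protect the tail; the fire then wraps around both arcs, by round $\lfloor\alpha/2\rfloor$ all pendant paths are burned, and you end with $\Theta(n^{1/4})$ saved vertices. Meanwhile \opt protects $u_1$ in round $1$ (immediate gain $1$, but it delays the fire from the pendant mass by $\Theta(\alpha)$ rounds) and one far cycle vertex in round $\lfloor\alpha/2\rfloor$, saving $\Theta(n)$; the ratio is $\Theta(n^{3/4})$.

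The missing idea is precisely the paper's, and note that its threshold is used in the opposite direction from yours: $\algAlmost$ breaks the cycle exactly when the greedy gain is \emph{small} ($\weight(v_1)<\sqrt{\weight(\rootC)}$), paying one firefighter not for immediate profit but for future option value, and it chooses \emph{where} to break by maximizing the tolerance $\tol(u,\rootC,\sqrt{\weight(\rootC)})$ over $u\in N(\newR)\cap\rootC$; Lemmas~\vref{lem:NeighborsAreTheBest} and~\vref{lem:CycleBreakIsGood} certify that this break is within a $2\sqrt{\weight(\rootC)}$ factor of any break \opt could have made. Correspondingly, your charging plan lacks a counterpart of the paper's part $P^c$: vertices saved by \opt that are \emph{unavailable} to the algorithm because \opt delayed the fire by a break the algorithm did not make. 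These cannot be charged through the tree $2$-competitiveness of Coupechoux et al., since the two players operate on different residual graphs, and your step ``if $W\geq\sqrt{n}$ then the cycle-breaking strategy was triggered and saves $\Omega(W)$'' is exactly the unproven claim that the instance above violates. Only the asymptotic-optimality half of your argument (via Theorem~\vref{thm:TadpoleLB}) is sound.
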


By further generalizing our algorithm to cactus graphs with multiple cycles, we find that the number of cycles does not increase the difficulty of the problem from the perspective of competitive complexity.
Specifically:
\begin{theorem}\label{Thm:Cactus}
    There is a $(20\sqrt{n}+1)$-competitive algorithm $\algCactus$ for the online firefighting game on cactus graphs, which is asymptotically optimal.
\end{theorem}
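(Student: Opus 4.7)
The plan is to generalize $\algAlmost$ to cactus graphs by exploiting the block-cut tree structure: since no two cycles share more than one vertex, the fire at any moment is confined to at most one \emph{active} cycle, and once that cycle is either broken or fully traversed, the residual problem decomposes into smaller cactus subinstances hanging off its cut vertices. I would first formalize this by constructing the block-cut tree and defining, for each cycle $C$ and each of its cut vertices $u$, the cactus subtree rooted at $u$ consisting of all components reachable from $u$ without re-entering $C$. The vertex counts of these subtrees can be precomputed and will feed directly into the threshold comparisons driving $\algAlmost$.

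Next, I would define $\algCactus$ recursively along the block-cut tree. When the fire enters a new cycle $C$ through a cut vertex $r$, the algorithm locally mimics $\algAlmost$ restricted to $C$ together with its hanging cactus subtrees: it compares the heaviest subtree weight against a $\sqrt{n}$-threshold, and decides whether to \emph{break} $C$ (by protecting a frontier vertex on $C$) or to \emph{follow} $C$ (letting the fire traverse it while greedily protecting the heavy subtrees hanging off its vertices). Once $C$ is resolved, the algorithm recurses on the surviving sub-cactuses attached at whichever cut vertices the fire still reaches. Outside any active cycle, $\algCactus$ acts as the greedy tree algorithm.

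The main technical work, and the main obstacle, is the charging argument. The difficulty, as flagged in the abstract, is that $\algCactus$ and $\opt$ may break \emph{different} cycles, so their residual graphs differ globally, not just locally. I plan to partition $\opt$'s saved vertices into three classes $\partition = \{\parta, \partb, \partc\}$ generalizing those of the 1-almost tree analysis: (a) vertices saved in cactus subtrees lying outside every cycle where $\algCactus$ and $\opt$ disagreed; (b) vertices $\opt$ saved because it broke a cycle $C$ that $\algCactus$ decided to follow; (c) vertices $\opt$ saved because it followed a cycle $C$ that $\algCactus$ decided to break. Class (a) is bounded by the tree competitive ratio applied block-by-block. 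Classes (b) and (c) are charged through the cut vertex by which the fire entered the mishandled cycle, using the $\sqrt{n}$-threshold to compare the $\opt$ cost on that cycle's side against the $\algCactus$ gain on the other side, exactly as in the proof of \vref{thm:Almost}. The key invariant to establish is that each $\algCactus$-saved vertex receives at most a constant number of charges: each cycle incident to it contributes $O(1)$ charges, and the block-cut tree prevents the same cactus subtree from being reused for charges originating in distinct cycles, so the per-vertex multiplicity stays bounded. Summing over all cycles yields the constant $15$ (versus $6$ for a single cycle) in the competitive ratio. Asymptotic optimality is immediate, since a tadpole is a cactus and \vref{thm:TadpoleLB} applies.
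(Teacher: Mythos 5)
There is a genuine gap. Your plan rests on the premise that ``the fire at any moment is confined to at most one \emph{active} cycle,'' and that once that cycle is resolved you can recurse block-by-block. In a cactus graph this is false: the fire source may lie on several root cycles simultaneously, and the spreading fire front touches many blocks of the block-cut tree at once, so the algorithm must arbitrate among several cycles that all demand a break while only one firefighter may be available in a round. This is exactly the difficulty the paper's algorithm is built around, and your proposal contains no mechanism for it. The paper's $\algCactus$ does not simply run $\algAlmost$ locally per cycle: when the greedy gain is below $\sqrt{\weight(\rootC_1)}$ it invokes \cactusBreakCycle, which (i) compares tolerances across \emph{all} root cycles to decide which cycle to break, (ii) deliberately breaks at an alternative vertex farther from the fire source rather than a neighbor of the root, and (iii) sets a \emph{cool-down timer} forcing greedy choices in subsequent rounds, precisely to prevent the algorithm from burning firefighters on repeated cycle breaks with no immediate gain. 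Without something playing the role of the cool-down mechanism and the tolerance-based cycle selection (Lemmas on feasibility of \cactusBreakCycle, the secured alternative break, and protection quality during cool-down), the adversary can feed single firefighters in rounds that lure your recursive algorithm into breaking cycle after cycle while $\opt$ harvests heavy subtrees, and your class-(b)/(c) charges will not be bounded by $O(\sqrt{n})$ times what the algorithm saves.

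The analysis side has a matching gap. Your partition into ``cycles where $\algCactus$ and $\opt$ disagreed'' is not well defined when $\opt$ places no firefighter on a cycle yet still benefits from it, namely when protecting one cycle vertex merely \emph{delays} the fire and lets $\opt$ save vertices elsewhere later; the paper needs the notion of delayers and delayed sets to form a cycle-respecting partition (Type-$a$/$b$/$c$ parts built from $\optDelayedSet$), and needs an explicit charging rule (peers, the \texttt{Break-cycle rule}, the \texttt{Cool-down rule}) together with a counting argument showing each vertex of $\algset$ is charged at most $5$ times; combined with the per-part bound $3\sqrt{n}$ this is where $15\sqrt{n}+1$ comes from. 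Your claim that the block-cut tree ``prevents the same cactus subtree from being reused'' and that summing over cycles ``yields the constant $15$'' is asserted rather than argued, and as stated it would not survive the interaction between delayed vertices, zero-firefighter rounds, and multiple simultaneous root cycles. The only part of your proposal that matches the paper and is complete as stated is the final sentence: asymptotic optimality does follow immediately from the tadpole lower bound.
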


Since cactus graphs have treewidth at most two, any cactus graph admits a tree decomposition in which each bag has size at most three.
A natural question then arises: what happens if firefighters are released in a multiple of the treewidth?
Surprisingly, we find that the special instance on cactus graphs, where an even number of firefighters is available in each round, significantly reduces the competitive complexity of the firefighting game:
\begin{theorem}\label{Thm:Even}
    There is a $3$-competitive algorithm $\algEven$ for the online firefighting game on cactus graphs with even firefighters in each round.
\end{theorem}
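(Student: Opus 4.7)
The plan is to exploit two structural features: in a cactus graph, any cycle can be entirely sealed off from the fire in a single round by using exactly two firefighters to protect the two cycle-neighbors of the vertex through which the fire would enter; and the block structure of a cactus is a tree. Since firefighters are guaranteed in pairs every round, $\algEven$ always has the resources either to completely seal an approaching cycle, or to perform two steps of a tree-style greedy protection on the residual graph.

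The first step is to define $\algEven$ explicitly. In each round with $2k$ firefighters, the algorithm processes them in $k$ pairs, and for each pair it picks the action with largest immediate gain (in number of saved vertices) among two candidates: (i) sealing an approaching cycle by protecting the two cycle-neighbors of its entry vertex, or (ii) placing the two firefighters on two vertices selected by the tree-greedy rule of Coupechoux et al.~\cite{DBLP:conf/iscopt/CoupechouxDEJ18} applied to the current residual graph. Iterating this pair-by-pair greedy choice across rounds yields the algorithm.

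The second step is the competitive analysis, carried out via a charging scheme in the style of the framework described in the introduction. I would partition the vertices saved by \opt into three classes: (a) vertices in cycles that both \opt and $\algEven$ seal, matched one-to-one with the corresponding $\algEven$-saved vertices; (b) vertices in the tree-like residue, charged using the standard $2$-competitive tree-greedy analysis; and (c) vertices in cycles where \opt and $\algEven$ take opposite sealing decisions. The local greedy optimality of each pair's choice should ensure that class-(c) charges can be distributed so that every $\algEven$-vertex receives at most one class-(c) charge on top of at most two class-(a) or class-(b) charges, yielding $|\opt| \leq 3\,|\algEven|$.

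The main obstacle will be formalizing class (c): when $\algEven$ abandons a cycle $C$ that \opt seals, \opt can save a structured block consisting of the entire cycle together with its hanging subtrees, while $\algEven$'s substitute action may save a very different set of vertices. The key lemma to establish is a per-round inequality showing that $\algEven$'s chosen pair-action saves at least one-third as many vertices as the best \opt-sealing it declined; combined with an inductive invariant tracking uncharged \opt-vertices across rounds, this should propagate the $3{:}1$ charging ratio and close the argument.
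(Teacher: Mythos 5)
Your algorithm is essentially the paper's $\algEven$ (a pairwise greedy choice between sealing the heaviest root cycle with two firefighters and taking the heaviest available vertices), so the gap is in the analysis plan, specifically in how you account for cycles that $\algEven$ declines to seal. The difficulty there is not captured by a per-round inequality: when \opt protects one or two vertices on a cycle $C$ that $\algEven$ abandons, the profit \opt extracts via $C$ is not realized in that round alone — \opt also \emph{delays} the fire to the subtrees hanging behind $C$ and then saves those vertices with firefighters released many rounds later, while in $\algEven$'s residual graph they are already burned. Your key lemma (``the chosen pair-action saves at least one third of the best \opt-sealing it declined'') neither bounds this delayed profit nor is the right target: by greediness the declined pair in fact saves at least the \emph{full} covered weight of the cycle at that time, and the factor $3$ arises elsewhere (from how many parts of \opt's protections can charge the same algorithm pair, plus the additively counted commonly saved vertices). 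Your class (b) is also shaky: the standard $2$-competitive tree argument needs every \opt-protected vertex to still be available to the algorithm at the corresponding time, and exactly this availability fails for delayed vertices because \opt and $\algEven$ play on different residual graphs after opposite decisions on a cycle. Finally, ``at most one class-(c) charge on top of two class-(a)/(b) charges gives $3$'' is asserted rather than derived; mixing a black-box ratio-$2$ tree analysis with additional per-vertex charges requires an explicit charging function to exclude double counting.

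The paper resolves this by folding the delayed vertices into the partition itself: each cycle touched by \opt yields one part consisting of its one or two \opt-protected cycle vertices together with all \opt-protected vertices they delay (the sets $\optDelayedSet(\cdot)$); since all of these lie in the covered set of that cycle, the part's exclusive weight is at most the weight of the heaviest root cycle at the \emph{anchor} time $x$, the largest odd time step not exceeding the part's first protection time. Because firefighters arrive in even numbers, the algorithm holds at least two firefighters at time $x$, and a short case analysis of the greedy pair choice shows that the pair $\{\valg_x,\valg_{x+1}\}$ already covers that bound with factor $1$; each such pair is the anchor of at most two parts, giving $\opt(\instance)\le(1+2)\cdot\algEven(\instance)$. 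To salvage your outline you would need to replace the per-round lemma by a statement of exactly this form, bundling a cycle with everything \opt later saves behind it and comparing it against one fixed pair of algorithm protections.
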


\runtitle{Technical challenge.}
A natural approach is to adapt the greedy algorithm for trees in~\cite{DBLP:journals/tcs/CoupechouxDEJ19}.
In trees, although the greedy algorithm and the optimal solution may protect different vertices in each round, the tree structure guarantees that there is a unique path from the fire source to each vertex.
Thus, any vertex available to the optimal solution is also available to the algorithm, unless it has already been saved by the algorithm.
The analysis then charges the vertices saved exclusively by the optimal solution to those saved by the algorithm, according to the greedy nature of the algorithm.
Additionally, the number of vertices protected by both the optimal solution and the algorithm is at most the number of vertices saved by the algorithm, which implies that the optimal solution saves at most twice as many vertices as the algorithm.

However, the availability guarantee no longer holds when the graph contains a cycle.
In this case, the shortest distance between a vertex and the fire source (i.e., the earliest time the vertex can be burned) can be increased by ``breaking'' a cycle that overlaps the shortest path from the vertex to the fire source.
Consequently, if the online algorithm breaks the cycle differently from the optimal solution, the two strategies result in different residual graphs, which invalidates the charging framework.

In cactus graphs, even multiple cycles can be present.
This introduces further misalignments between the algorithm’s strategy and that of the optimal solution: not only in how to break a cycle, but also in which cycle to break.

The misalignment between the algorithm’s and the optimal solution’s residual graphs also makes the analysis challenging.
We design a charging framework, which first partitions the vertices saved by the optimal solution into independent parts, and then charges each part to a subset of vertices saved by the algorithm.
The number of parts may grow depending on the configuration of the optimal solution.
Thus, the challenge is to partition the vertices saved by the optimal solution appropriately (based on different graph classes and algorithms) and to select the subsets of vertices saved by the algorithm such that no vertex saved by the algorithm is charged more than a constant number of times.

\runtitle{Paper organization.}
In Section~\ref{Sec:Framework}, we introduce the key components of our algorithmic design, provide a sketch of the algorithms, and present the analysis framework together with the properties that ensure its feasibility.
In Section~\ref{Sec:almostTree}, we demonstrate in detail how to apply our design and analysis framework to $1$-almost trees and prove Theorem~\ref{thm:Almost}.
Section~\ref{Sec:cactus} further extends the framework to cactus graphs, establishing Theorem~\ref{Thm:Cactus}.
Finally, in Section~\ref{Sec:even}, we adapt the framework to a special case on cactus graphs in which firefighters arrive in pairs, and show that the competitive complexity in this setting is constant.

\section{Preliminaries, algorithmic blueprint, and analysis framework}
\label{Sec:Framework}
Let $G=(V, E)$ be an undirected graph. 
A \emph{(simple) cycle} is a sequence of distinct vertices $(u_1, u_2, \cdots, u_k) \subseteq V$ with $k \geq 3$ such that $(u_i, u_{i+1}) \in E$ for all $1 \leq i \leq k-1$, and $(u_1, u_k) \in E$.\footnote{For accommodating other notations used in the paper, we use $(u, v)$ as edges. Since the graph discussed in this work is undirected, $(u, v) = (v, u)$.}
Given any graph $G$, we denote $V(G)$ as the set of vertices in $G$ if it is not specified.
The graph $G$ is a \emph{$1$-almost-tree} if it is isomorphic to a connected graph with $|V(G)|$ vertices and $|V(G)|$ edges. 
Equivalently, a $1$-almost-tree can be seen as a tree with one extra edge and has exactly one cycle.
The graph $G$ is a \emph{cactus} graph if every edge $e\in E$ is contained in at most one cycle. 
In other words, in cactus graphs, two cycles can share at most one vertex.
In this paper, we say that a vertex is a \emph{cycle vertex} if it is in at least one cycle, and a \emph{non-cycle} vertex otherwise. 
Moreover, a cycle is a \emph{root cycle} if it contains the fire source $r$.

In the online firefighting game, the instance is $\mathcal{I} = (G, r, (f_i)_{i\geq 1})$, where $G = (V, E)$ is a graph with the \emph{fire source} or \emph{root} $r$, and $f_i \geq 0$ is the number of available firefighters in round $i$. 
We denote $n$ as the number of vertices in $G$ (that is, $n = |V(G)|$ using our notation).
Let the \emph{(open) neighborhood} of a vertex $v\in V$ be the set of all vertices adjacent to $v$ and denoted by $N(v)$. That is, $N(v) = \{u\mid (u,v) \in E\}$.
In each round $i$ of the firefighting game, the online algorithm learns the value of $f_i$ and \emph{protects} unburned vertices by assigning the firefighters to them.
Then, the fire spreads to unprotected vertices in $N(v)$ for all burning vertices $v$, and it ends this round.
In this work, we consider the cases where $G$ is a $1$-almost-tree (Section~\ref{Sec:almostTree}) or a cactus graph (Section~\ref{Sec:cactus}).

\medskip

\runtitle{Status of a vertex.}
At any time in the firefighting game, a vertex can be \emph{protected}, \emph{burned}, \emph{saved}, or \emph{available}.
A vertex is \emph{protected} if there is a firefighter assigned to it.
A vertex $v$ is \emph{burned} if there is no protected vertex on any path from $r$ to $v$.
Conversely, a vertex is \emph{saved} if there is at least one protected vertex on every path from $r$ to $v$.
In the middle of the game, a vertex may not be burned yet, but there are some paths between it and the fire source $r$ that still have no protected vertices yet. In this case, we say that this vertex is \emph{available}.
The algorithm's goal is equivalent to saving as many vertices as possible.

\medskip

\begin{figure}
     \centering
     \begin{subfigure}[b]{0.3\textwidth}
         \centering
         \includegraphics[width=\textwidth]{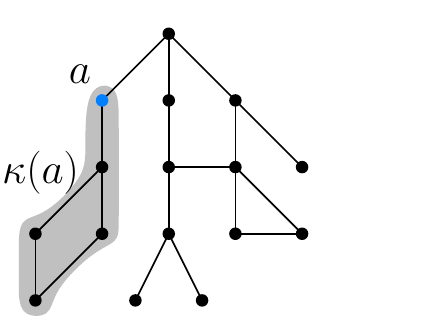}
         \caption{The covered set of a protected non-cycle vertex.}
         \label{fig:covered_set_node}
     \end{subfigure}
     \hfill
     \begin{subfigure}[b]{0.3\textwidth}
         \centering
         \includegraphics[width=\textwidth]{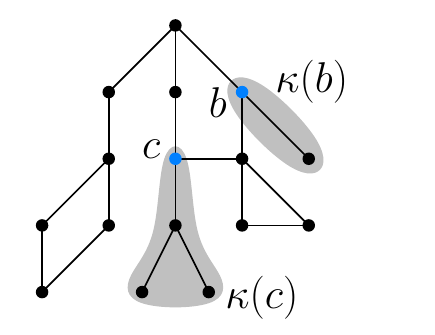}
         \caption{The covered sets of protected cycle vertices.}
         \label{fig:covered_set_cycle_node}
     \end{subfigure}
     \hfill
     \begin{subfigure}[b]{0.3\textwidth}
         \centering
         \includegraphics[width=\textwidth]{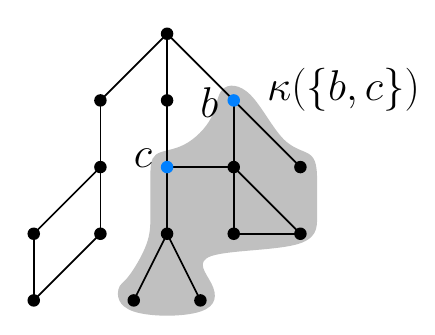}
         \caption{The covered set of a set of protected (cycle) vertices.}
         \label{fig:covered_set_set}
     \end{subfigure}
        \caption{Examples of covered sets.}
        \label{fig:covered_set}
\end{figure}

\runtitle{Covered set and profit of algorithms.}
A vertex $u$ is \emph{covered} by $v$ if every path from $u$ to $r$ contains vertex $v$.
Conversely, all the vertices covered by $v$ form a \emph{covered set} of $v$, denoted by $\CoveredSet(v)$.
See Figure~\ref{fig:covered_set_node}.
More generally, a vertex $u$ is covered by a set of vertices $S$ if every path from $u$ to $r$ contains at least one vertex in $S$, and the covered set of $S$, $\CoveredSet(S)$, is defined analogously.\footnote{Naturally, we only consider $S \subseteq V\setminus \{r\}$.}
Intuitively, the covered set of a set of vertices $S$ can be seen as the set of vertices that are saved by protecting all vertices in $S$ (before fire spreads to them). 
Equivalently, it is the set of vertices that are saved by placing firefighters on each vertex in $S$ before the fire starts spreading. 
Note that by this definition, $\CoveredSet(S)$ can be larger than $\bigcup_{v\in S} \CoveredSet(v)$, as illustrated in Figure~\ref{fig:covered_set_cycle_node} and Figure~\ref{fig:covered_set_set}.
The \emph{weight} of a set of vertices $S$, $\weight(S) = |\CoveredSet(S)|$, is the size of its covered set. 

Let $\algset$ denote the set of all vertices protected by our algorithm \alg.
By definition, $\CoveredSet(\algset)$ is the set of vertices saved by \alg, and $\weight(\algset)$ is the final \emph{profit} of \alg. That is, $\weight(\algset)$ is the total number of vertices saved by \alg.
Similarly, $\optset$ denotes the set of all vertices protected by the optimal solution \opt, and $\CoveredSet(\optset)$ and $\weight(\optset)$ are defined symmetrically. 

\subsection{Algorithmic blueprint}
In this subsection, we first introduce essential notions for the algorithm, and then we provide the blueprint of the algorithms that will be further discussed in detail in later sections.

\paragraph*{Notions for the algorithm}

\begin{figure}[t]
    \centering
    \includegraphics[width=\linewidth]{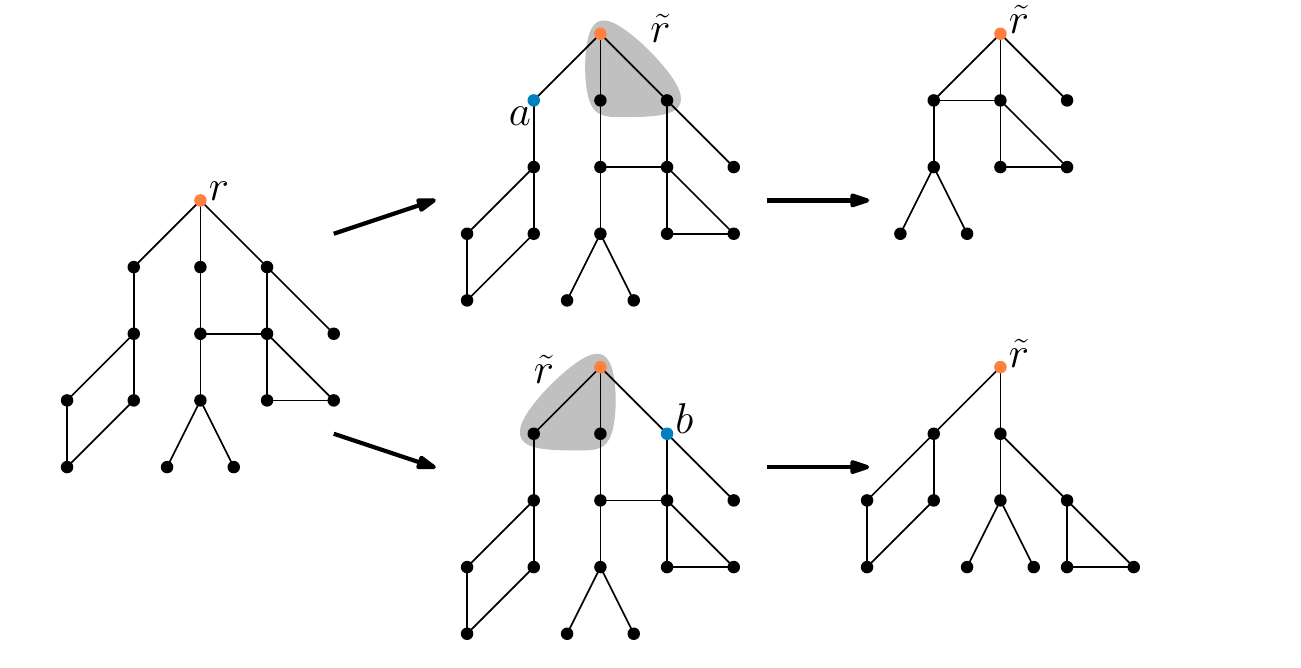}
    \caption{Graph reduction. The left-hand figure shows the graph at the beginning of the round, with the orange vertex as the fire source. The middle figure illustrates the spread of the fire to unprotected vertices by the end of the round where the blue vertices $a$ and $b$ are protected. The right-hand figure shows the graph after the graph reduction.
    \label{fig:graph_change_alg}}
\end{figure}

\runtitle{Graph reduction.}
We follow the reduction procedure on online firefighting on trees by Coupechoux et al.~\cite{DBLP:journals/tcs/CoupechouxDEJ19}. 
At the beginning of a round, the algorithm sees the graph $G$ with a fire source $r$.
At the end of the round, after the algorithm's decisions of protecting a set of available vertices $P$, and the fire spreading to all unprotected neighbors of $r$, all burning vertices are contracted with $r$ into a new fire source $r^\prime$. 
Moreover, all vertices in $\CoveredSet(P)$ are removed from the graph. 
The new graph $G^\prime$ with the new fire source $r^\prime$ is the input to the algorithm in the next round.
By this reduction procedure, the algorithm always sees exactly one burning vertex, which is the fire source, and all other vertices are available for being protected.
An example of the reduction procedure is illustrated in Figure~\ref{fig:graph_change_alg}.

Note that the resulting graph is sensitive to the protected vertex. 
In the case where the algorithm protects a different vertex compared to the optimal algorithm, the graph resulting from the reduction procedure for each algorithm might differ substantially.
For an illustration of this, compare the difference after protecting vertex $a$ and vertex $b$ in Figure~\ref{fig:graph_change_alg}.

In our work, we investigate graphs that are \emph{$1$-almost-trees} or \emph{cactus graphs}. 
By a slight abuse of terminology, we henceforth define a $1$-almost-tree as a graph containing \emph{at most} one cycle.
Under this definition, both the class of $1$-almost-trees and the class of cactus graphs are closed under vertex extraction and under contraction of a vertex together with some of its neighbors.
Consequently, it suffices to analyze the algorithm in a single round, as the same arguments extend to the full game.

\medskip

\runtitle{Subgraphs induced by covered sets.}
Protecting a cycle vertex ``breaks'' the cycle, transforming it into tree branches in the reduced graph.
See Figure~\ref{fig:graph_change_alg} for an example where the protected vertex $b$ breaks the $5$-cycle.
Given a root cycle $\rootC$ and a vertex $v$ on it, we denote $\newT{\rootC}{v}$ as the subgraph induced by $\{r\} \cup \CoveredSet(\rootC)$ with $v$ being protected and $\CoveredSet(v)$ being removed. 
Formally, $\newT{\rootC}{v} = G[\{r\} \cup \CoveredSet(\rootC) \setminus \CoveredSet(v)]$.
We extend this notation to $\EnewT{\rootC}{u,v}$, where $(u,v)$ is an edge of the root cycle $\rootC$, to denote the subgraph induced by $\{r\} \cup \CoveredSet(\rootC)$ with the edge $(u,v)$ removed.

\medskip

\runtitle{Temporal safeness and tolerance.}
Let $\dist(G, u, v)$ denote the shortest distance between vertices $u$ and $v$ in graph $G$. 
Our algorithm heavily relies on two measurements: the number vertices in $G$ that are \emph{temporally safe} with respect to $d$, $\amountBelow(G, d)$, and the \emph{tolerance} of a cycle vertex $u$ in a root cycle $C$ with respect to a natural number $m$, $\tol(u, C, m)$.
Formally, $\amountBelow(G, d)$ is the number of vertices in $G$ that have the shortest distance to $r$ of at least $d$. 
That is, $\amountBelow(G, d) = |\{v\in V \mid \dist(G, r, v) \geq d\}|$.
Note that, by definition, the function $d \mapsto \amountBelow(G,d)$ is monotone non-increasing.
More specifically, $\amountBelow(G, d_2) \leq \amountBelow(G, d_1)$ for any $0 \leq d_1 \leq d_2$ for all cactus graphs $G$.
The notion of tolerance $\tol(u, C, m)$ is defined as the largest integer $d$ such that, in the induced subgraph $\newT{C}{u}$, there are at least $m$ vertices whose distance from the root is at least $d$.
Formally, $\tol(u, C, m) = \max\{d \in \mathbb{N} \mid \amountBelow(\newT{C}{u},d) \geq m\}$.
Intuitively, the tolerance $\tol(u, C, m)$ measures how far the fire can spread in the subgraph $\newT{C}{u}$ while still leaving at least $m$ vertices temporally safe.
Given an edge between two cycle vertices $u_i$ and $u_j$, we also extend the notion to $\etol((u_i,u_j), C, m) = \max\{d \in \mathbb{N} \mid \amountBelow(\EnewT{C}{u_i, u_j},d) \geq m\}$.

\paragraph*{Blueprint of the algorithms}
In each round, our algorithms assign the available firefighters through a sequence of iterations.
That is, consider all connected components adjacent to the fire source.
In each iteration, the algorithm greedily assigns firefighters to vertices in order to protect one connected component at a time.
If there are enough firefighters available to save the heaviest component, the algorithm saves the heaviest component by protecting one or two vertices, depending on the structure of the component (in particular, on whether it contains a root cycle, as is the case for $1$-almost trees).
The iteration then terminates, and the algorithm proceeds to the next iteration within the same round if there are still unassigned firefighters.

The situation becomes non-trivial when there are not enough firefighters to protect the heaviest component. 
In $1$-almost trees or cactus graphs, this can only occur when exactly one firefighter is available, and the heaviest component has exactly two vertices adjacent to the fire source (which contains a root cycle $\rootC$).

Inspired by Theorem~\ref{thm:TadpoleLB}, our algorithm for $1$-almost trees proceeds as follows. 
If there exists a component whose weight is at least $\sqrt{\weight(\rootC)}$, the algorithm protects this heaviest component.
Otherwise, the algorithm gives up immediate gain and instead \emph{breaks the cycle} and modifies the graph structure to ensure that whenever the optimal solution can save many vertices in the future rounds, the algorithm is also able to save a sufficient number of vertices.
Concretely, the algorithm protects the cycle vertex that delays the earliest time at which a significant number (that is, at least $\sqrt{\weight(\rootC)}$) of vertices would be burned for as long as possible, in the hope of securing a future gain.

The cycle-breaking procedure is more involved in cactus graphs.
On the one hand, since there might be multiple root cycles at the same time, the algorithm must resolve ties when deciding which cycle to protect.
On the other hand, the algorithm should not repeatedly break cycles and forgo immediate gain too frequently.
To balance immediate gain against potential future gain, we introduce a \emph{cool-down mechanism}.
Intuitively, depending on how long this break delays the burning of the remaining subgraph, the algorithm switches to the greedy choice whenever a firefighter becomes available during this period.
Once a greedy choice is made or the timer expires, the algorithm is again free to break a cycle.

\subsection{Analysis framework}

\paragraph*{Notion of analysis}

We first introduce critical components of our analysis. 
First, we narrow down the optimal solutions we are comparing against to a structured set, \emph{non-redundant} optimal solutions:

\begin{lemma}
    \label{lem:non-redundant}
    There exists a non-redundant optimal solution, where every cycle contains at most two protected vertices.
\end{lemma}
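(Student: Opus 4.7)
The plan is to start from an arbitrary optimal solution $\optset$ and iteratively prune protections on each cycle until no more than two remain per cycle, arguing that every pruning step preserves the number of saved vertices and that a non-redundant optimum can be extracted at the end.

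Consider a cycle $C$ in which $\optset$ contains three or more protected vertices. The first step is to identify the vertex $e$ at which fire enters $C$: if $C$ is a root cycle, then $e = r$; otherwise $e$ is the unique vertex of $C$ closest to $r$. This vertex is well-defined in a cactus graph because any two cycles share at most one vertex, so the path from $r$ to $C$ enters $C$ at a unique point. The fire can then propagate along at most two arcs of $C$ emanating from $e$.

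On each such arc, let $v^{*}$ denote the protected vertex closest to $e$ along the arc, and let $v'$ be any other protected vertex on the same arc; by construction, $v'$ lies strictly beyond $v^{*}$ from $e$. The key claim is that $v'$ is redundant, i.e.\ $\CoveredSet(\optset \setminus \{v'\}) = \CoveredSet(\optset)$. Since $\optset$ is feasible, $v^{*}$ is protected before fire reaches it, so the fire is blocked at $v^{*}$ and never reaches $v'$; hence $v'$ is saved. Moreover, because $C$ shares at most one vertex with any other cycle, the component attached at $v'$ away from $C$ is connected to the root only through $v'$, so it too is saved as soon as $v'$ is. Therefore removing $v'$ from $\optset$ preserves the entire covered set.

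Iterating this removal across all cycles yields a solution $\optset'$ with $\weight(\optset') = \weight(\optset)$, hence still optimal, and with at most two protected vertices on each cycle. Any remaining off-cycle redundancies (protections whose exclusive contribution to $\CoveredSet$ is empty) can be removed analogously to obtain a non-redundant optimal solution with the claimed property.

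The main obstacle will be rigorously justifying, using the cactus property, that the off-cycle component hanging from $v'$ is reachable from $r$ \emph{only} through $v'$; once this graph-theoretic fact is established, the fire-blocking argument at $v^{*}$ carries the rest, and the two-per-cycle bound follows immediately by applying the reduction to both arcs.
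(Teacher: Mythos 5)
Your proposal is correct and follows essentially the same route as the paper: identify the unique vertex through which fire enters the cycle (well-defined by the cactus property), observe that the fire spreads along the two arcs from that vertex, and discard every protected vertex other than the first one blocking each direction, since such a vertex and its pendant component (attached only through it, again by the cactus property) remain saved by the two retained protections. The only nitpick is that your sentence ``the fire is blocked at $v^{*}$ and never reaches $v'$'' should explicitly invoke the retained blocker on the \emph{other} arc as well, since fire could approach $v'$ from the opposite direction around the cycle; with that remark the argument coincides with the paper's removal of the middle protected vertex in cyclic order from the entry point.
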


\begin{proof}
    Consider an optimal solution $\opt$ where there is a cycle $C$ with three distinct protected vertices on $C$.
    Let $u_0$ be the vertex on $C$ that has the smallest shortest distance to $r$.
    That is, $u_0 = \arg_{u\in C}\min \{\dist(G, r, u)\}$.
    Label the vertices on $C$ in cyclic order as $C = (u_0, u_1, u_2, \cdots, u_p)$.
    Note that an algorithm can only protect an available vertex. 
    Assume that $u_i$, $u_j$, and $u_k$ are the three protected vertices on $C$ with $i \leq j \leq k$.
    We can remove the protection on $u_j$ without decreasing the optimal solution's performance while preserving the property that no cycle has more than two protected vertices. 
\end{proof}

Therefore, from now on, we compare our algorithms' performance against a non-redundant optimal solution.

\medskip

\runtitle{Time indexing.}
Let $\algset = (\valg_1, \valg_2, \cdots)$ be the set of vertices protected by our algorithm, where the vertex $\valg_t$ is the $t$-th vertex protected by the algorithm according to the algorithm description.
We say that $t$ is the \emph{time label} of the vertex $v$ if $v = \valg_t$.
In short, we say that $\valg_t$ is protected \emph{at time $t$}.
Note that two vertices can be protected by our algorithms ``simultaneously''\footnote{That is, they are protected in the same iteration in the later presented algorithms; Algorithm~\ref{Alg:almostTree} (line~\ref{line:almostTreeSimultaniously}) and Algorithm~\ref{Alg:cactus} (line~\ref{line:cactusSimultaniously}).} to save a whole cycle.
In this case, these two vertices naturally have two consecutive time labels.

We further extend this time label idea to the input graph $\Galg_t$, which is the graph from the algorithm's perspective right before the vertex $\valg_t$ is protected.
Note that if $\valg_{t}$ and $\valg_{t+1}$ are protected in the same round, $\Galg_{t+1}$ is a subgraph of $\Galg_t$ with $\CoveredSet(\valg_{t})$ removed (see the description of the algorithms for how these graphs change throughout the game).
Moreover, consider the case where both $\Galg_t$ and $\Galg_{t+1}$ have a root cycle corresponding to the same cycle in the original graph, and $\valg_t$ and $\valg_{t+1}$ are not protected in the same round. 
The root cycle in $\Galg_{t+1}$ is smaller than the one in $\Galg_t$, because of the graph reduction procedure.
We also denote $\CoveredSet_t(S)$ as the set of vertices covered by the set of vertices $S \subseteq V(\Galg_t)$ and $\weight_t(S) = |\CoveredSet_t(S)|$ as the number of vertices saved correspondingly in graph $\Galg_t$.

We apply the time label to the set $\optset$ of vertices protected by $\opt$ as well.
Formally, let $\valg_{t+1}, \valg_{t+2}, \cdots, \valg_{t+f_i}$ be the vertices protected by the algorithm in round $i$. 
We fix an arbitrary one-to-one mapping between $\valg_{t+1}, \cdots, \valg_{t+f_i}$ and the $f_i$ vertices protected by the optimal solution in the same round (note that the algorithm cannot finish the game later than the optimal solution).
We denote by $\vopt_t$ the vertex protected by the optimal solution, if it is the image of $\valg_t$ under the one-to-one mapping.
Thus, we also order the vertices protected by the optimal solution $\vopt_1, \vopt_2, \cdots$. 
We say that $\valg_t$ and the corresponding $\vopt_t$ are protected \emph{at the same time}.

\medskip

\runtitle{Cycle-respecting partition.}
Given the input graph $G$ in which any two cycles share at most one vertex, our analysis framework is based on partitioning the set $\optset$ of vertices protected by \opt in a way that respects the cycles. 
Formally, a partition $\partition = \{P_1, P_2, \cdots\}$ is \emph{cycle-respecting} if for every cycle $C$ in $G$, the vertices in $C \cap \optset$ belong to the same part $P_i \subseteq \partition$.

\begin{lemma}\label{lem:cycle-respecting}
    Let $G$ be a cactus graph with fire source $r$, and $\optset$ is the subset of vertices protected by \opt.
    Suppose $\partition = \{P_1, P_2, \cdots, P_m\}$ is a partition of $\optset$ that is cycle-respecting.
    Then, $\CoveredSet(\optset) = \bigcup_{i=1}^m \CoveredSet(P_i)$.
\end{lemma}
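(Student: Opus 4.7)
The $\supseteq$ direction is immediate because each $P_i \subseteq \optset$ implies $\CoveredSet(P_i) \subseteq \CoveredSet(\optset)$. The substance lies in the reverse inclusion: I plan to show that every $u \in \CoveredSet(\optset)$ already lies in $\CoveredSet(P_i)$ for some single $i$.

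The approach exploits the block structure of cactus graphs. Every biconnected component is either a bridge (a single edge) or a simple cycle, and any $r$-to-$u$ path traverses the same ordered sequence of blocks $B_1, \ldots, B_k$, entering and leaving each $B_j$ through the same pair of cut vertices. The first step is a concatenation claim: if for every block $B_j$ in the sequence there exists some traversal of $B_j$ between its entry and exit cut vertex that avoids $\optset$, then splicing these traversals at the shared cut vertices produces an $r$-to-$u$ path avoiding $\optset$, contradicting $u \in \CoveredSet(\optset)$. Contrapositively, some $B_j$ must be \emph{fully blocked} by $\optset$, meaning every such traversal hits $\optset$.

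The second step converts full blocking by $\optset$ into full blocking by a single part. If $B_j$ is a bridge, any single $\optset$-vertex on the edge blocks it, so the part containing that vertex already covers every traversal. If $B_j$ is a cycle, full blocking requires either that one of the entry/exit cut vertices lies in $\optset$, or that both arcs between them each contain an $\optset$-vertex; in both situations the blocking vertices lie in $\optset \cap B_j$, which by the cycle-respecting property of $\partition$ is contained in a single part $P_i$. Hence $P_i$ blocks every traversal of $B_j$, and since every $r$-to-$u$ path must traverse $B_j$, this yields $u \in \CoveredSet(P_i)$, as desired.

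The main technical obstacle I anticipate is the bookkeeping at the endpoints of the block-tree path, especially when $r$ is itself a cut vertex or lies in a root cycle, and when $u$ similarly lies in multiple blocks; these boundary cases require a careful choice of the first and last blocks' entry/exit vertices to make the concatenation argument watertight. A small auxiliary point worth recording is that a cut vertex lying in two cycles is forced into the same part by the cycle-respecting property applied to either cycle, so the part assignments align consistently across the entire block path and the concatenation step does not introduce any inconsistency.
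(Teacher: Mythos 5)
Your proposal is correct, but it reaches the conclusion by a genuinely different route than the paper. The paper fixes one arbitrary path $R_{v,r}$ from a covered vertex $v$ to $r$ and argues directly on it: if the path meets no cycle, the blocking $\optset$-vertex is a non-cycle vertex lying on every $v$--$r$ path and its part covers $v$; otherwise it takes the first cycle $C$ met along the path, uses the cactus property that every $v$--$r$ path enters and leaves $C$ at the same two vertices, asserts that both arcs of $C$ between these vertices contain $\optset$-vertices, and then applies the cycle-respecting property to place both blockers in a single part. You instead decompose the cactus into blocks (bridges and cycles), note that all $r$--$u$ paths traverse the same block sequence through the same cut vertices, and use a contrapositive concatenation argument to exhibit one block that is fully blocked by $\optset$; the conversion of that block's blockers into a single part then uses the cycle-respecting property exactly as the paper does. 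Both arguments rest on the same two facts (fixed entry/exit vertices of each cycle, and one part per cycle), but your formulation is more systematic and arguably more robust: because the fully blocked block may be a bridge, it cleanly handles the configuration where the chosen path does meet cycles yet the actual covering is due to a non-cycle cut vertex elsewhere on the path, a case the paper's dichotomy (empty versus non-empty set of cycles on the path) does not isolate, since the first cycle on the path need not have both arcs blocked. The cost is the block-tree bookkeeping at the ends of the block path that you already flag; fixing the entry of the first block to $r$, the exit of the last block to $u$, and disposing of the trivial case $u \in \optset$ directly makes the concatenation step sound, so the remaining work is routine.
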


\begin{proof}
    In order to show equality, it suffices to show mutual inclusion.
    We will first show $\bigcup_{i=1}^m \CoveredSet(P_i) \subseteq \CoveredSet(\optset)$.
    By definition of covered set, for any vertex $v \in \CoveredSet(P_i)$, every path from $v$ to $r$ intersects $P_i \in \optset$. 
    Thus, $\CoveredSet(P_i) \subseteq \CoveredSet(\optset)$.
    Therefore, $\bigcup_{i=1}^m \CoveredSet(P_i) \subseteq \CoveredSet(\optset)$.

    It remains to show the reverse inclusion $\CoveredSet(\optset) \subseteq \bigcup_{i=1}^m \CoveredSet(P_i)$.
    Consider $v \in \CoveredSet(\optset)$ and an arbitrary path $R_{v,r}$ from $v$ to $r$.
    By definition, $R_{v,r}$ intersects $\optset$. 
    Let $S^* \subseteq \optset$ be the set of vertices protected by \opt on $R_{v,r}$.
    If $S^*$ contains a vertex $\vopt$ that is not a cycle vertex, then $\vopt$ lies on every path from $v$ to $r$, implying $v \in \CoveredSet(\vopt)$.
    Therefore, $v \in \CoveredSet(P_i)$ for the set $P_i$ containing $\vopt$.

    Otherwise, suppose that every vertex in $S^*$ is a cycle vertex.
    Let $C$ be the first cycle intersecting $R_{v,r}$ (when traversing from $v$ to $r$) that contains a vertex of $S^*$.
    For any $\vopt \in S^* \cap C$, we must have $v \notin \CoveredSet(S^* \setminus {\vopt})$, since otherwise \opt would not be a non-redundant optimal solution.
    Let $v_i$ and $v_o$ denote the first and last vertices of $C \cap R_{v,r}$, respectively.
    Because $G$ is a cactus graph, every path from $v$ to $r$ intersects $C$ exactly at $v_i$ and $v_o$ (otherwise, an edge would belong to two cycles).
    If $v_i \in S^*$ (resp. $v_o \in S^*$), similar to the argument for non-cycle vertices, $v\in \CoveredSet(P_i)$ where $P_i$ contains $v_i$ (resp. $v_o$).   We therefore assume that neither $v_i$ nor $v_o$ belongs to $S^*$.
    
    Since $v\in \CoveredSet(\optset)$, both arcs of $C$ between $v_i$ and $v_o$ contain at least one vertex of $\optset$.
    By the cycle-respecting property, those two vertices lie in the same part $P_i \in \partition$, thus $v\in \CoveredSet(P_i)$.
    Therefore, for any $v \in \CoveredSet(\optset)$, $v\in \bigcup_{i = 1}^m \CoveredSet(P_i)$, and $\CoveredSet(\optset) \subseteq \bigcup_{i=1}^m \CoveredSet(P_i)$.
\end{proof}

\medskip

\runtitle{Exclusive covered set.}
For every subset $S^*\subseteq \optset$, we define its \emph{exclusive covered set} $\exclusiveCoveredSet(S^*)$ as $\CoveredSet(S^*)\setminus \CoveredSet(\algset)$.
That is, $\exclusiveCoveredSet(S^*)$ is the set of vertices covered by $S^*$ but not covered by the set of vertices $\algset$ protected by our algorithm.
Equivalently, $\exclusiveCoveredSet(S^*)$ can be considered as the covered set of $S^*$ on the graph $G\setminus\CoveredSet(\algset)$.
Accordingly, we define $\exclusiveWeight(S^*)$ as $|\exclusiveCoveredSet(S^*)|$.

\begin{lemma}[Exclusive weight as a lower bound]\label{lem:exclusiveWeight}
    For any $S^* \subseteq \optset$ and any $t \in \mathbb{N}$, $\exclusiveWeight(S^*) \leq \weight_t(S^*)$.
\end{lemma}

\begin{proof}
Since $\CoveredSet_t(S^*)$ includes all vertices in $\CoveredSet(S^*)$ that have not saved by $\algAlmost$ at time $t$, while $\exclusiveCoveredSet(S^*)$ includes vertices that are covered by $S^*$ but not ever by $\algAlmost$. 
Thus, $\exclusiveCoveredSet(S^*) \subseteq \CoveredSet_t(S^*)$, and $\exclusiveWeight(S^*) \leq \weight_t(S^*)$.    
\end{proof}

\medskip

\paragraph*{Framework of analysis: charging}

In the analysis, we partition the set $\optset$ of vertices protected by \opt while respecting the cycles. 
Then, we charge the covered set of each of the parts to the covered set of a subset of the protected set $\algset$ by the algorithm.
Formally, we define a cycle-respecting partition $\partition = (P_1, P_2, \cdots)$, which differs from the optimal solution on different graph classes. 
By Lemma~\ref{lem:cycle-respecting}, the optimal solution's profit $\opt(\instance) = \weight(\optset) = \sum_i\weight(P_i)$ for all $P_i \in \partition$.
Then, we split $\CoveredSet(P_i)$ into \emph{exclusive covered set} of $P_i$, $\exclusiveCoveredSet(P_i) = \CoveredSet(P_i)\setminus\CoveredSet(\algset)$ and $\CoveredSet(P_i)\cap\CoveredSet(\algset)$.
We further describe the profit of the optimal solution as $\opt(\instance) = \sum_i\weight(P_i) = \sum_i(\weight(P_i\cap \algset) + \exclusiveWeight(P_i)) \leq \weight(\algset) + \sum_i \exclusiveCoveredSet(P_i)$.

Then, we focus on $\exclusiveWeight(P_i)$ for a $P_i \in \partition$. 
We define a charging function that maps each $P_i$ to a subset of vertices protected by the algorithm based on the time labeling.
Formally, we define the charging function $\charging: \partition \to 2^{\algset}$.
Then, we show that $\exclusiveWeight(P_i) \leq \alpha_i \sqrt{n} \cdot \weight(\charging(P_i))$ for some constant $\alpha_i$ that may differ across parts $P_i$.
Let $\beta = \max_{\valg \in \algset} | \{P_i \in \partition \mid \{\valg \in \charging(P_i)\}|$ be the maximum number of times a vertex $\valg \in \algset$ is charged by parts in $\partition$.
Then, $\opt(\instance) = \sum_i \exclusiveWeight(P_i) \leq \sqrt{n} \cdot \sum_i \alpha_i \cdot \weight(\charging(P_i)) \leq \sqrt{n} \cdot \alpha_{\max} \cdot \beta \cdot \weight(\algset) = \sqrt{n} \cdot \alpha_{\max} \cdot \beta \cdot \alg(\instance)$, where $\alpha_{\max} = \max_i\{a_i\}$.
Namely, if we show that $\alpha_i$ is a constant for every $P_i \in \partition$ without charging any vertex $\valg \in \algset$ more than a constant number of times, then the algorithm is $O(\sqrt{n})$-competitive.

\section{Warm-up: $\mathbf{1}$-almost-tree}
\label{Sec:almostTree}
Recall that a $1$-almost-tree is a connected graph with at most one cycle, which is a superset of tadpole graphs, and any deterministic online algorithm is at least $\Omega(\sqrt{n})$-competitive. 
We first start with a warm-up on providing an $O(\sqrt{n})$-competitive algorithm on $1$-almost trees.

\subsection{$\algAlmost$: The algorithm for $1$-almost-trees}
Recall that by the reduction procedure, we focus on a single round $i$, of the firefighting game with a graph $\newG$ with a fire source $\newR$ and $f_i$ available firefighters. 

In round $i$, the algorithm $\algAlmost$ iteratively assigns firefighters to one of the vertices in $N(\newR)\cup\rootC$ until there are no firefighters left. 
In each iteration, if $\newR$ is not involved in any cycle (that is, there is no root cycle), $\algAlmost$ protects the heaviest neighbor of $\newR$.
When $\newG$ has a root cycle, $\algAlmost$ makes decisions between protecting the heaviest neighbor of $\newR$, protecting the root cycle with two firefighters, or breaking the root cycle.
At the end of one iteration, the protected vertices $v$ and $\CoveredSet(v)$ are removed from $\newG$, and the number of available firefighters is reduced accordingly.

\medskip

\runtitle{The fire source $\newR$ is not in any cycle.}
The algorithm $\algAlmost$ makes its decisions greedily if there is no root cycle in the input graph $\newG$.
That is, it assigns the firefighters to the first $f_i$ ``heaviest'' neighbors of $\newR$ that have the highest weights.
Formally, the firefighters are assigned iteratively to unprotected $v\in N(\newR)$ with the highest $\weight(v)$. 

\runtitle{The fire source $\newR$ is in a root cycle $\rootC$, and there are at least $2$ firefighters.}
In the case where the input graph $\newG$ has a root cycle $\rootC$, we consider the unprotected vertices in $N(\newR) \cup \rootC = \{v_1, v_2, \cdots\}$ where $\weight(v_1) \geq \weight(v_2) \geq \cdots$. 
When there are at least two firefighters available, we protect the cycle using two firefighters if $\weight(\rootC) > \weight(v_1)+\weight(v_2)$.
Otherwise, we protect $v_1$ using one firefighter, and the procedure iteratively continues (with $v_1$ removed from the candidates of the vertices that need to be protected).

\runtitle{The fire source $\newR$ is in a root cycle $\rootC$, and there is only one firefighter.}
When there is a root cycle $\rootC$ in the input graph $\newG$ and there is only one firefighter, the algorithm protects $v_1$ if $\weight(v_1) \geq \sqrt{\weight(\rootC)}$.
Otherwise, when the immediate gain from protecting $v_1$ is not significant enough, the algorithm ``breaks'' the cycle by protecting the vertex $u \in N(\newR) \cap \rootC$  with the largest $\tol(u, \rootC, \sqrt{\weight(\rootC)})$ (tolerance with respect to $\sqrt{\weight(\rootC)}$).
Recall that the tolerance $\tol(u, \rootC, \sqrt{\weight(\rootC)})$ of a cycle vertex $u \in \rootC$ is the maximum distance that the fire can spread while keeping $\sqrt{\weight(\rootC)}$ vertices unburned if $u$ is protected, thereby breaking the cycle $\rootC$.
Intuitively, the ``cycle-break'' procedure protects the cycle vertex that maximizes how far the fire can spread while still leaving at least $\sqrt{\weight(\rootC)}$ vertices unburned (Lemma~\ref{lem:NeighborsAreTheBest}).
Algorithm~\ref{Alg:almostTree} provides the detailed description of the algorithm for $1$-almost-trees.

\medskip

\begin{algorithm}[t]
    \caption{Procedure of $\algAlmost$ for $1$-almost tree in round $i$} 
    \label{Alg:almostTree}
    \textbf{Input:} Graph $\newG$ with fire source $\newR$ and $f_i$ available firefighters\\
    \While{$f_i > 0$ and there are available vertices in $\newG$}{
        \eIf{$\newR$ is not a cycle vertex in $\newG$}{
            Protect the vertex $v$ in $N(\newR)$ with the highest weight, \\
            Set $f_i\gets f_i -1$, $\newG \gets \newG \setminus \CoveredSet( v )$.
        }{
            \tcp{$\newR \in \rootC$ for some root cycle $\rootC$}
            Consider unprotected vertices $v_1, v_2, \cdots$ in $N(\newR)\cup \rootC$ with $\weight(v_1) \geq \weight(v_2) \geq \cdots$

            \eIf{$f_i \geq 2$}{
                \eIf{$\weight(v_1) + \weight(v_2) \geq \weight(\rootC)$}{   
                    Protect $v_1$.\\ 
                    Set $f_i\gets f_i -1$, $\newG \gets \newG \setminus \CoveredSet( v_1 )$.
                }{
                    Protect both vertices $v_\ell, v_r$ in $N(\newR) \cap \rootC$.\label{line:almostTreeSimultaniously}\\
                    Set $f_i\gets f_i -2$, $\newG \gets \newG \setminus \CoveredSet( \{v_\ell, v_r\} )$.
                }
            }{
                \eIf{$\weight(v_1) \geq \sqrt{\weight(\rootC)}$}{
                        Protect $v_1$.\\
                        Set $f_i\gets f_i -1$, $\newG \gets \newG \setminus \CoveredSet(v_1 )$.
                }{
                    \texttt{Break-cycle:} Protect the vertex $u \in N(\newR) \cap \rootC$ with the largest tolerance $\tol(u, \rootC, \sqrt{\weight(\rootC)})$.\\
                    Set $f_i\gets f_i -1$, $\newG \gets \newG \setminus \CoveredSet( u )$.
                }
                
            }
        }
    }
\end{algorithm}

\begin{observation}[Properties of $\algAlmost$]
    \label{obs:algAlmost}
    When the input graph has at most one cycle, by the definition of $\algAlmost$:
    \begin{enumerate}[(a)]
        \item \label{obs:algAlmostMax} 
        If a vertex $v^\texttt{A}$ is chosen at time $t$ in $\Galg_t$ because it is the vertex with the largest weight, then $\weight_t(v^\texttt{A})^2 \geq \weight_t(\rootC)$.
        
        \item \label{obs:algAlmostCycle}
        If a vertex $v^\texttt{A}$ is chosen at time $t$ in $\Galg_t$ because it is chosen---together with another vertex---to save the cycle, then $\weight_t(\rootC) > \weight_t(x)+\weight_t(y)$ for all vertices $x$ and $y$ in $\Galg_t$. 
        Moreover, $\weight_t(\rootC) \geq \weight_t(\{x, y\})$ (which can be observed by a case distinction on whether each of $x$ and $y$ is on the cycle).

        \item \label{obs:algAlmostBreakCycle}
        If a vertex $v^\texttt{A}$ is chosen at time $t$ in $\Galg_t$ because it is selected as the cycle-breaking point, then for every vertex $v$ in $\Galg_t$, $\weight_t(v) \leq \weight_t(\tHeaviest) < \sqrt{\weight_t(\rootC)}$, where $\tHeaviest$ is the vertex with the heaviest weight in $N(\newR) \cup \rootC$ in $\Galg_t$.
    \end{enumerate}
\end{observation}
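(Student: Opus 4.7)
The plan is to verify each clause of Observation~\vref{obs:algAlmost} by inspecting which branch of Algorithm~\vref{Alg:almostTree} selects $v^\texttt{A}$ at time $t$, and reading off the claimed inequality from the guard of that branch together with the definition of $\tHeaviest$ as the heaviest vertex of $N(\newR) \cup \rootC$ in $\Galg_t$. For clause (a), $\algAlmost$ protects a vertex strictly \emph{because} it maximizes the weight only in the no-root-cycle branch, where $\weight_t(\rootC)$ is vacuous and the inequality is trivial, or in the $f_i = 1$ branch guarded by $\weight_t(v_1) \geq \sqrt{\weight_t(\rootC)}$, where squaring the guard yields $\weight_t(v^\texttt{A})^2 \geq \weight_t(\rootC)$ directly. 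Clause (c) is analogous: the \texttt{Break-cycle} branch is entered precisely when $\weight_t(v_1) < \sqrt{\weight_t(\rootC)}$, so $\weight_t(\tHeaviest) \leq \sqrt{\weight_t(\rootC)}$; then $\weight_t(v) \leq \weight_t(\tHeaviest)$ for every $v \in V(\Galg_t)$ follows from the maximality of $v_1^{(t)}$ over $N(\newR) \cup \rootC$, augmented by the standard observation that the covered set of any vertex outside $N(\newR) \cup \rootC$ is contained in the covered set of its unique ancestor lying in $N(\newR) \cup \rootC$.

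For clause (b), the cycle-saving branch is entered only when $\weight_t(v_1) + \weight_t(v_2) < \weight_t(\rootC)$, and combined with the maximality of $v_1^{(t)}, v_2^{(t)}$, any pair $x, y$ in $\Galg_t$ satisfies $\weight_t(x) + \weight_t(y) \leq \weight_t(v_1) + \weight_t(v_2) < \weight_t(\rootC)$, which establishes the sum form. For the union form $\weight_t(\rootC) \geq \weight_t(\{x, y\})$, I will do a case distinction on how many of $x, y$ lie on $\rootC$: if neither or exactly one does, then, since $\Galg_t$ is a $1$-almost tree and distinct off-cycle subtrees are disjoint, $\CoveredSet_t(x) \cap \CoveredSet_t(y) = \emptyset$, so $\weight_t(\{x, y\}) = \weight_t(x) + \weight_t(y)$ and the bound reduces to the sum form already proved; if both $x, y$ lie on $\rootC$, the bound follows from the inclusion $\CoveredSet_t(\{x, y\}) \subseteq \CoveredSet_t(\rootC)$.

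The main obstacle is establishing this last inclusion, since when $x$ and $y$ are cycle vertices other than the two cycle neighbors of $\newR$, the pair does not by itself ``break'' the cycle, so the structure of $\CoveredSet_t(\{x, y\})$ is subtle. The key structural observation I would invoke is that any path from a vertex $u \in \CoveredSet_t(\{x, y\})$ to $\newR$ must pass through $\{x, y\} \subseteq \rootC$, and hence through at least one vertex of the root cycle, so $u \in \CoveredSet_t(\rootC)$ by definition of the covered set of a set of vertices. Once this inclusion is in place, every subcase of (b) is settled, and together with the direct readings for (a) and (c), the entire observation follows without any additional calculation.
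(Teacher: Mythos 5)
Your treatment of clauses (a) and (c) is essentially the intended one (the paper states the observation without proof, as a direct reading of the guards of Algorithm~\vref{Alg:almostTree} together with the fact that the covered set of any vertex is contained in the covered set of its unique ``ancestor'' in $N(\newR)\cup\rootC$). The genuine gap is in clause (b). The step ``by maximality of $v_1, v_2$, any pair $x,y$ in $\Galg_t$ satisfies $\weight_t(x)+\weight_t(y)\le \weight_t(v_1)+\weight_t(v_2)$'' does not follow: $v_1,v_2$ are only the two heaviest vertices of $N(\newR)\cup\rootC$, and two vertices lying in the \emph{same} branch are not dominated by two \emph{distinct} candidates. Concretely, take a bare root cycle with $8$ vertices besides $\newR$ and a pendant path of $6$ vertices attached to $\newR$: then $\weight_t(v_1)+\weight_t(v_2)=6+1=7<8=\weight_t(\rootC)$, so the cycle-saving branch is taken, yet the first two path vertices $x,y$ have $\weight_t(x)+\weight_t(y)=6+5=11>8$. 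So the sum form for arbitrary pairs is not merely unproved by your argument; it is false as stated for all of $\Galg_t$ (it is only sound for pairs of candidate vertices in $N(\newR)\cup\rootC$, or pairs with disjoint covered sets). Your case analysis for the set form inherits the same flaw: the claim that $\CoveredSet_t(x)\cap\CoveredSet_t(y)=\emptyset$ whenever at most one of $x,y$ lies on the cycle fails exactly in this same-branch situation (e.g.\ $y$ below $x$ on the pendant path), and in that case you fall back on the sum form, which is the part that breaks.

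The inequality that is actually needed downstream, $\weight_t(\rootC)\ge\weight_t(\{x,y\})$, is true, but it must be proved by the containment argument you already invoke for (c), applied to pairs: if $x$ and $y$ have distinct candidate ancestors $a_x\ne a_y$ in $N(\newR)\cup\rootC$, then $\weight_t(\{x,y\})\le\weight_t(a_x)+\weight_t(a_y)\le\weight_t(v_1)+\weight_t(v_2)<\weight_t(\rootC)$; if they share a candidate ancestor $a$ (including the case where one lies below the other, or below a cycle vertex), then any path to $\newR$ avoiding $a$ also avoids $x$ and $y$, so $\CoveredSet_t(\{x,y\})\subseteq\CoveredSet_t(a)$ and $\weight_t(\{x,y\})\le\weight_t(v_1)<\weight_t(\rootC)$; and if both lie on $\rootC$, your inclusion $\CoveredSet_t(\{x,y\})\subseteq\CoveredSet_t(\rootC)$ applies. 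A secondary point: you confine clause (a) to the no-cycle and $f_i=1$ branches, but the paper later invokes (a) whenever $\algAlmost$ ``protects the heaviest vertex,'' which includes the $f_i\ge 2$ branch whose guard only yields $2\weight_t(v_1)\ge\weight_t(\rootC)$; you should either state this restriction explicitly or handle that branch separately rather than leave the mismatch implicit.
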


\paragraph*{Structural properties of \texttt{Break-cycle}}
We first show that for $1$-almost trees, it is ``beneficial'' to break a cycle by protecting a cycle vertex adjacent to the fire source.
We state the following lemma in a more general form than is presently required, as it will be useful later when we consider cactus graphs.

\begin{figure}[bt]
    \centering
    \includegraphics[width=0.75\linewidth]{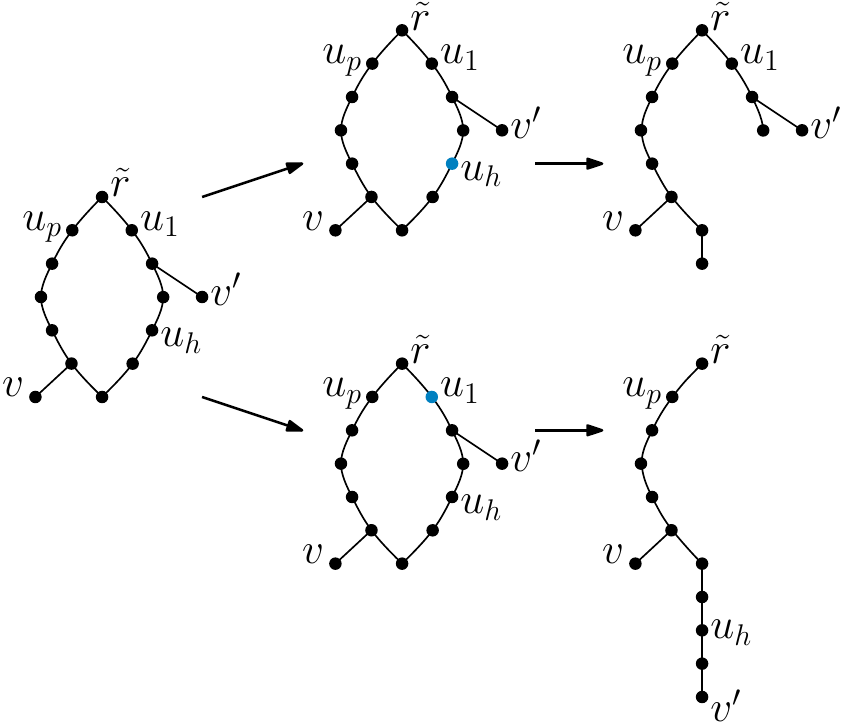}
    \caption{The distances from $v, v^\prime$ to $\newR$ in $\newT{\rootC}{u_h}$ compared to $\newT{\rootC}{u_1}$.}
    \label{fig:lemma9}
\end{figure}

\begin{lemma}[The advantage of breaking cycle on $N(\newR) \cap C$]
    \label{lem:NeighborsAreTheBest}    
    In a cactus graph $\newG$ with a fire source $\newR$ and a root cycle $\rootC = (r, u_1, \cdots, u_p)$, for every $u_h$, one of the following holds for every $d\in \mathbb{N}$:
    \begin{align*}
    &\amountBelow(\newT{\rootC}{u_h}, d) \leq \amountBelow(\newT{\rootC}{u_1},d) + \weight(u_1) \text{, or}\\
    &\amountBelow(\newT{\rootC}{u_h}, d) \leq \amountBelow(\newT{\rootC}{u_p},d) + \weight(u_p).
    \end{align*}
\end{lemma}

\begin{proof}
Consider an arbitrary $u_h$ and $d \in \mathbb{N}$. 
Without loss of generality, assume that the shortest path from $u_h$ to $\newR$ passes through $u_1$. 

Consider an arbitrary vertex $v \notin \CoveredSet(u_1)$ that is counted in $\amountBelow(\newT{\rootC}{u_h}, d)$, we first show that it must be also counted in $\amountBelow(\newT{\rootC}{u_1}, d)$.
By the fact that $v$ is counted in $\amountBelow(\newT{\rootC}{u_h}, d)$, the shortest distance from $v$ to $\newR$ in $\newT{\rootC}{u_h}$, $\dist(\newT{\rootC}{u_h}, v, \newR) \geq d$.
Let $P_{v\newR}$ and $P'_{v\newR}$ denote the shortest paths from $v$ to $\newR$ in $\newT{\rootC}{u_h}$ and $\newT{\rootC}{u_1}$, respectively.
Note that $\newT{\rootC}{u_1}$ is a path from $\newR$ to $u_2$ union all $\CoveredSet(u_k)$ for $k = 2, 3, \cdots, p$. 

Consider the shortest paths from $v$ to $\newR$ before and after removing $u_h$. 
Either removing $u_h$ change the shortest path or not.
We first consider the case where $u_1$ is on the path $P_{v\newR}$.
This implies, as we have assumed that the shortest path from $u_h$ to $\newR$ passes through $u_1$, that $P_{v\newR}$ is the shortest path from $v$ to $\newR$ in $\newG$.
Then, by definition of a shortest path and the fact that the deletion of a vertex cannot shorten paths, $P^\prime_{v\newR}$ is at least as long as $P_{v\newR}$.
And thus, $\dist(\newT{\rootC}{u_1}, v, \newR) \geq \dist(\newT{\rootC}{u_h}, v, \newR) \geq d$. 
Therefore, $v$ is also counted in $\amountBelow(\newT{\rootC}{u_1}, d)$.

Second, we consider the remaining case where $u_1$ is not on the path $P_{v\newR}$.
Then, the the path $P_{v\newR}$ would not be affected by the deletion of vertex $u_1$.
Hence, the paths $P_{v\newR}$ and $P^\prime_{v\newR}$ are identical.
And thus, $\dist(\newT{\rootC}{u_1}, v, \newR) = \dist(\newT{\rootC}{u_h}, v, \newR) \geq d$. 
Therefore, $v$ is also counted in $\amountBelow(\newT{\rootC}{u_1}, d)$.

Finally, the vertices in $\CoveredSet(u_1)$ might be counted in $\amountBelow(\newT{\rootC}{u_h})$ but never counted in $\amountBelow(\newT{\rootC}{u_1}$.
Since, by definition, $\newT{\rootC}{u_1}$ is the subgraph induced by the vertices in $\{\newR\} \cup \CoveredSet(\rootC) \setminus \CoveredSet(u_1)$.
Thus, it is proven that $\amountBelow(\newT{\rootC}{u_h}, d) \leq \amountBelow(\newT{\rootC}{u_1}, d) + \weight(u_1)$.

\medskip

The same argument applies when the shortest path from $u_h$ to $\newR$ passes through $u_p$, in which case $\amountBelow(\newT{\rootC}{u_h}, d) \leq \amountBelow(\newT{\rootC}{u_p},d) + \weight(u_p)$.
\end{proof}

Next, we bound the performance of breaking the cycle on the vertex $\algAlmost$ picks in terms of how many vertices can have a safe distance from the fire source due to the breaking.

\begin{lemma}[Quality of \texttt{Break-cycle}]
    \label{lem:CycleBreakIsGood}
    Let $\newG$ be the input graph with a root cycle $\rootC = (r, u_1, u_2, \cdots, u_p)$. 
    Let $\hat{u}$ be the vertex protected by the \texttt{Break-cycle} procedure. 
    For every vertex $u_h$ and every $d\in \mathbb{N}$, $\frac{\amountBelow(\newT{\rootC}{u_h}, d)}{\amountBelow(\newT{\rootC}{\hat{u}}, d)+1} \leq 2\sqrt{\weight(\rootC)}$.
\end{lemma}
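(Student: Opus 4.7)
My plan is to combine Lemma~\vref{lem:NeighborsAreTheBest}, which reduces the analysis of an arbitrary cycle vertex $u_h$ to the two cycle neighbors of $r$ (call them $u_1$ and $u_p$), with the two properties that characterize \texttt{Break-cycle}: first, by Observation~\vref{obs:algAlmost}\vref{obs:algAlmostBreakCycle}, the \texttt{Break-cycle} branch is taken only when every vertex has weight at most $\sqrt{\weight(\rootC)}$, so in particular $\weight(u_1),\weight(u_p) \leq \sqrt{\weight(\rootC)}$; and second, \texttt{Break-cycle} chooses $\hat{u}\in\{u_1,u_p\}$ with the larger tolerance with respect to $\sqrt{\weight(\rootC)}$. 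Combining these, Lemma~\vref{lem:NeighborsAreTheBest} yields
\[
\amountBelow(\newT{\rootC}{u_h},d) \;\leq\; \max\{\amountBelow(\newT{\rootC}{u_1},d),\,\amountBelow(\newT{\rootC}{u_p},d)\} + \sqrt{\weight(\rootC)}.
\]

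The main step is then a case distinction on $d$ relative to $\tol(\hat{u},\rootC,\sqrt{\weight(\rootC)})$. In the first case, $d \leq \tol(\hat{u},\rootC,\sqrt{\weight(\rootC)})$; by the definition of tolerance we have $\amountBelow(\newT{\rootC}{\hat{u}},d) \geq \sqrt{\weight(\rootC)}$. Since $\newT{\rootC}{u_h}$ has at most $1+\weight(\rootC)$ vertices, the numerator is at most $1+\weight(\rootC)$, and the desired bound follows from the elementary inequality $\frac{1+w}{1+\sqrt{w}} \leq 1+\sqrt{w} \leq 2\sqrt{w}$ for $w \geq 1$.

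In the second case, $d > \tol(\hat{u},\rootC,\sqrt{\weight(\rootC)})$. Let $u_\ell \in \{u_1,u_p\}$ be the neighbor supplied by Lemma~\vref{lem:NeighborsAreTheBest}. Since $\hat{u}$ was chosen to maximize tolerance among $\{u_1,u_p\}$, we have $\tol(u_\ell,\rootC,\sqrt{\weight(\rootC)}) \leq \tol(\hat{u},\rootC,\sqrt{\weight(\rootC)}) < d$, which by the definition of tolerance forces $\amountBelow(\newT{\rootC}{u_\ell},d) < \sqrt{\weight(\rootC)}$. Together with $\weight(u_\ell) \leq \sqrt{\weight(\rootC)}$, this gives $\amountBelow(\newT{\rootC}{u_h},d) < 2\sqrt{\weight(\rootC)}$, and since the denominator is at least $1$, the ratio is again bounded by $2\sqrt{\weight(\rootC)}$.

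\textbf{Expected obstacle.} The step I expect to require the most care is verifying that $\weight(u_1),\weight(u_p) \leq \sqrt{\weight(\rootC)}$ really is available in the \texttt{Break-cycle} branch: the algorithm only checks that the heaviest vertex $v_1$ in $N(\newR)\cup\rootC$ has weight below $\sqrt{\weight(\rootC)}$, so one must observe that $v_1$ being the maximum implies the bound for $u_1$ and $u_p$ as well. After this, the arithmetic in the small-$d$ case is straightforward once one notes that $|V(\newT{\rootC}{u_h})| \leq 1+\weight(\rootC)$, and the large-$d$ case is an immediate consequence of the tolerance definition and the choice of $\hat{u}$.
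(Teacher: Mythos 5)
Your proposal is correct and follows essentially the same route as the paper's proof: a case distinction on $d$ versus $\tol(\hat{u},\rootC,\sqrt{\weight(\rootC)})$, using the tolerance definition and monotonicity of $\amountBelow$ for small $d$, and Lemma~\vref{lem:NeighborsAreTheBest} together with Observation~\vref{obs:algAlmost}(\vref{obs:algAlmostBreakCycle}) and the maximality of $\hat{u}$'s tolerance for large $d$. The only cosmetic difference is that in the small-$d$ case you bound the ratio by $\frac{1+\weight(\rootC)}{1+\sqrt{\weight(\rootC)}}$ with an elementary inequality, whereas the paper bounds the numerator by $\weight(\rootC)$ directly; the argument is the same.
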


\begin{proof}
    Let $\rootC= (r, u_1, u_2, \cdots, u_p)$ be the vertices on $\rootC$ with $u_1=\hat{u}$.
    Denote the tolerance of $\hat{u}$ by $\dmax = \tol(\hat{u}, \rootC, \sqrt{\weight(\rootC)})$.
    By the definition of tolerance, $\amountBelow(\newT{\rootC}{\hat{u}}, \dmax) \geq \sqrt{\weight(\rootC)}$.

    \medskip
    
    \runtitle{Case $1$: $d \leq \dmax$.}
    By the monotonicity of the $\amountBelow$ function, $\amountBelow(\newT{\rootC}{\hat{u}}, d) \geq \amountBelow(\newT{\rootC}{\hat{u}}, \dmax) \geq \sqrt{\weight(\rootC)}$.
    On the other hand, for any $u_h$, $\amountBelow(\newT{\rootC}{u_h}, d)$ can include at most every vertex in $\CoveredSet(\rootC)$. 
    Thus, $\amountBelow(\newT{\rootC}{u_h}, d) \leq \weight(\rootC)$.
    Therefore, $\frac{\amountBelow(\newT{\rootC}{u_h}, d)}{\amountBelow(\newT{\rootC}{u_1}, d)+1} \leq \frac{\weight(\rootC)}{\sqrt{\weight(\rootC)}+1} \leq 2\sqrt{\weight(\rootC)}$.
    
    \medskip
    
    \runtitle{Case $2$: $d > \dmax$.}
    Since $\dmax$ is the tolerance $\tol(\hat{u}, \rootC, \sqrt{\weight(\rootC)})$, $\dmax$ is the maximum distance the fire can spread $\newR$ in $\newT{\rootC}{\hat{u}}$ such that there are still at least $\sqrt{\weight(\rootC)}$ vertices unburned.
    Thus, for any $d > \dmax$, $\amountBelow(\newT{\rootC}{\hat{u}}, d) < \sqrt{\weight(\rootC)}$.
    Since the algorithm selected $\hat{u}$ out of $u_1$ and $u_p$ based on which neighbor had the highest tolerance, $\tol(\newT{\rootC}{u_1} \leq \tol(\newT{\rootC}{\hat{u}}$ and $\tol(\newT{\rootC}{u_p} \leq \tol(\newT{\rootC}{\hat{u}}$.
    Moreover, it follows that $\amountBelow(\newT{\rootC}{u_1}, d) < \sqrt{\weight(\rootC)}$ and $\amountBelow(\newT{\rootC}{u_p}, d) < \sqrt{\weight(\rootC)}$.
    Additionally, by $\algAlmost$ and Observation~\ref{obs:algAlmost}(\ref{obs:algAlmostBreakCycle}), the procedure \texttt{Break-cycle} only happens when $\weight(v) \leq \sqrt{\weight(\rootC)}$ for all vertices $v \in V$.
    Therefore, by Lemma~\ref{lem:NeighborsAreTheBest}, $\amountBelow(\newT{\rootC}{u_h}, d) \leq \max\{\amountBelow(\newT{\rootC}{u_1}, d) + \weight(u_1), \amountBelow(\newT{\rootC}{u_p}, d) + \weight(u_p)\} < \sqrt{\weight(\rootC)} + \sqrt{\weight(\rootC)} = 2\sqrt{\weight(\rootC)}$.   
\end{proof}

\subsection{Competitiveness analysis of $\algAlmost$}

\paragraph*{Charging framework}
Recall that from the time labeling scheme, each protected vertex $v$ in $\algset$ has a distinct and contiguous integral time label $t_v$. 
Moreover, for each vertex $v\in \algset$, there is a protected vertex in $\optset$ with the same time label. 
We denote this corresponding vertex in $\optset$ by $v^*$.

\medskip

\runtitle{Partition $\mathcal{P}^*$ of $\optset$.}
We partition $\optset$ into the three parts:
\begin{enumerate}[$P^a$:]
    \item \label{AlmostPart1} \runtitle{Cycle vertices.}
        All vertices in $\optset$ that are cycle vertices in the original graph $G$,
    \item \label{AlmostPart2} \runtitle{Non-cycle vertices and available to $\algAlmost$.}
        All vertices $\vopt_t \in \optset$ that are available (unburned) in $\Galg_t$, and
    \item \label{AlmostPart3} \runtitle{Non-cycle vertices and not available to $\algAlmost$.}
        All vertices $\vopt_t \in \optset$ that are burned or saved in $\Galg_t$.
\end{enumerate}

\begin{observation}
    The partition $\mathcal{P}^*$ is a proper, cycle-respecting partition of $\optset$.
\end{observation}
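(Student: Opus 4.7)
The plan is to unpack the two claims in the observation, namely that $\mathcal{P}^*=\{P^{(a)},P^{(b)},P^{(c)}\}$ is a proper partition of $\optset$ and that it is cycle-respecting, and verify each directly from the definitions given just above the statement.

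First I would argue properness. Disjointness of $P^{(a)}$ from both $P^{(b)}$ and $P^{(c)}$ is immediate from the definitions, since membership in $P^{(a)}$ requires being a cycle vertex of the original graph $G$, while $P^{(b)}$ and $P^{(c)}$ only contain non-cycle vertices. Disjointness of $P^{(b)}$ and $P^{(c)}$ follows because the two parts are defined by complementary conditions on the same vertex: in $\Galg_t$, the vertex $\vopt_t$ is either available (hence in $P^{(b)}$) or it is burned or saved (hence in $P^{(c)}$), and these options exclude one another. For covering, every vertex $\vopt\in\optset$ is either a cycle vertex of $G$, in which case it lies in $P^{(a)}$, or a non-cycle vertex, in which case, using its time label $t$ in the scheme introduced earlier, it is classified into exactly one of $P^{(b)}$ or $P^{(c)}$ according to its status in $\Galg_t$. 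Hence $P^{(a)}\cup P^{(b)}\cup P^{(c)}=\optset$.

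Next I would verify the cycle-respecting condition. Since $G$ is a $1$-almost tree, it contains at most one cycle $C$. Every vertex of $C\cap\optset$ is by definition a cycle vertex of $G$, and therefore belongs to $P^{(a)}$ by the construction of that part. Consequently, $C\cap\optset\subseteq P^{(a)}$, which trivially places all such vertices in a single part of $\mathcal{P}^*$. If $G$ has no cycle, the condition is vacuous.

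The only mild subtlety, which I would flag rather than prove, is the standard convention about empty parts: if one of $P^{(a)},P^{(b)},P^{(c)}$ happens to be empty (for instance $P^{(a)}=\emptyset$ when $\opt$ protects no cycle vertex), we simply discard it before invoking Lemma~\vref{lem:cycle-respecting}; this does not affect either the disjointness argument or the cycle-respecting conclusion. Beyond this bookkeeping, the statement is essentially a direct consequence of unfolding the definitions, so I do not anticipate a genuine technical obstacle here; the role of the observation is to license the use of the charging framework of Section~\vref{Sec:Framework} for the subsequent analysis of $\algAlmost$.
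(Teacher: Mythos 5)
Your proof is correct and follows essentially the same route as the paper's: properness from the trichotomy (cycle vertex, versus non-cycle and available/unavailable in $\Galg_t$), and the cycle-respecting property because all cycle vertices of $\optset$ land in the single part $P^{(a)}$. The remark about discarding empty parts is fine but not needed; the paper's argument (and yours) goes through regardless.
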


\begin{proof}
    Every vertex in $\optset$ is either a cycle vertex or a non-cycle vertex. 
    Additionally, every non-cycle vertex $\vopt_t \in \optset$ either is available in $\Galg_t$ or is unavailable in $\Galg_t$. 
    Thus, every vertex in $\optset$ belongs to exactly one of the three parts, thereby forming a proper partition of $\optset$. 
    Moreover, since all cycle vertices in $\optset$ are included in $P^{\ref{AlmostPart1}}$, the partition is cycle-respecting.
\end{proof}

\medskip

\runtitle{Charging function.}
For the analysis, we define the charging function $\charging: \mathcal{P}^* \to 2^{\algset}$ that charges each part of $\mathcal{P}^*$ to a subset of vertices protected by $\algAlmost$. 
In the case of $\algAlmost$, we simply charge every part $P^{\boldsymbol{\cdot}}\in \mathcal{P}^*$ to the entire set $\algset$ by defining $\charging(P^{\boldsymbol{\cdot}}) = \algset$.
That is, we charge each part to $\algset$ individually.
Note that since we charge each part to $\algset$ directly, the analysis framework is a bit easier than the general one.

In the following, we focus on each part $P^{\boldsymbol{\cdot}} \in\mathcal{P}^*$ and show that the size of the exclusive covered set of $P^{\boldsymbol{\cdot}}$ is at most $O(\sqrt{n})$ times the size of the covered set of $\algset$, where $n$ is the total number of vertices in the original graph $G$.

\begin{lemma}[Cycle vertices in $\optset$]\label{lem:algAlmostPa}
    Given the part $P^a$ in $\mathcal{P^*}$, $\exclusiveWeight(P^a) \leq \sqrt{n} \cdot \weight(\algset)$, where $n$ is the number of vertices in the input graph $G$.
\end{lemma}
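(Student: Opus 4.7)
By Lemma~\vref{lem:non-redundant}, $|P^a|\le 2$, so $P^a$ contains at most two cycle vertices of the unique cycle $C^*$ of the $1$-almost tree $G$. If $P^a = \emptyset$ the claim is trivial, so I assume $P^a \neq \emptyset$. The plan is to locate the first time $t^*$ at which $\Galg_{t^*}$ has a root cycle $\rootC_{t^*}$. If no such $t^*$ exists, then $\algAlmost$ has already saved a vertex whose covered set in $G$ contains the entire cycle region before the fire could reach it, so $\exclusiveCoveredSet(P^a) = \emptyset$. Otherwise, I would observe that before $t^*$ no cycle-region vertex can have burned (the fire first enters the cycle region precisely when the cycle becomes a root cycle), while every cycle-region vertex saved by $\algAlmost$ before $t^*$ lies in $\CoveredSet(\algset)$ and is therefore not exclusive. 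Combined, this pins $\exclusiveCoveredSet(P^a)$ inside $\CoveredSet_{t^*}(\rootC_{t^*})$, and so $\exclusiveWeight(P^a) \leq W := \weight_{t^*}(\rootC_{t^*})$.

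I would then case-split on $\algAlmost$'s branch at $t^*$ using Observation~\vref{obs:algAlmost}. Case~(\vref{obs:algAlmostCycle}), in which $\algAlmost$ saves the cycle with two firefighters, immediately gives $\exclusiveCoveredSet(P^a) = \emptyset$, because $\CoveredSet(\algset)$ contains all of $\CoveredSet_{t^*}(\rootC_{t^*})$. Case~(\vref{obs:algAlmostMax}), in which $\algAlmost$ protects the heaviest vertex $\valg_{t^*}$ and $\weight_{t^*}(\valg_{t^*})^2 \geq W$, gives $\weight(\algset) \geq \sqrt{W}$; splitting on whether $\weight(\algset)\le \sqrt{n}$ or not yields $W \leq \weight(\algset)^2 \leq \sqrt{n}\cdot \weight(\algset)$ in the former sub-case and $W \leq n \leq \sqrt{n}\cdot \weight(\algset)$ in the latter.

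I expect the main obstacle to be Case~(\vref{obs:algAlmostBreakCycle}), where $\algAlmost$ breaks the cycle with a single vertex of weight $<\sqrt{W}$, so its immediate gain is too small to dominate $W$ on its own, while $P^a$ (especially when $|P^a|=2$) may cover almost the entire cycle region. My plan is to invoke Lemma~\vref{lem:CycleBreakIsGood} (Quality of Break-cycle): for every cycle vertex $u_h$ and every $d$, $\amountBelow(\newT{\rootC_{t^*}}{u_h},d) \leq 2\sqrt{W}\,(\amountBelow(\newT{\rootC_{t^*}}{\uCutCycle},d)+1)$. Intuitively, this says OPT cannot delay the fire along the cycle region by more than a $2\sqrt{W}\le 2\sqrt{n}$ factor beyond what $\algAlmost$'s break achieves, so the vertices OPT saves exclusively through $P^a$ at each distance can be amortized against the vertices $\algAlmost$ saves during the subsequent tree-greedy phase on $\newT{\rootC_{t^*}}{\uCutCycle}$. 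The trickiest step will be converting this distance-wise inequality into a direct bound on $\exclusiveWeight(P^a)$ when $|P^a|=2$, since OPT's two protections jointly cover the cycle arc between them, a cover that no single cycle-break can reproduce; this is where a careful round-by-round accounting of $\algAlmost$'s gains in the rounds following $t^*$ will be required to close the argument with the claimed $O(\sqrt{n})$ factor.
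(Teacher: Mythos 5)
Your set-up and the two easy cases (the greedy case via Observation~\vref{obs:algAlmost}(\vref{obs:algAlmostMax}) and the save-the-whole-cycle case) match the paper's argument, but the proof has a genuine gap exactly where you flag it: the \texttt{Break-cycle} case with $|P^a|=2$ is not closed, and the tool you propose for it is not the one that works. Lemma~\vref{lem:CycleBreakIsGood} compares $\amountBelow$ values, i.e.\ how many vertices stay far from the fire after a break; the paper uses it only for $P^c$ (Lemma~\vref{lem:algAlmostPc}), where \opt's advantage comes from delaying the fire. For $P^a$ the advantage of \opt is different: its two cycle protections jointly \emph{cover} the arc between them together with all hanging subtrees, and no distance-wise inequality about a single break converts into a bound on that covered weight without a further idea. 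The paper's actual argument for this case is a pairing argument you do not supply: let $t$ and $s=t+\delta$ be the time labels of \opt's two cycle protections. At time $t$ the break-cycle branch guarantees $\weight_t(v)\leq\sqrt{\weight_t(\rootC)}$ for every vertex (Observation~\vref{obs:algAlmost}(\vref{obs:algAlmostBreakCycle})), and since the fire advances $\delta$ steps along the cycle between $t$ and $s$, the arc \opt saves has at most $|\rootC|-1-\delta$ cycle vertices, giving $\exclusiveWeight(P^a)\leq(|\rootC|-1-\delta)\sqrt{\weight_t(\rootC)}$. Crucially, the second firefighter \opt uses at time $s$ is also available to $\algAlmost$, which by then has broken $\rootC$ at a neighbor of $r$ (Lemma~\vref{lem:NeighborsAreTheBest}) and at time $s$ protects greedily, so $\weight(\{\valg_t,\valg_s\})\geq|\rootC|-1-\delta$; dividing gives the $\sqrt{n}$ factor. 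Your ``careful round-by-round accounting'' is precisely this missing step, so as written the hardest case remains open.

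A secondary issue is your anchoring at $t^*$, the first protection time at which $\Galg_{t^*}$ has a root cycle, together with the claim $\exclusiveCoveredSet(P^a)\subseteq\CoveredSet_{t^*}(\rootC_{t^*})$. \opt may protect a cycle vertex long before the cycle becomes a root cycle in the algorithm's reduced graph, and between the round in which the root cycle forms and the next round with an available firefighter, the fire in the algorithm's game can burn vertices of $\CoveredSet(P^a)$ (which \opt saves but $\algAlmost$ does not); those vertices are in $\exclusiveCoveredSet(P^a)$ yet absent from $\Galg_{t^*}$, so the containment, and hence the bound $\exclusiveWeight(P^a)\leq\weight_{t^*}(\rootC_{t^*})$, can fail. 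The paper avoids this by anchoring at the time label $t$ at which \opt protects its \emph{first} cycle vertex and by treating separately the case where $\Galg_t$ no longer contains the cycle (there the earlier break vertex $u_c$ plus the greedy choice at time $t$ already give $\weight(\algset)\geq\exclusiveWeight(P^a)/2$).
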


\begin{proof}
    Recall that by Lemma~\ref{lem:non-redundant}, we focus on a non-redundant optimal solution $\opt$. 
    Thus, there are at most two vertices in $P^a$.
    If $P^a = \emptyset$, the inequality holds.
    So we assume that there is at least one vertex in $P^a$.

    Let $t$ be the time (label) when \opt protects the first cycle vertex $\vopt_t$ in $P^a$. 
    We consider the graph $\Galg_t$ (the graph right before the corresponding vertex $\valg_t$ was protected by $\algAlmost$ from the algorithm's perspective), where the heaviest vertex in $N(\newR) \cup \rootC$ is $\tHeaviest$ (note that $\tHeaviest$ can also be a cycle vertex). 

    \runtitle{$\Galg_t$ contains no cycle.}
    Consider the case where $\Galg_t$ contains no cycle.
    This can occur in one of two ways: either the original cycle $C$ has already been covered by protecting an ancestor vertex before time~$t$, when $C$ was not yet a root cycle in the graph of \opt, or the cycle has been broken before time~$t$ by protecting the vertex $u_c$ on the original cycle~$C$.
    
    For the former, the algorithm has saved the entire cycle, thus $\CoveredSet(C) \subseteq \CoveredSet(\algset)$.
    Then, as $\CoveredSet(P^a) \subseteq \CoveredSet(C)$, the exclusive covered set $\exclusiveCoveredSet(P^a) = \CoveredSet(P^a) \setminus \CoveredSet(\algset) = \emptyset$.
    Thus, $\exclusiveWeight(P^a) = 0$ and the inequality holds.
    
    For the latter, by the algorithm's definition, it protects $\tHeaviest$ at time $t$ (namely, $\valg_t = \tHeaviest$)
    Since the protection on $u_c$ has broken the cycle $C$ into two arcs $A_1$ and $A_2$ by time $t$, and $\tHeaviest$ is the heaviest vertex at time $t$, $\weight(\algset) \geq \weight(\{u_c, \tHeaviest\}) \geq \weight_t(\{u_c, \tHeaviest\}) \geq \max\{\weight_t(\{ u_c \} \cup A_1), \weight_t( \{ u_c \} \cup A_2)\} \geq \weight_t(P^a)/2 \geq \exclusiveWeight(P^a)/2$.
    Where the second-last inequality holds because the remaining (unburned) vertices of the arcs $A_1, A_2$, together with vertex $u_c$ (if not already burned) make up the remaining cycle in the graph of $\opt$ (as the fire spreads with the same speed in both graphs), and $\CoveredSet(P^a)$ is at most as large as the covered set of the remaining cycle in the graph of \opt.
    Thus, the lemma holds in this case.

    \runtitle{$\Galg_t$ contains a root cycle.}
    In contrast, consider the case where $\Galg_t$ still has a root cycle $\rootC$. 
    Thus, $\exclusiveWeight(P^a) \leq \weight_t(\rootC)$.
    At time $t$, $\algAlmost$ selects $\valg_t$ in $\Galg_t$ by making one of the following three decisions: 1) protecting the heaviest vertex in $\Galg_t$, 2) protecting $\rootC$ by protecting two cycle vertices, or 3) breaking cycle $\rootC$. 
    In the following, we show that $\weight(\algset) \geq \sqrt{\weight_t(\rootC)}$ in the first two cases.
    Thus, since $\exclusiveWeight(P^a) \leq \weight(P^a) \leq \weight_t(\rootC)$, the lemma holds.

    If $\valg_t$ is selected as the heaviest vertex in $\Galg_t$, by Observation~\ref{obs:algAlmost}(\ref{obs:algAlmostMax}), $\weight_t(\valg_t)^2 \geq \weight_t(\rootC)$.
    Thus, $\weight(\algset) \geq \weight_t(\valg_t) \geq \sqrt{\weight_t(\rootC)}$.
    If $\valg_t$ is selected to protect $\rootC$ with another cycle vertex $u$, then $\weight(\algset) \geq \weight(\{\valg_t, u\}) \geq \weight_t(\{\valg_t, u\}) = \weight_t(\rootC) \geq \sqrt{\weight_t(\rootC)}$. 

    \medskip
    
    The third case, where $\valg_t$ is chosen to break the root cycle $\rootC$, is more involved. 
    In this case, $\weight_t(v) \leq \weight_t(\tHeaviest) < \sqrt{\weight_t(\rootC)}$ for all $v \in \Galg_t$, where $\tHeaviest$ is the heaviest vertex in $\Galg_t$ (by Observation~\ref{obs:algAlmost}(\ref{obs:algAlmostBreakCycle})).
    Then, if $|P^a| = 1$, $\exclusiveWeight(P^a) \leq \sqrt{\weight_t(\rootC)} \leq \sqrt{\weight_t(\rootC)} \cdot \weight(\algset)$, noting that $\weight(\algset) \geq \weight(\valg_t) \geq 1$.
    
    If instead $|P^a| = 2$, let $t$ and $s = t + \delta$ be the times at which \opt protects the two cycle vertices in $C$, where $\delta \geq 1$. 
    In the $\delta$ iterations between time $t$ and time $s$, the fire spreads to at most $\delta$ additional vertices on $\rootC$. 
    Thus, the segment on $\rootC$ saved by \opt has length at most $|\rootC| - 1 - \delta$. 
    Recalling that $\weight_t(v)\leq \sqrt{\weight_t(\rootC)}$ at time $t$, $\exclusiveWeight(P^a) \leq \weight_t(P^a) \leq (|\rootC| - 1 - \delta)\cdot \sqrt{\weight_t(\rootC)}$.
    Meanwhile, the algorithm protects $\valg_t$ to break $\rootC$ close to the root, and when another firefighter is available at time $s$, $\algAlmost$ always protects the heaviest vertex in $\Galg_s$. 
    Thus, $\weight(\algset) \geq \weight(\{\valg_t, \valg_s\}) \geq \weight_t(\{\valg_t, \valg_s\}) \geq |\rootC| - 1 - \delta$. 
    Therefore, $\exclusiveWeight(P^a) \leq \sqrt{\weight_t(\rootC)}\cdot \weight(\algset)$ in this case.
\end{proof}

\medskip

\begin{lemma}[Non-cycle vertices in $\optset$ that are available to $\algAlmost$]\label{lem:algAlmostPb}
    Consider the part $P^b$ in $\mathcal{P^*}$, which consists of all non-cycle vertices $\vopt_t \in \optset$ that are available in $\Galg_t$. 
    Then $\exclusiveWeight(P^b) \leq \sqrt{n} \cdot \weight(\algset)$, where $n$ is the number of vertices in the input graph $G$.
\end{lemma}

\begin{proof}
    Consider $\vopt_t \in P^b$ and the corresponding vertex $\valg_t$ protected by $\algAlmost$ at the same time $t$. 
    By the definition of $P^b$, at least $\vopt_t$ is still available for $\algAlmost$.
    Therefore, $\Galg_t$ is not empty.
    We denote $\tHeaviest$ as the heaviest vertex in $\Galg$ and $\rootC$ as the root cycle (if it exists).

    There are three possibilities of $\valg_t$: 1) $\valg_t$ is protected as the heaviest vertex in $\Galg_t$, 2) $\valg_t$ is protecting $\rootC$ with another cycle vertex, or 3) $\valg_t$ is protected to break cycle $\rootC$.
    We further partition $P^b$ into $P^b_1, P^b_2$ and $P^b_3$ according to which case the corresponding vertex $\valg_t$ is protected in.
    
    In the first case, by Observation~\ref{obs:algAlmost}(\ref{obs:algAlmostMax}), $\weight(\valg_t) \geq \weight_t(\valg_t) \geq \weight_t(\vopt_t) \geq \exclusiveWeight(\vopt_t)$.
    Then, $\exclusiveWeight(P^b_1) = \sum_{t:\vopt_t\in P^b_1} \exclusiveWeight(\vopt_t) \leq \sum_{t:\vopt_t\in P^b_1} \weight(\valg_t)$.
    
    In the second case, $\valg_{t}$ protects the cycle $\rootC$ together with another cycle vertex~$\valg_k$ where $k\in \{t-1, t+1 \}$.
    By Observation~\ref{obs:algAlmost}(\ref{obs:algAlmostCycle}), $\weight(\{\valg_t, \valg_{k}\}) \geq \weight_t(\{\valg_t, \valg_{k}\}) = \weight_t(\rootC) > \weight(\vopt_t) + \weight(\vopt_{k}) \geq \exclusiveWeight(\vopt_t) + \exclusiveWeight(\vopt_k)$.
    Note that if $\vopt_k \in P^b$, then $\vopt_k$ must be in $P^b_2$, since at time $k$, $\algAlmost$ protected $\rootC$ together with another cycle vertex.
    Therefore no other vertex in $P^b \setminus P^b_2$ will be mapped to $\valg_k$, and $\exclusiveWeight(P^b_2) = \exclusiveWeight(\vopt_t) + \exclusiveWeight(\vopt_k) \leq \weight(\{\valg_t, \valg_k \})$.
    Otherwise, when $\vopt_k \notin P^b$, since $\rootC$ is the unique cycle in the graph, it follows that $\exclusiveWeight(P^b_2) = \exclusiveWeight(\vopt_t) \leq \exclusiveWeight(\vopt_t) + \exclusiveWeight(\vopt_k) \leq \weight(\{\valg_t, \valg_k \})$.
    
    In the third case, by Observation~\ref{obs:algAlmost}(\ref{obs:algAlmostBreakCycle}), $\exclusiveWeight(\vopt_t) \leq \weight_t(\tHeaviest) \leq \sqrt{\weight_t(\rootC)}$.
    Since $\weight_t(\valg_t) \geq 1$, $\exclusiveWeight(\vopt_t) \leq \sqrt{\weight_t(\rootC)}\cdot \weight_t(\valg_t)$.
    Then, as $\sqrt{\weight_t(\rootC)} \leq \sqrt{n}$, it follows that $\exclusiveWeight(P^b_3) = \sum_{t:\vopt_t\in P^b_3} \exclusiveWeight(\vopt_t) \leq \sum_{t:\vopt_t\in P^b_3} \sqrt{n} \cdot \weight(\valg_t)$.

    Finally, letting $k$ and $k+1$ be the time steps where $\algAlmost$ protects the cycle, $\exclusiveWeight(P^b) = \exclusiveWeight(P^b_1) + \exclusiveWeight(P^b_2) + \exclusiveWeight(P^b_3) \leq \sum_{t:\vopt_t\in P^b_1} \weight(\valg_t) + \weight(\{\valg_k, \valg_{k+1} \}) + \sum_{t:\vopt_t\in P^b_3} \sqrt{n} \cdot \weight(\valg_t) \leq \sqrt{n} \cdot \exclusiveWeight(\algset)$.
    Where the last inequality holds because each vertex $\valg_t$ is mapped to by a single other vertex, specifically, $\vopt_t$.
    Hence, the images of each part, that is $\charging (P^b_{\boldsymbol{\cdot}} )$, are non-intersecting and the union of them only contains vertices protected by $\algAlmost$ and therefore form a subset of $\algset$.
\end{proof}

What remains are the non-cycle vertices in $\optset$ that are unavailable to $\algAlmost$.
There are two possibilities for a vertex $\vopt_t$ to be unavailable to $\algAlmost$.
Either the vertex is saved in a previous round by $\algAlmost$, or the vertex is burned.
The first case is covered by our charging scheme.

In the latter case, it is important to note that the vertex $\vopt_t$ is still available to the optimal algorithm.
Since the fire spreads at a constant speed in both the graph maintained by our algorithm and the graph maintained by the optimal algorithm, it must be the case that the optimal solution managed to delay the fire in reaching the vertex.
The only way to delay the fire, without altogether stopping the spread of the fire and thus saving the vertex, is to protect a cycle vertex on the shortest path from $\vopt_t$ to $\newR$.
The fire now is still able to reach vertex $\vopt_t$, but it reaches $\vopt_t$ at a later time than it would in a graph where no such cycle vertex is protected.
Hence, it must be the case that $\algAlmost$ did either not break the cycle, or it broke the cycle in a different way, i.e, it protected a cycle vertex that is not on the shortest path from $\vopt_t$ to $\newR$.
This is illustrated in Figure~\ref{fig:lemma14}.

\begin{figure}[bt]
    \centering
    \includegraphics[width=0.5\linewidth]{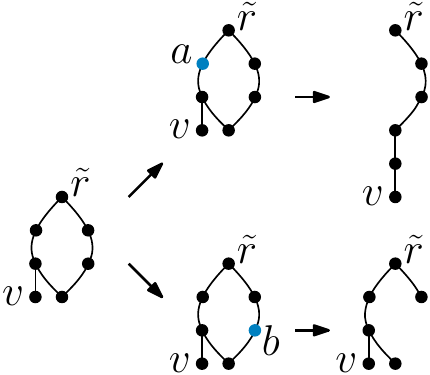}
    \caption{Delay.
    The fire can be delayed by protecting a cycle vertex ($a$) on the unique shortest path from a vertex to the root.
    Protecting a vertex $b$ on a not (unique) shortest path does not affect the distance to the root.}
    \label{fig:lemma14}
\end{figure}

\begin{lemma}[Non-cycle vertices in $\optset$ that are unavailable to $\algAlmost$]\label{lem:algAlmostPc}
    Consider the part $P^c$ in $\mathcal{P^*}$, which consists of all non-cycle vertices $\vopt_t \in \optset$ that are unavailable in $\Galg_t$. 
    Then $\exclusiveWeight(P^c) \leq (4\sqrt{n}+1) \cdot \weight(\algset)$, where $n$ is the number of vertices in the input graph $G$.
\end{lemma}

\begin{proof}
    First, consider that vertex $\vopt_t \in P^c$ is saved by $\algAlmost$ in an earlier round $s < t$.
    Then, it must be the case that $\vopt_t \in \CoveredSet(\valg_s)$, or $\vopt \in \CoveredSet( \{ \valg_s, \valg_{s+1} \})$ in the case where $\algAlmost$ protected a cycle in round $s$.
    Note that by the definition of the exclusive covered set of a vertex, it follows that $\exclusiveCoveredSet(\vopt_t) = \CoveredSet(\vopt_t) \setminus \CoveredSet(\algset) = \emptyset$.
    And thus, $\exclusiveWeight(\vopt_t) = 0$.    

    Observe that for each vertex that is unavailable for $\algAlmost$ because it was saved in a previous round, the exclusive weight of this vertex is $0$.
    Thus, let us focus on just the vertices that are unavailable because they are burned in $\Galg_t$.
    That is, let $P^c$ be the vertices that are burned in $\Galg_t$ but still available for $\opt$.
    Note that for the fire to be delayed, \opt must have protected a cycle vertex on the shortest path from $\vopt_t$ to the root at time $s < t$, while $\algAlmost$ did not break the cycle or broke the cycle not on the shortest path.
    Then, we consider the graph $\Galg_s$, which has a root cycle $\rootC$ from the algorithm's perspective. 
    Let $\sHeaviest$ be the heaviest vertex in $\Galg_s$. 
    Since $P^c$ are non-cycle vertices and $P^c \subseteq \CoveredSet_s(\rootC)$ in $\Galg_s$, $\exclusiveWeight(P^c) \leq \weight(P^c) \leq \weight_s(\rootC)$. 
    At time $s$, $\algAlmost$ assigns a firefighter to $\valg_s$ to 1) protect the heaviest vertex in $\Galg_s$, 2) protect $\rootC$ with another cycle vertex, or 3) break cycle $\rootC$.
    We further partition $P^c$ into $P^c_1$, $P^c_2$, and $P^c_3$ according to which case the corresponding vertex $\valg_s$ is protected.
    Note that since the vertices in $P^c$ are not available to the algorithm and the original input graph has only one cycle, the covered sets of $P^c_1$, $P^c_2$, and $P^c_3$ have no intersection.
    Thus, $\exclusiveWeight(P^c) = \exclusiveWeight(P^c_1) + \exclusiveWeight(P^c_2) + \exclusiveWeight(P^c_3)$.

    \runtitle{$\mathbf{P^c_1}$ and $\mathbf{P^c_2}$.}
    In the first case, by Observation~\ref{obs:algAlmost}(\ref{obs:algAlmostMax}), $\weight(\algset) \geq \weight_s(\valg_s) \geq \sqrt{\weight_s(\rootC)}$. 
    In the second case, by Observation~\ref{obs:algAlmost}(\ref{obs:algAlmostCycle}), $\weight(\algset) \geq \weight_s(\rootC)$.
    Thus, as $\exclusiveWeight(P^c) \leq \weight_s(\rootC)$, $\exclusiveWeight(P^c_1) \leq \exclusiveWeight(P^c) \leq \sqrt{\weight_s(\rootC)}\cdot \weight(\algset) \leq \sqrt{n} \cdot \weight(\algset)$ and $\exclusiveWeight(P^c_2) \leq \exclusiveWeight(P^c) \leq \weight(\algset)$.
    
    \runtitle{$\mathbf{P^c_3}$.}
    In the third case, since $\algAlmost$ also breaks the cycle at time $s$, the graphs $\algAlmost$ and $\opt$ play on are both trees. 
    By Observation~\ref{obs:algAlmost}(\ref{obs:algAlmostBreakCycle}), $\weight_s(v) \leq \sqrt{\weight_s(\rootC)}$ for every vertex $v$ in $\Galg_s$ (recall that $\Galg_s$ is the graph right before $\valg_s$ was protected).
    Therefore, since the cycle $\rootC$ (which is a superset of $P^c$) was in both $\Galg_s$ and $G^*_s$, in every following iteration, if \opt protects a vertex in $P^c_d$, the vertex weighs at most $\sqrt{\weight_s(\rootC)}$.
    Meanwhile, $\algAlmost$ either matches each of these vertices with a distinct one of its own or the game ends from $\algAlmost$'s point of view.
    Let $\delta$ be the distance the fire spreads between the time $s$ and the time the game finishes for $\algAlmost$.
    From this time onward, \opt can save at most $\amountBelow(\newT{\rootC}{\vopt_s}, \delta)$ additional vertices by protecting vertices in $P^c$.
    Meanwhile, since $\algAlmost$ finished the game by this time, $\algAlmost$ has covered at least $\amountBelow(\newT{\rootC}{\valg_s}, \delta) + \weight_s(\valg_s)$. 
    By Lemma~\ref{lem:CycleBreakIsGood}, $\frac{\amountBelow(\newT{\rootC}{\vopt_s}, \delta)}{\amountBelow(\newT{\rootC}{\valg_s, \delta)+ \weight_s(\valg_s)}} \leq 2\sqrt{\weight(\rootC)}$. 
    Thus, $\exclusiveWeight(P^c_3) \leq 2\sqrt{\weight(\rootC)} \cdot \weight(\algset)$.
    
    Altogether, $\exclusiveWeight(P^c) = \exclusiveWeight(P^c_1) + \exclusiveWeight(P^c_2) + \exclusiveWeight(P^c_3) \leq (3 \sqrt{\weight(\rootC)} + 1) \cdot \weight(\algset) \leq (3\sqrt{n} + 1) \cdot \weight(\algset)$.

\end{proof}

\begin{proof}[Proof of Theorem~\ref{thm:Almost}]
    By Lemmas~\ref{lem:algAlmostPa},~\ref{lem:algAlmostPb}, and~\ref{lem:algAlmostPc}, $\sum_{P^{(\boldsymbol{\cdot})}\in \partition} \exclusiveWeight(P^{(\boldsymbol{\cdot})}) \leq (5\sqrt{n}+1)\cdot \weight(\algset)$.
    Therefore, $\frac{\opt(\instance)}{\alg(\instance)} = \frac{\weight(\optset)}{\weight(\algset)} \leq \frac{\exclusiveWeight(\optset)}{\weight(\algset)} + 1 = \frac{\sum_{P^{(\boldsymbol{\cdot})}\in \partition} \exclusiveWeight(P^{(\boldsymbol{\cdot})})}{\weight(\algset)} + 1 \leq \frac{(5\sqrt{n}+1)\cdot\weight(\algset)}{\weight(\algset)} + 1 = 5\sqrt{n}+2$.
\end{proof}

\section{Cactus graphs}
\label{Sec:cactus}
\subsection{$\algCactus$: Algorithm for cactus graphs}
Similar to the previous case, $\algCactus$ proceeds in rounds, each consisting of several iterations.
In each iteration, if there are enough firefighters to protect the heaviest component adjacent to the fire source, $\algCactus$ makes the greedy choice.
Moreover, if only one firefighter is available but the greedy choice guarantees a gain of at least $\sqrt{\weight(\rootC_1)}$, where $\rootC_1$ is the heaviest root cycle in the current graph, the algorithm still follows the greedy choice.

The critical case occurs when only one firefighter is available, and the greedy choice does not guarantee a sufficiently large immediate gain.

In this situation, as in the strategy for $1$-almost trees, $\algCactus$ breaks a root cycle $C$ by protecting a vertex $u$ on the cycle that maximizes $\tol(u, \rootC, \sqrt{\weight(\rootC_1)})$, thereby ensuring that at least $\sqrt{\weight(\rootC_1)}$ vertices lie as far from the fire as possible (in terms of distance from~$\newR$).
Unlike in $1$-almost trees, however, a cactus graph may contain multiple root cycles.
A difficulty arises because if the algorithm repeatedly breaks cycles due to the lack of immediate gain, it may fail to secure many vertices that could otherwise be saved.
To address this, we introduce a \emph{cool-down timer} mechanism: once a cycle is broken, the algorithm enters a cool-down period during which it must follow the greedy choice whenever new firefighters become available.
The rule for selecting a vertex to break a cycle is thus adapted to this mechanism.
The detailed description is in Algorithm~\ref{Alg:cactus}.

\runtitle{Cool-down mechanism and \cactusBreakCycle.}
We first find the cycle vertex adjacent to the fire source $\newR$ that has the largest tolerance in the sense that if the edge between it and $\newR$ is removed, the distance that the fire can spread while keeping $\sqrt{\weight(\rootC_1)}$ vertices remaining unburned is maximized.
Let $\Cmax$ be the cycle containing this vertex $\umax$ with the largest tolerance $\dmax$.
Now the \cactusBreakCycle algorithm tries to find an alternative cycle vertex on $\Cmax$ while keeping the invariant ``\emph{at least $\sqrt{\weight(\rootC_1)}-\weight(u)$ vertices in $\newT{\Cmax}{(r,\umax)}$ remain unburned for $\dmax$ rounds''.}
This alternative chooses a break-cycle vertex farther from the fire source, while exposing only a limited number of additional vertices to the risk of being burned compared to breaking at a neighbor of $\newR$.
We show in Lemma~\ref{lem:CooldownIsFine} that our cycle-breaking decision according to \cactusBreakCycle, which protects $\uCutCycle$ on cycle $\Cmax$, saves at least a $\tfrac{1}{2\sqrt{n}}$ fraction of the vertices that would be saved by breaking at any other cycle vertex on any cycle.
The setting of the cool-down timer ensures that whenever a firefighter is released before the timer expires, the greedy choice protects at least as many vertices as the remaining available ones on the path that defines the timer.

To identify a suitable alternative break-cycle vertex, we introduce the notations $\EnewT{\rootC}{u,v}$ and $\etol((u,v), \rootC, m)$.
Here, $\EnewT{\rootC}{u,v}$ denotes the subgraph induced by $\CoveredSet(\rootC)$ with the edge $(u,v)$ removed.
The notation $\etol((u,v), \rootC, m)$ then refers to the tolerance with respect to $m$ in $\EnewT{\rootC}{u,v}$.
The detailed definition of \cactusBreakCycle is given in Algorithm~\ref{Alg:ImprovedBreak}.

\begin{algorithm}[t]
    \caption{Procedure of $\algCactus$ for cactus graph in round $i$} 
    \label{Alg:cactus}
    \textbf{Input:} Graph $\newG$ with fire source $\newR$ and $f_i$ available firefighters\\
    $\cool \gets \max\{0, \cool-1\}$\\
    \While{$f_i > 0$ and there are available vertices in $\newG$}{
        \eIf{$\newR$ is not a cycle vertex in $\newG$}{
            Protect the vertex $v$ in $N(\newR)$ with the highest weight.\\
            Set $f_i\gets f_i -1$, $\newG \gets \newG \setminus \CoveredSet(v)$.
        }{
            \tcp{There are root cycles $\rootC_1, \rootC_2, \cdots$ containing $\newR$ and $\weight(\rootC_1) \geq \weight(\rootC_2) \geq \cdots$}
            Consider unprotected vertices $v_1, v_2, \cdots$ in $N(\newR)\cup (\bigcup_h\rootC_h)$ with $\weight(v_1) \geq \weight(v_2) \geq \cdots$

            \eIf{$f_i \geq 2$}{
                \eIf{$\weight(v_1) + \weight(v_2) \geq \weight(\rootC_1)$}{   
                    Protect $v_1$.\\
                    Set $f_i \gets f_i - 1$, $\newG \gets \newG \setminus \CoveredSet(v_1)$.
                }{
                    Protect both vertices $v_\ell, v_r$ in $N(\newR) \cap \rootC_1$. \label{line:cactusSimultaniously}\\
                    Set $f_i \gets f_i - 2$, $\newG \gets \newG \setminus \CoveredSet(\{v_\ell, v_r\})$.
                }
            }{
                \eIf{$\weight(v_1) \geq \sqrt{\weight(\rootC_1)}$ or $\weight(\rootC_1) \leq \sqrt{n}$ or $\emph{\cool} > 0$\label{Alg:cactus:line:greedy_with_one_firefighter}}{
                        Protect $v_1$.\\
                        $\cool \gets 0$.\\
                        Set $f_i \gets f_i - 1$, $\newG \gets \newG \setminus \CoveredSet(v_1)$.
                }{
                    $\uCutCycle, \cool \gets$ \cactusBreakCycle$(\newG, \newR, \sqrt{n})$ \\
                    Protect $\uCutCycle$.\\
                    Set $f_i \gets f_i - 1$, $\newG \gets \newG \setminus \CoveredSet(\uCutCycle)$.
                }
            }
        }
    }
\end{algorithm}

\begin{algorithm}[t]
    \caption{Procedure \cactusBreakCycle$(G, r, \eta)$} 
    \label{Alg:ImprovedBreak}
    \textbf{Output:} a cycle vertex $\uCutCycle$ and a cool-down distance $\cool$\\
    Let $\rootC_1, \rootC_2, \cdots$ be the root cycles with weight at least $\eta$, where $\weight(C_1) \geq \weight(C_2) \geq \cdots \geq \eta$\\
    Consider all cycle vertices $u$ in $N(r) \cap \rootC_h$ for some root cycle $\rootC_h$ such that $\weight(\rootC_h)-\weight(u) \geq \sqrt{\weight(\rootC_1)}$\\

    Let $\umax \in N(r)\cap\rootC_h$ be the cycle vertex with the largest $\etol((\umax, r), \rootC_h, \sqrt{\weight(\rootC_1)})$ (where $\umax$ is included in root cycle $\rootC_h$), $\dmax$ be the tolerance $\etol((\umax, r), \rootC_h, \sqrt{\weight(\rootC_1)})$, and $\Cmax$ be $\rootC_h$, where $\rootC_h = (r, u_1, u_2, \cdots, u_p)$ in a cyclic order and $u_1 = \umax$\\

    \medskip

    \For{$\uCutCycle = u_1, u_2, \cdots$}{
        \If{there is $v \in \CoveredSet(\uCutCycle)$ such that $\emph{\dist}(\EnewT{\Cmax}{\umax,r}, r, v) \geq \dmax$}{
            \Return{$\uCutCycle$, $\emph{\dist}(\EnewT{\Cmax}{\umax,r}, r, \uCutCycle)$}
        }
    }
\end{algorithm}

\begin{observation}[Properties of $\algCactus$]\label{obs:algCactus}
    When the input graph is a cactus graph, by the definition of $\algCactus$:
    \begin{enumerate}[(a)]
        \item \label{obs:algCactusNoChain}
        Consider all cycles intersecting a path from $r$ to an arbitrary vertex $v$. 
        The only cycle that is possible to be broken by \cactusBreakCycle is the one closest to $r$. 
        This is because, for every vertex $u$ on this cycle $C$, we have $\weight(u) \leq \sqrt{n}$, which is insufficient to invoke another \cactusBreakCycle procedure.
        
        \item \label{obs:algCactusBreakCycle}
        Assume that in round $i$, the algorithm $\algCactus$ breaks a cycle with a cool-down value $\cool$ chosen by \cactusBreakCycle. 
        Let $i^\prime > i$ be the next round when there are firefighters available.
        If $i^\prime - i \leq \cool$, in round $i^\prime$, $\algCactus$ greedily protects the heaviest vertex $v_1^\prime$ in the reduced graph $\newG^\prime$.

        \item Each cycle $C$ can have at most three vertices protected by $\algCactus$.
    \end{enumerate}
\end{observation}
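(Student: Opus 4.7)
The plan is to prove the three parts of the observation by combining a case analysis of the branches of $\algCactus$ with the cactus-structural fact that any two cycles share at most one vertex.

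For part (a), I would exploit the guard of the \cactusBreakCycle branch in Algorithm~\vref{Alg:cactus}: the branch is entered only when $\weight(v_1) < \sqrt{\weight(\rootC_1)}$ and $\weight(\rootC_1) > \sqrt{n}$, which forces every vertex $u$ at that moment to satisfy $\weight(u) < \sqrt{n}$. For any cycle $C'$ on the $r$-to-$v$ path lying strictly further from $r$ than the broken cycle $C$, the cactus property gives that $C'$ is connected to $C$ through a single shared vertex $u$, and the subgraph containing $C'$ sits inside $\CoveredSet(u)$ at that moment, so $\weight(C') \leq \weight(u) < \sqrt{n}$. Since covered sets shrink monotonically as the game progresses and \cactusBreakCycle only admits root cycles of weight at least $\eta = \sqrt{n}$ as candidates, $C'$ is never a candidate at any future invocation, and hence never broken.

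For part (b), the argument reduces to counter arithmetic on the variable $\cool$. Tracing the algorithm shows that the only line increasing $\cool$ is the assignment inside the \cactusBreakCycle branch, which itself requires $\cool = 0$ to execute; the greedy branch in line~\vref{Alg:cactus:line:greedy_with_one_firefighter} explicitly resets $\cool \gets 0$. Hence between rounds $i$ and $i'$ the counter is only decremented by $\cool \gets \max\{0, \cool - 1\}$ at the start of each round, so it remains positive at round $i'$ under the assumption $i' - i \leq \cool$, and the disjunction in line~\vref{Alg:cactus:line:greedy_with_one_firefighter} then selects the greedy branch, forcing $\algCactus$ to protect the heaviest vertex $v_1^\prime$ in $\newG^\prime$.

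For part (c), I would enumerate the three types of actions that can place a protection on a vertex of $C$: cycle-save on $C$ (two protections, entirely saving $C$), \cactusBreakCycle on $C$ (one protection, turning $C$ into a chain), and greedy protections landing on $C$. Cycle-save and \cactusBreakCycle each act on $C$ at most once because afterward $C$ is no longer a cycle in the reduced graph, and once a greedy protection picks a $C$-vertex $u_i$ on the residual chain, $\CoveredSet(u_i)$ absorbs all further $C$-vertices downstream on the same chain, bounding greedy landings on $C$. The main obstacle lies here: ruling out sequences of greedy protections landing repeatedly on $C$ across different rounds requires carefully tracking which residual $C$-vertices remain available at each round, and combining this with the fire-spreading mechanics together with the covered-set absorption property to finally obtain the bound of three.
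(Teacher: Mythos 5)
Your treatments of parts (a) and (b) match what the paper intends (it states the observation without proof, as a consequence of the algorithm's definition), and the core of your argument for (a) is right: at the moment \cactusBreakCycle fires, the guard gives $\weight(v_1)<\sqrt{\weight(\rootC_1)}\le\sqrt{n}$, every cycle $C'$ farther along the path lies inside $\CoveredSet(u)$ for a suitable vertex $u$, weights never increase, and \cactusBreakCycle only admits root cycles of weight at least $\sqrt{n}$. One imprecision: in a cactus, $C'$ need not share a vertex with $C$; take $u$ to be a cut vertex separating $C'$ from $r$ (e.g., the last vertex of $C$ on the path, or the entry vertex of $C'$) — the containment $C'\subseteq\CoveredSet(u)$ is what you actually use, and it holds for that choice. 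For (b), your counter arithmetic should note that the break is necessarily the last protection of round $i$ (the break branch is only reached with a single remaining firefighter), so nothing resets $\cool$ before round $i'$; also, after the start-of-round decrements the counter equals $\cool-(i'-i)$, which is $0$ rather than positive in the boundary case $i'-i=\cool$ (an off-by-one the paper's statement itself shares), and the guard in line~\vref{Alg:cactus:line:greedy_with_one_firefighter} is only reached when one firefighter is available, so with $f_{i'}\ge 2$ the algorithm may instead save a root cycle rather than protect $v_1^\prime$.

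The genuine gap is part (c), which you explicitly leave unresolved. It closes once you use the restriction on the candidate set: $\algCactus$ only ever protects vertices in $N(\newR)\cup\bigl(\bigcup_h\rootC_h\bigr)$. The first protection that touches a cycle $C$ either saves $C$ with two firefighters (then all remaining vertices of $C$ are covered and removed, total two), or places a single vertex on $C$ (greedy or \cactusBreakCycle), after which $C$ is no longer a cycle in the reduced graph and its surviving vertices form at most two chains hanging from the contracted root. These chains are not root cycles, so a chain vertex can only be protected later if it lies in $N(\newR)$; covered sets along a chain are nested, and protecting the root-adjacent vertex of a chain covers the entire remainder of that chain, so each chain receives at most one further protection. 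Hence at most $1+2=3$ protections ever land on $C$, with no need for the round-by-round tracking you were worried about — the "repeated greedy landings" scenario cannot occur because a non-root-adjacent chain vertex is never a candidate.
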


\paragraph*{Quality of the choice of cycle-breaking vertex $\uCutCycle$}
We first show the feasibility of procedure \cactusBreakCycle. 
More specifically, we show that the procedure indeed returns an alternative cycle vertex $\uCutCycle$ on the cycle $\CCutCycle$, such that protecting $\uCutCycle$ ensures that there are at least $\sqrt{\weight(\rootC_1)}-\weight(\uCutCycle)$ vertices remaining a distance of $\dmax$ away from the fire source.

\begin{figure}[ht]
    \centering
    \begin{subfigure}[b]{0.45\textwidth}
        \centering
        \includegraphics[width=\linewidth]{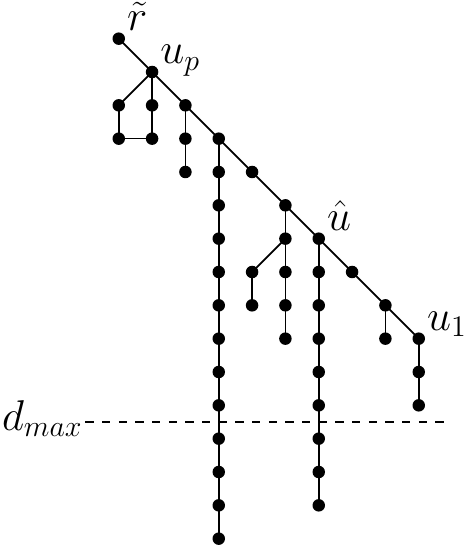}
        \caption{Vertex $\uCutCycle$ is chosen such that no vertices further ``down'' the cycle have a vertex in their covered set at distance at least $\dmax$ from the root $\newR$.}
        \label{fig:lemma16}
    \end{subfigure}%
    \hfill%
    \begin{subfigure}[b]{0.45\textwidth}
        \centering
        \includegraphics[width=\linewidth]{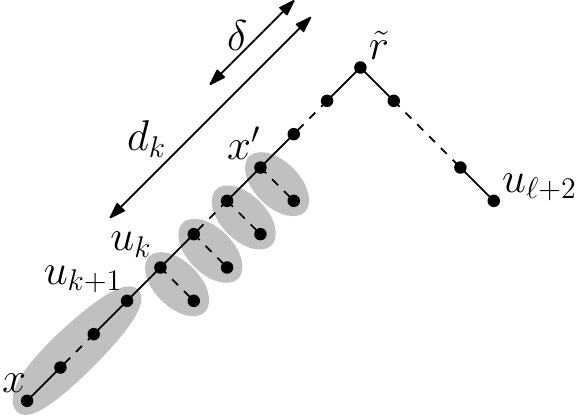}
        \caption{The weight $\weight(\{x, x^\prime \} )$ is made up by the weights of the $(d_k - \delta)$ covered sets of size at most $\sqrt{\weight(\rootC)}$, and the weight of an additional covered set $\CoveredSet(\{ u_{k+1}, x \})$.}
        \label{fig:lemma18}
    \end{subfigure}
    \hfill
    \caption{}
    \label{fig:lemmas1618}
\end{figure}

\begin{lemma}[Feasibility of $\cactusBreakCycle$]\label{lem:alternativeIsFine}
    Let $\dmax$ be the maximum tolerance described in Algorithm~\ref{Alg:ImprovedBreak}. 
    The procedure of \cactusBreakCycle{} always terminates and returns a cycle vertex $\uCutCycle$ with $\amountBelow(\newT{\CCutCycle}{\uCutCycle}, d_{\max}) \geq \sqrt{\weight(\rootC_1)} - \weight(\uCutCycle)$.
\end{lemma}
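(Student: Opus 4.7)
The plan is to handle the two claims separately: (i) the \texttt{for} loop in \cactusBreakCycle{} terminates, and (ii) the returned $\uCutCycle$ satisfies the quantitative bound on $\amountBelow$.

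For termination, I would invoke the defining property of $\etol$. By construction, $\dmax = \etol((\umax, r), \Cmax, \sqrt{\weight(\rootC_1)})$, so the set $S := \{v : \dist(\EnewT{\Cmax}{\umax, r}, r, v) \geq \dmax\}$ has cardinality at least $\sqrt{\weight(\rootC_1)}$. Since the vertex set of $\EnewT{\Cmax}{\umax, r}$ is $\{r\} \cup \CoveredSet(\Cmax)$, and in a cactus every non-root element of $\CoveredSet(\Cmax)$ lies in $\CoveredSet(u_j)$ for some unique cycle vertex $u_j$ of $\Cmax$ (in particular $u_j \in \CoveredSet(u_j)$), at least one $u_j$ must satisfy $\CoveredSet(u_j) \cap S \neq \emptyset$. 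Hence the loop must terminate, returning the smallest such $u_j$.

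For the inequality, set $\uCutCycle = u_i$. By the minimality of $i$, every $j < i$ has $\CoveredSet(u_j) \cap S = \emptyset$, so $S \subseteq \bigcup_{j \geq i} \CoveredSet(u_j)$, whence $|S \setminus \CoveredSet(u_i)| \geq |S| - \weight(\uCutCycle) \geq \sqrt{\weight(\rootC_1)} - \weight(\uCutCycle)$. The remaining task, and the step I expect to require the most care, is showing that each $v \in S \setminus \CoveredSet(u_i)$ still satisfies $\dist(\newT{\Cmax}{\uCutCycle}, r, v) \geq \dmax$, since this is what promotes the cardinality inequality into the desired bound on $\amountBelow$. Writing $\Cmax = (r, u_1, \dots, u_p)$ with $u_1 = \umax$, every such $v$ is covered by some $u_j$ with $j > i$; both $\EnewT{\Cmax}{\umax, r}$ (deleting the edge $(u_1, r)$) and $\newT{\Cmax}{\uCutCycle}$ (deleting $u_i$ together with $\CoveredSet(u_i)$) sever the ``$u_1$-arc'' of $\Cmax$. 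The cactus property ensures that $u_j$'s subtree attaches to $\Cmax$ only at $u_j$, so in both graphs the only $r$-to-$v$ paths descend from $r$ along $u_p, u_{p-1}, \dots, u_j$ and then into the subtree, yielding identical distances. Summing over $v$ delivers $\amountBelow(\newT{\Cmax}{\uCutCycle}, \dmax) \geq |S \setminus \CoveredSet(u_i)| \geq \sqrt{\weight(\rootC_1)} - \weight(\uCutCycle)$, as required.
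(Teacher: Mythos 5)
Your proof is correct and follows essentially the same route as the paper's own: both arguments observe that the at least $\sqrt{\weight(\rootC_1)}$ vertices at distance at least $\dmax$ from $r$ in $\EnewT{\Cmax}{\umax,r}$ each lie in the covered set of some cycle vertex of $\Cmax$ (so the loop terminates), use the minimality of the returned index to place all of these vertices in $\bigcup_{j\geq i}\CoveredSet(u_j)$, note that their distances to $r$ are unchanged in $\newT{\Cmax}{\uCutCycle}$ when they are covered by $u_j$ with $j>i$, and lose at most $\weight(\uCutCycle)$ of them to the removal of $\CoveredSet(\uCutCycle)$. The only point the paper addresses that you take for granted is that a valid $\umax$ exists at all (the candidate set is nonempty because \cactusBreakCycle{} is invoked only when $\weight(v_1)<\sqrt{\weight(\rootC_1)}$, so both neighbors of $r$ on $\rootC_1$ qualify and have tolerance at least $1$); this is a minor omission rather than a gap.
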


\begin{proof}
    \runtitle{Termination.}
    Based on the definition of \cactusBreakCycle, we focus on all cycle vertices $u$ in the neighborhood of $\newR$ that are ``not too heavy''.
    More specifically, for every root cycle $\rootC_h$, we consider the vertices $u \in N(\newR) \cap \rootC_h$ with $\weight(u) \leq \weight(\rootC_h)-\sqrt{\weight(\rootC_1)}$, where $\rootC_1$ is the heaviest root cycle. 
    Note that since we only reach this procedure when $\weight(v_1) < \sqrt{\weight(\rootC_1)}$, at least two cycle vertices satisfy this condition: the vertices $u \in N(\newR) \cap \rootC_1$.
    
    When \cactusBreakCycle{} has selected a vertex $\umax \in N(\newR) \cap \rootC_h$, there are at least $\sqrt{\weight(\rootC_1)}$ vertices at a distance $d_{\max}$ from $\newR$ in the graph $\EnewT{\rootC_h}{\umax,\newR}$.
    Also, every vertex $v \in \EnewT{\rootC_h}{\umax,\newR}$ is in the covered set of some $u_k \in \rootC_h \setminus \{ \newR \}$.
    Since there must exist a vertex $v$ at distance at least $d_{\max}$ such that $v \in \CoveredSet(\uCutCycle)$ for some $\uCutCycle \in \{ u_1, \dots u_p \}$, \cactusBreakCycle{} terminates.
   
    \runtitle{Selection of $\uCutCycle$.}
    By the definition of tolerance, since $\etol((u_1, \newR), \rootC_h, \sqrt{\weight(\rootC_1)}) = d_{\max}$, it follows that $\amountBelow(\EnewT{\rootC_h}{u_1,\newR}, d_{\max})\geq \sqrt{\weight(\rootC_1)}$.
    Let $\uCutCycle \in \rootC_h$ be the vertex returned by $\cactusBreakCycle$.
    Then, none of the vertices $u_k \in \{ u_1, u_2, \dots, \uCutCycle -1 \}$ were returned by $\cactusBreakCycle$.
    Therefore, there cannot exists a vertex $v_k \in \CoveredSet(u_k)$ such that $\dist(\EnewT{\rootC_h}{u_1,\newR}, \newR, v_k) \geq \dmax$.
    Thus, all at least $\sqrt{\weight(\rootC_1)}$ vertices $v_m$ that have distance at least $d_{\max}$ to the root $\newR$ in graph $\EnewT{\rootC_h}{u_1, \newR}$ must be in the covered sets of vertices $u_m \in \{ \uCutCycle, \uCutCycle + 1, \dots, u_p \}$, as illustrated in Figure~\ref{fig:lemma16}.
    Observe that the distance for such a vertex $v_m \in \CoveredSet(u_m)$, for $u_m \neq \uCutCycle$, to the root $\newR$ in the graph $\newT{\rootC_h}{\uCutCycle}$ is the same as in the graph $\EnewT{\rootC_h}{u_1,\newR}$.
    Then, combined with the fact that $\amountBelow(\EnewT{\rootC_h}{u_1, \newR}, d_{\max})\geq \sqrt{\weight(\rootC_1)}$, it follows that $\amountBelow(\newT{\rootC_h}{\uCutCycle}, d_{\max}) \geq \sqrt{\weight(\rootC_1)} - \weight(\uCutCycle)$.
\end{proof}

Following Lemma~\ref{lem:alternativeIsFine}, we show that the ``quality'' of protecting the alternative vertex $\uCutCycle$, measured by the number of vertices that remain at (shortest) distance $\dmax$ from the fire source, is within a bounded factor of the quality obtained by breaking any other cycle.

\begin{lemma}[Secured alternative cycle break]\label{lem:CooldownIsFine}
    Assume that the algorithm $\algCactus$ protects $\uCutCycle$ to break the cycle $\CCutCycle$, where $\CCutCycle$ contains $\uCutCycle$.
    Then, for any positive integer $d$, any root cycle $\rootC$
    , and every vertex $u \in \rootC$, $\frac{\amountBelow(\newT{\rootC}{u}, d)}{\amountBelow(\newT{\CCutCycle}{\uCutCycle}, d) + \weight(\uCutCycle)} \leq 2\sqrt{n}$, where $n$ is the number of vertices in the original input graph $G$.
\end{lemma}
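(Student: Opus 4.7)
The plan is to mirror the case split used in Lemma~\vref{lem:CycleBreakIsGood}, partitioning on whether $d \leq \dmax$ or $d > \dmax$, where $\dmax$ is the maximum tolerance selected by \cactusBreakCycle.

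First, in the regime $d \leq \dmax$, I would combine monotonicity of $\amountBelow$ with Lemma~\vref{lem:alternativeIsFine} to obtain $\amountBelow(\newT{\CCutCycle}{\uCutCycle}, d) + \weight(\uCutCycle) \geq \amountBelow(\newT{\CCutCycle}{\uCutCycle}, \dmax) + \weight(\uCutCycle) \geq \sqrt{\weight(\rootC_1)}$. Because $\rootC_1$ is the heaviest root cycle, the numerator satisfies $\amountBelow(\newT{\rootC}{u}, d) \leq \weight(\rootC) \leq \weight(\rootC_1)$, so the ratio is bounded by $\sqrt{\weight(\rootC_1)} \leq \sqrt{n}$.

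Second, in the regime $d > \dmax$, I would exploit the global maximality built into \cactusBreakCycle: $\umax$ attains the largest $\etol(\cdot,\cdot,\sqrt{\weight(\rootC_1)})$ among all qualifying cycle-vertex neighbors of $r$ across \emph{all} qualifying root cycles. Hence for any qualifying $u' \in N(r)\cap \rootC$, $\etol((u',r),\rootC,\sqrt{\weight(\rootC_1)}) \leq \dmax$, which yields $\amountBelow(\EnewT{\rootC}{u',r}, d) < \sqrt{\weight(\rootC_1)}$. To translate this to an arbitrary $u\in \rootC$, I would invoke a cactus analogue of Lemma~\vref{lem:NeighborsAreTheBest}: for such $u$, there exists a neighbor $u' \in \{u_1,u_p\} = N(r)\cap \rootC$ on the shorter arc from $u$ to $r$ with $\amountBelow(\newT{\rootC}{u}, d) \leq \amountBelow(\newT{\rootC}{u'}, d) + \weight(u') \leq \amountBelow(\EnewT{\rootC}{u',r}, d) + \weight(u')$. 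Since \cactusBreakCycle is invoked only when every cycle vertex has weight strictly less than $\sqrt{\weight(\rootC_1)}$, this gives $\amountBelow(\newT{\rootC}{u}, d) < 2\sqrt{\weight(\rootC_1)}$. With denominator at least $\weight(\uCutCycle) \geq 1$, the ratio is at most $2\sqrt{\weight(\rootC_1)} \leq 2\sqrt{n}$.

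The main obstacle will be verifying the cactus analogue of Lemma~\vref{lem:NeighborsAreTheBest}, whose original proof used that $\newT{\rootC}{u_1}$ is a tree so that paths to $r$ are unique. In a cactus, $\CoveredSet(\rootC)$ may itself harbor further cycles, but by the cactus property each of them shares at most one vertex with $\rootC$. Therefore the subcacti hanging off distinct cycle vertices $u_k \in \rootC$ are internally disjoint, and any shortest path from $v \in \CoveredSet(u_k)$ to $r$ must enter $\rootC$ at $u_k$ and then traverse one of the two arcs of $\rootC$ to $r$. This locality is precisely what the tree argument exploited, so the arc-by-arc comparison transports almost verbatim. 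A secondary subtlety is that the lemma permits $\rootC \neq \CCutCycle$; this is handled because \cactusBreakCycle's maximization of $\etol$ ranges over all qualifying root cycles rather than only $\CCutCycle$, so the bound $\etol((u',r),\rootC,\sqrt{\weight(\rootC_1)}) \leq \dmax$ applies uniformly across root cycles.
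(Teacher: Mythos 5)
Your proposal is correct and follows essentially the same route as the paper's proof: the same case split on $d \leq \dmax$ versus $d > \dmax$, with the first case reduced to $\amountBelow(\newT{\CCutCycle}{\uCutCycle}, \dmax) + \weight(\uCutCycle) \geq \sqrt{\weight(\rootC_1)}$ and the numerator bounded by $\weight(\rootC_1)$, and the second case using the global maximality of $\dmax$ over all qualifying root cycles together with a cactus transfer of Lemma~\vref{lem:NeighborsAreTheBest} and the bound $\weight(u') \leq \sqrt{\weight(\rootC_1)}$. The only difference is cosmetic: you invoke Lemma~\vref{lem:alternativeIsFine} as a black box where the paper re-derives that fact inline via the shortest-path-avoids-$\umax$ argument, which is a legitimate shortcut.
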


\begin{proof}
    If $\weight(\rootC) \leq \sqrt{n}$, then the bound immediately follows. Hence, assume $\weight(\rootC) > \sqrt{n}$.

    Note that $\CCutCycle = \Cmax$, the cycle selected by the \cactusBreakCycle algorithm. 
    We denote the vertices in $\CCutCycle$ in a cyclic order $(\newR, u_1, u_2, \cdots, u_p)$, where $u_1 = \umax$, the vertex in $N(\newR) \cap \CCutCycle$ with the largest tolerance.
    This tolerance is then denoted by $\dmax = \etol((u_1, \newR), \CCutCycle, \sqrt{\weight(\rootC_1)})$.
    Therefore, $\amountBelow(\EnewT{\CCutCycle}{u_1, \newR}, \dmax) \geq \sqrt{\weight(\rootC_1)}$.
    For arbitrary $d$ and a cycle vertex $u \in \rootC$, we consider two cases: 1) $d > \dmax$ and 2) $d \leq \dmax$.

    \runtitle{Case $\mathbf{d > \dmax}$.}
    Recall that, by definition, the maximum tolerance $\dmax$ is the maximum depth such that there exists a root cycle $\rootC^\prime$ and a vertex $u^\prime \in N(\newR) \cap \rootC^\prime$, such that $\amountBelow(\EnewT{\rootC^\prime}{u^\prime, \newR}, \dmax) \geq \sqrt{\weight(\rootC_1)}$.
    Thus, since $d > \dmax$, for any cycle $\rootC$ and any $u_x \in N(\newR) \cap \rootC$, $\amountBelow(\newT{C}{u_x}, d) \leq \amountBelow(\EnewT{C}{u_x, \newR}, d) <\sqrt{\weight(\rootC_1})$.
    Then, it follows from Lemma~\ref{lem:NeighborsAreTheBest}, that for any $u\in \rootC$, 
    $\amountBelow(\newT{\rootC}{u}, d)$ is upper bounded by the maximum of $\amountBelow(\newT{\rootC}{u_1},d) + \weight(u_1)$ and $\amountBelow(\newT{\rootC}{u_p},d) + \weight(u_p)$.
    Let $u_x$ denote the vertex where this expression is maximized.
    Then, $\amountBelow(\newT{\rootC}{u}, d) \leq \amountBelow(\newT{\rootC}{u_x}, d) + \weight(u_x) \leq 2\sqrt{\weight(\rootC_1)} \leq 2\sqrt{n}$, where the second-last inequality is due to the condition of \cactusBreakCycle where $\weight(u_x) \leq \sqrt{\weight(\rootC_1)}$.

    \medskip

    \runtitle{Case $\mathbf{d \leq \dmax}$.}
    Similar to the proof of Lemma~\ref{lem:CycleBreakIsGood}, we first show that in this case, every $v \notin \CoveredSet(\uCutCycle)$ that contributes to $\amountBelow(\EnewT{\CCutCycle}{\umax, \newR}, \dmax)$ also contributes to $\amountBelow(\newT{\CCutCycle}{\uCutCycle}, \dmax)$.
    By the fact that such a vertex $v$ contributes to $\amountBelow(\EnewT{\CCutCycle}{\umax, \newR}, \dmax)$, its shortest distance to $\newR$ in $\EnewT{\CCutCycle}{\umax, \newR}$ is at least $\dmax$. 
    Since the edge $(\umax, \newR)$ is not in the induced subgraph of $\EnewT{\CCutCycle}{\umax, \newR}$, the shortest path from $v$ to $\newR$ in $\EnewT{\CCutCycle}{\umax, \newR}$ must not go through $\umax$. 
    Consider the shortest path from $v$ to $\newR$ in the subgraph induced by $\newT{\CCutCycle}{\uCutCycle}$. 
    Since $v \notin \CoveredSet(\uCutCycle)$, $v$ is still connected to $\newR$, and there is a (shortest) path from $v$ to $\newR$ in $\newT{\CCutCycle}{\uCutCycle}$. 
    Note that this shortest path does not go through $\umax$, since otherwise the first cycle vertex $v_h$ on this path going from $\umax$ to $\newR$ will be chosen as $\uCutCycle$ (see Figure~\ref{fig:lemma16}), which contradicts to the fact that $v \notin \CoveredSet(\uCutCycle)$.
    Given that the shortest path does not go through $\umax$, the shortest path is the same as the shortest path from $v$ to $\newR$ in $\EnewT{\CCutCycle}{\umax, \newR}$, and has the same distance of at least $\dmax$.
    Thus, $v$ also contributes to $\amountBelow(\newT{\CCutCycle}{\uCutCycle}, \dmax)$.
    Together with the fact that all vertices in $\CoveredSet(\uCutCycle)$ contribute to $\amountBelow(\EnewT{\CCutCycle}{\umax, \newR}, \dmax)$ by the selection of $\uCutCycle$, we conclude that $\amountBelow(\newT{\CCutCycle}{\uCutCycle}, \dmax) + \weight(\uCutCycle) \geq \amountBelow(\EnewT{\CCutCycle}{\umax, \newR}, \dmax) \geq \sqrt{\weight(\rootC_1)}$.
    By the monotonicity, $\amountBelow(\newT{\CCutCycle}{\uCutCycle}, d) + \weight(\uCutCycle) \geq \amountBelow(\newT{\CCutCycle}{\uCutCycle}, \dmax) + \weight(\uCutCycle) \geq \sqrt{\weight(\rootC_1)}$.
    The lemma is then proven since $\amountBelow(\newT{C}{u}, d) \leq \weight(\rootC) \leq \weight(\rootC_1)$.
\end{proof}

Next, we show that during the cool-down period set by \cactusBreakCycle, the number of vertices saved by the forced greedy choice, together with the protection of the alternative cycle-breaking vertex $\uCutCycle$, is at least a $\tfrac{1}{\sqrt{n}}$ fraction of the number of vertices saved by protecting any possible pair of vertices.

\begin{lemma}[Protection quality in cool-down period]\label{lem:CactusBreakCycleIsGood}
    Let $\uCutCycle$ and $\cool$ be the cycle vertex and cool-down distance returned by \cactusBreakCycle{} in round $i$. 
    Let $i^\prime > i$ be the next round with available firefighters. 
    If $i^\prime -i \leq \cool$, let $u^\prime$ be the first vertex protected by $\algCactus$ in round $i^\prime$. 
    For arbitrary vertices $x$ and $x^\prime$ that are available in round $i$ and round $i^\prime$, respectively, we have $\frac{\weight(\{x, x^\prime\})}{\weight(\{\uCutCycle, u^\prime\})} \leq 2 \sqrt{n}$, where $n$ is the number of vertices in the original graph $G$.
\end{lemma}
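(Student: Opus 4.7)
The plan is to upper-bound $\weight(\{x, x'\})$ using the invariants established when \cactusBreakCycle{} is invoked in round $i$, lower-bound $\weight(\{\uCutCycle, u'\})$ using the cool-down greedy rule together with Lemma~\vref{lem:alternativeIsFine}, and combine the two via a case analysis on whether $x$ and $x'$ jointly break a cycle.

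First I would establish two preliminary bounds. Since round $i$ reached \cactusBreakCycle, the guard on line~\vref{Alg:cactus:line:greedy_with_one_firefighter} failed, so $\weight(\rootC_1) > \sqrt{n}$ and the heaviest candidate $v_1 \in N(\newR) \cup \bigcup_h \rootC_h$ satisfies $\weight(v_1) < \sqrt{\weight(\rootC_1)} \leq \sqrt{n}$. Because every available vertex in a cactus graph is covered by some candidate in $N(\newR) \cup \bigcup_h \rootC_h$, we obtain $\weight(x) < \sqrt{n}$. Symmetrically, the hypothesis $i' - i \leq \cool$ keeps the algorithm in the cool-down branch of round $i'$, so $u'$ is the heaviest candidate of that round and every available $x'$ satisfies $\weight_{i'}(x') \leq \weight_{i'}(u')$.

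Next I would split on whether $\CoveredSet(\{x, x'\})$ strictly contains $\CoveredSet(x) \cup \CoveredSet(x')$; in a cactus graph this happens iff $x$ and $x'$ lie on a common cycle $C$ on different arcs of it. If not, then $\weight(\{x, x'\}) \leq \weight(x) + \weight_{i'}(x') < \sqrt{n} + \weight_{i'}(u')$, and using $\weight(\{\uCutCycle, u'\}) \geq \weight(\uCutCycle) + \weight_{i'}(u') \geq 1 + \weight_{i'}(u')$ gives the ratio at most $\sqrt{n}$. If so, $\weight(\{x, x'\}) \leq \weight(C)$; when $C$ is not a root cycle, its articulation vertex $a$ toward $r$ covers all of $C$, so $\weight(C) \leq \weight(a) \leq \weight(v_1) < \sqrt{n}$ and the bound is immediate. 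The hard case is $C$ a root cycle with $\weight(C) > \sqrt{n}$, where applying Lemma~\vref{lem:CooldownIsFine} with $d = i' - i$ yields
\[
\amountBelow(\newT{C}{x}, i' - i) \leq 2\sqrt{n}\bigl(\amountBelow(\newT{\CCutCycle}{\uCutCycle}, i' - i) + \weight(\uCutCycle)\bigr),
\]
while at most $2(i' - i) \leq 2\dmax$ vertices of $C$ have already burned by round $i'$, accounting for the remainder of $\weight(C)$.

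The main obstacle is translating the right-hand side of Lemma~\vref{lem:CooldownIsFine} into the actual profit $\weight(\{\uCutCycle, u'\})$. The term $\weight(\uCutCycle)$ is directly saved by the algorithm in round $i$. For the term $\amountBelow(\newT{\CCutCycle}{\uCutCycle}, i' - i)$, the key fact is that Lemma~\vref{lem:alternativeIsFine} guarantees at least $\sqrt{\weight(\rootC_1)} - \weight(\uCutCycle)$ such vertices at distance $\geq \dmax \geq i' - i$ from $r$; since $i' - i \leq \cool$, they all remain available in round $i'$, and therefore lie in the covered sets of candidates in $N(r'_i) \cup \bigcup_h \rootC'_h$. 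The greedy maximality of $u'$ then bounds the total count of these vertices by an $O(1)$ multiple of $\weight_{i'}(u')$, allowing us to absorb the right-hand side into $\weight(\{\uCutCycle, u'\})$ and complete the proof.
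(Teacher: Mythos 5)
Your preliminary reductions (every available vertex is covered by some candidate in $N(\newR)\cup\bigcup_h\rootC_h$, hence has weight below $\sqrt{\weight(\rootC_1)}$; the cool-down forces a greedy choice at round $i'$; the only hard case is $x,x'$ on a common root cycle) are sound and broadly parallel the paper's setup, modulo some slippage about which round's weight function you are using (you mix $\weight$, $\weight_{i'}$, and the round-$i$ weights in which the lemma is actually applied). The genuine gap is in your final ``absorption'' step. After invoking Lemma~\vref{lem:CooldownIsFine} you are left needing $\amountBelow(\newT{\CCutCycle}{\uCutCycle}, i'-i) + \weight(\uCutCycle) = O(1)\cdot\weight(\{\uCutCycle,u'\})$, and you argue this from Lemma~\vref{lem:alternativeIsFine} plus ``greedy maximality of $u'$.'' But Lemma~\vref{lem:alternativeIsFine} only gives a \emph{lower} bound on the number of temporally safe vertices, which is the wrong direction, and greedy maximality of $u'$ only says each single candidate's covered set is at most $\weight_{i'}(u')$; the temporally safe vertices behind $\CCutCycle$ may be spread over $\Theta(n)$ different candidates (e.g.\ a long cycle with a pendant path of length $\sqrt{n}$ at every cycle vertex has $\amountBelow \approx n$ while $\weight_{i'}(u')\approx\sqrt{n}$). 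Being far from the fire at round $i'$ does not mean those vertices are counted in $\weight(\{\uCutCycle,u'\})$, which is all the denominator contains. So the claimed $O(1)$ factor is false in general, and your chain only yields a bound of order $n$, not $\sqrt{n}$. (Even if it worked, the $2\sqrt{n}$ factor inherited from Lemma~\vref{lem:CooldownIsFine} would overshoot the stated constant.)

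The paper avoids this entirely by proving a \emph{matched pair} of quantitative bounds tied to $\dmax$ and $\delta=i'-i$: on the algorithm side, the witness vertex $v\in\CoveredSet(\uCutCycle)$ at distance $\dmax$ guarantees that protecting $\uCutCycle$ in round $i$ and the greedy vertex $u'$ in round $i'$ together capture a path of length at least $\dmax-\delta$, so $\weight(\{\uCutCycle,u'\})\geq \dmax-\delta$; on the adversary side, walking along the arc of $C$ from $x'$ to $x$ and using the maximality of $\dmax$ (no edge-removal on any root cycle leaves $\sqrt{\weight(\rootC_1)}$ vertices at depth $>\dmax$) shows the jointly covered arc has length at most $\dmax-\delta$, each of whose vertices covers at most $\sqrt{\weight(\rootC_1)}$ vertices, giving $\weight(\{x,x'\})\leq(\dmax-\delta)\sqrt{\weight(\rootC_1)}$. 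The two $(\dmax-\delta)$ factors cancel, which is what produces the $\sqrt{\weight(\rootC_1)}\leq\sqrt{n}$ ratio. Your proposal is missing both of these length arguments, and without something of that kind the hard case does not go through.
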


\begin{proof}
    Let $\newG$ be the graph at the moment that \cactusBreakCycle is called, and $\hat{G} = \newG\setminus\CoveredSet(\uCutCycle)$.
    Define $\umax$ and $\dmax$ according to the description in $\cactusBreakCycle$.
    Note that the process \cactusBreakCycle is only called when for every vertex $v$, $\weight(v) < \sqrt{\weight(\rootC_1)}$ and $\weight(\rootC_1) > \sqrt{n}$.
    Let $\delta = i^\prime - i$, which is the maximum distance the fire can spread between turns $i$ and $i^\prime$.
    In the following, we show that $\weight(\{\uCutCycle, u^\prime\}) \geq \dmax-\delta$ and $\weight(\{x, x^\prime\}) \leq (\dmax-\delta)\cdot \sqrt{\weight(\rootC_1)}$. 
    Then, the lemma is proven.

    \medskip
    
    \runtitle{A lower bound for $\weight(\{\uCutCycle, u^\prime\})$.}
    By the choice of \cactusBreakCycle, there exists a vertex $v\in \CoveredSet(\uCutCycle)$ at distance $\dmax$ from $\newR$ in $\EnewT{\newG}{\umax,\newR}$ (here we abuse the notation so that $\EnewT{\newG}{\umax,\newR}$ is the graph $\newG$ with edge $(\umax, \newR)$ removed). 
    That is, $\dist(\EnewT{\newG}{\umax, \newR}, \newR, v) \geq \dmax$.
    Since $v \in \CoveredSet(\uCutCycle)$, any path from $v$ to $\newR$ must contain $\uCutCycle$. 
    Thus, $\dist(\EnewT{\newG}{\umax, \newR}, \newR, v) = \dist(\EnewT{\newG}{\umax, \newR}, \newR, \uCutCycle) + \dist(\EnewT{\newG}{\umax,\newR}, \uCutCycle, v)$.

    Recall that $\delta = i^\prime - i$ is the maximum distance the fire can spread between turns $i$ and $i^\prime$.
    When $\uCutCycle$ is protected, at least $\dist(\EnewT{G}{\umax, \newR}, \uCutCycle, v)$ vertices were covered.
    Hence, if $\delta \geq \dist(\EnewT{G}{\umax,\newR}, \newR, \uCutCycle)$, at least $\dmax-\delta$ vertices are saved by protecting $\uCutCycle$ alone.
    
    In contrast, if $\delta < \dist(\EnewT{G}{\umax,\newR}, \newR, \uCutCycle)$, then there is at least one cycle vertex $u$ at distance $\delta$ from $\newR$ in $G\setminus \CoveredSet(\uCutCycle)$ is available to protect for $\algCactus$ at round $i^\prime$.
    That is, $\dist(G\setminus \CoveredSet(\uCutCycle), \newR, u) = \delta$.
    Note that in round $i^\prime$, $u$ has weight at least $\dist(\EnewT{G}{\umax, \newR}, u, \uCutCycle)$. 
    Moreover, by Observation~\ref{obs:algCactus}(\ref{obs:algCactusBreakCycle}), $u^\prime$ is the heaviest available vertex. 
    Thus, by protecting $u^\prime$, at least $\dist(\EnewT{G}{\umax,\newR}, u, \uCutCycle)$ vertices are covered.
    Hence, $\dmax \leq \dist(\EnewT{G}{\umax, \newR}, \newR, v) = \dist(\EnewT{G}{\umax, \newR}, \newR, u) + \dist(\EnewT{G}{\umax, \newR}, u, \uCutCycle) + \dist(\EnewT{G}{\umax, \newR}, \uCutCycle, v)$.
    Since, by the setting, $\dist(\EnewT{G}{\umax, \newR}, \newR, u) = \dist(G\setminus\CoveredSet(\uCutCycle), \newR, u)= \delta$, at least $\dist(\EnewT{G}{\umax, \newR}, u, \uCutCycle)$ vertices were covered in round $i^\prime$, and at least $\dist(\EnewT{G}{\umax, \newR}, \uCutCycle, v)$ vertices are covered in round $i$.
    Therefore, $\weight(\{\uCutCycle, u^\prime\}) \geq \dmax-\delta$.

    \medskip
    
    \runtitle{An upper bound for $\weight(\{x, x^\prime\})$.}
    Note that when \textsc{InprovedBreak} is called, for every vertex $v$, $\weight(v) < \sqrt{\weight(\rootC_1)}$.
    Thus, if $x$ and $x^\prime$ are not in the same cycle, then $\weight(\{x, x^\prime\}) = \weight(x) + \weight(x^\prime) \leq 2\sqrt{\weight(\rootC_1)}$, and the lemma follows. 
    Therefore, from now on, we consider the case where $x$ and $x^\prime$ are contained in the same cycle.
    First note that if $x$ and $x^\prime$ would be contained in a cycle that is not a root cycle, it would have been possible to save more than $\weight(\{x, x^\prime \})$ vertices with just a single protected vertex $\overline{x}$.
    Then, $\weight(\{x, x^\prime \}) \leq \weight(\overline{x}) < \sqrt{\weight(\rootC_1)}$.
    Therefore, $x$ and $x^\prime$ are contained in the same root cycle~$\rootC$.
    Second, note that, by the definition of $\dmax$, for any cycle $\rootC$ and two neighboring cycle vertices $u$ and $\mu$ on $\rootC$, 
    \begin{gather}\label{eq:mu}
        \text{if } \amountBelow(\EnewT{\rootC}{u, \mu}, d) \geq \sqrt{\weight(\rootC_1)} \text{, then } d \leq \dmax.
    \end{gather}
    Since the fire spreads for distance of $i^\prime-i = \delta$ between the two rounds, the distance between $x^\prime$ and $\newR$ is at least $\delta$ in $\newG\setminus \CoveredSet(x)$.

    Denote the vertices on the arc between $x^\prime$ and $x$ without going through $\newR$ as $(x^\prime, u_1, \cdots, u_\ell, x)$, $u_{\ell+2}$ as $x$, and the other neighbor of $x$ as $u_{\ell+2}$, which can be $\newR$.
    Consider every vertex $u_i$ and $\weight(\{x, u_i\})$. 
    If $\weight(\{x, u_i\}) < \sqrt{\weight(\rootC_1)}$ for all $u_i$, then the lemma holds.
    Otherwise, if there are some $u_i$ with $\weight(\{x, u_i\}) \geq \sqrt{\weight(\rootC_1)}$, consider the one that is the closest to $x$ and assume that it is $u_k$.
    Let $d_k$ be the distance from $u_k$ to $\newR$ in $\EnewT{\newG}{x, u_{\ell+2}}$. 
    That is, $d_k = \dist(\EnewT{\newG}{x, u_{\ell+2}}, \newR, u_k)$.
    Then, $\amountBelow(\EnewT{\newG}{x, u_{\ell+2}}, d_k) \geq \weight(x, u_k) \geq \sqrt{\weight(\rootC_1)}$.
    Thus, by equation \ref{eq:mu}, $d_k \leq \dmax$.
    Furthermore, observe that $ k = \dist(\EnewT{\newG}{x, u_{\ell+2}}, x^\prime, u_k) = \dist(\EnewT{\newG}{x, u_{\ell+2}}, \newR, u_k) -  \dist(\EnewT{\newG}{x, u_{\ell+2}}, \newR, x^\prime) = d_k - \delta$.
    
    Note that, by the selection of $u_k$, $\weight(\{u_{k+1}, x\}) < \sqrt{\weight(\rootC_1)}$.
    Then, as is also sketched in Figure~\ref{fig:lemma18}, $\weight(\{x, x^\prime \}) \leq  \left( \sum_{i=0}^k \weight(u_i) \right) + \weight(\{u_{k+1}, x\}) < \left( \sum_{i=0}^k \sqrt{\weight(\rootC_1)} \right) + \sqrt{\weight(\rootC_1)} = (d_k - \delta + 1) \cdot \sqrt{\weight(\rootC_1)} \leq (\dmax - \delta + 1) \cdot \sqrt{\weight(\rootC_1)}$.

    To conclude, by applying the lower bound for $\weight(\{\uCutCycle, u^\prime\})$ and the upper bound for $\weight(\{x, x^\prime \})$, we get $\frac{\weight(\{x, x^\prime\})}{\weight(\{\uCutCycle, u^\prime\})} \leq \frac{(\dmax - \delta + 1) \cdot \sqrt{\weight(\rootC_1)}}{\dmax - \delta} \leq 2\sqrt{\weight(\rootC_1)} \leq 2\sqrt{n}$.
\end{proof}

\subsection{Competitive analysis of $\algCactus$}
For analyzing $\algCactus$ on cactus graphs, we first introduce a concept of \emph{delay}.
A vertex $v$ is \emph{delayed} by another vertex $p$ if when $p$ is protected, $v$ is not covered, but the time that fire reaches $v$ is strictly delayed. 
In other words, a delayer $p$ of $v$ is on every shortest path from $r$ to $v$, but not on every path from $r$ to $v$.
That is, $p$ is $v$'s \emph{delayer} if $p$ is on every shortest path from $r$ to $v$ in $G$ and $v \notin \CoveredSet(p)$.
We generalize this concept and say that a vertex $v$ is delayed by a set of vertices $S$ if $S$ contains at least one delayer of $v$ and $S$ does not cover $v$ (that is, $v\notin \CoveredSet(S))$.
Moreover, given a vertex $p$, we define its \emph{delayed set} $\delayedSet(p)$ by the set of all vertices that are delayed by $p$.
For a set $S$ of vertices, its delayed set $\delayedSet(S)$ is defined by $\bigcup_{p\in S} \delayedSet(p)$.
Given $\vopt \in \optset$, we denote $\delayedSet(\vopt) \cap \optset$ as $\optDelayedSet(\vopt)$. 
That is, $\optDelayedSet(\vopt)$ is the set of vertices protected by $\opt$ and delayed by $\vopt$.

Recall that in algorithm $\algCactus$, at most three vertices can be protected on the same cycle.
We call such cycle vertices \emph{peers}.
In what follows, we may refer to a cycle vertex and its peers.

\paragraph*{Partition $\partition$ of $\optset$}
Given the set $\optset$ of vertices protected by $\opt$, which by Lemma~\ref{lem:non-redundant} is non-redundant.
We partition $\optset$ into the three \emph{types} of parts, where there can be multiple parts of Type $b$ or $c$.

\begin{enumerate}[Type a:]
    \item \runtitle{Non-cycle vertices that are not delayed.}
    This type has at most one part $\parta$, which consists of all non-cycle vertices in $\optset$ that are not in $\delayedSet(\optset)$.
    
    \item \runtitle{One cycle vertex with its delayed set.}
    For each cycle vertex $\vopt_t \in \optset$ that is the only vertex in the cycle protected by $\optset$, $\vopt_t$ and the set of vertices protected by $\optset$ and delayed by $\vopt_t$ form a part.
    That is, given such a vertex $\vopt_t$, they form a Type-$b$ part $\partb_t = \{\vopt_t\} \cup \optDelayedSet(\vopt_t)$.
    Note that $\partb_t$ might be empty for some $t$.
    We call $t$ the \emph{representative vertex} of $\partb_t$.
    
    \item \runtitle{Two cycle vertices with their delayed set.}
    For each cycle $C$ that contains two vertices $\vopt_t$ and $\vopt_{s}$, and at least one of $\vopt_t$ and $\vopt_{s}$ is not delayed by $\optset$, these two vertices and their delayed sets form a Type-$c$ part $\partc_{t,s}$. 
    That is, $\partc_{t,s} = \{\vopt_t, \vopt_{s}\} \cup \optDelayedSet(\vopt_t) \cup \optDelayedSet(\vopt_{s})$.
    For some pair of $t$ and $s$, $\partc_{t, s}$ might be empty.
    We call $\vopt_t$ and $\vopt_s$ the \emph{representative vertices} of $\partc_{t,s}$.
\end{enumerate}

A vertex might be eligible to belong to both a Type-$b$ part and a Type-$c$ part.
In such cases, we give priority to assign it to a Type-$c$ part.
If a vertex $\vopt_t \in \optset$ belongs to at least two cycles, observe that the fire can approach $\vopt_t$ from only one of these cycles. 
Consequently, protecting $\vopt_t$ suffices to save all other cycles entirely, and no additional vertex needs to be protected in those cycles.
The only cycle in which two vertices might be protected is the cycle from which the fire approached $\vopt_t$, hence this cycle determines whether $\vopt_t$ is assigned to a Type-$b$ or a Type-$c$ part.

\begin{lemma}
    The partition $\mathcal{P}^*$ is a proper, cycle-respecting partition of $\optset$.
\end{lemma}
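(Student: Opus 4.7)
The plan is to establish three properties in sequence: every vertex of $\optset$ belongs to some part, distinct parts are disjoint, and for each cycle $C$ of $G$ the set $C \cap \optset$ lies inside a single part. The natural classification of a vertex $\vopt \in \optset$ uses two binary attributes: whether $\vopt$ is a cycle vertex of the original graph $G$, and whether $\vopt \in \delayedSet(\optset)$.

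For covering and disjointness, I would proceed by a case analysis on these attributes. A non-cycle vertex not delayed by $\optset$ goes into $\parta$ and only there, since $\parta$ is the unique part that can contain non-cycle undelayed vertices. A cycle vertex that is alone on its cycle in $\optset$ serves as the unique representative of a Type b part. Two cycle vertices on the same cycle with at least one undelayed together form the unique Type c part indexed by that cycle. The remaining vertices of $\optset$ are delayed vertices that are not themselves representatives; each such vertex must be absorbed into the $\optDelayedSet$ of some representative. To make this assignment canonical, I would exploit the cactus structure: the shortest path from $r$ to any vertex $v$ is unique (with a fixed tie-break), and the $\optset$ delayers of $v$ lie in a linear order along this path, so assigning $v$ to the innermost $\optset$ delayer (the one nearest $v$) yields a unique placement.

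For the cycle-respecting property, I would invoke Lemma~\vref{lem:non-redundant}, which caps $|C \cap \optset|$ at two. The cases $|C \cap \optset| \in \{0, 1\}$ are immediate, and the Type c case with at least one undelayed vertex is direct from the definition. The subtle situation is a cycle $C$ whose two protected vertices $\vopt_t, \vopt_s$ are both delayed. For this case, I would argue that in a cactus graph the shortest paths from $r$ to $\vopt_t$ and to $\vopt_s$ coincide up to the vertex of $C$ nearest $r$, because two cycles share at most one vertex and hence the paths cannot ``branch and merge'' before reaching $C$. Consequently any $\optset$ delayer on this shared prefix delays both vertices, and selecting the innermost such delayer places $\vopt_t$ and $\vopt_s$ in the same part.

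The main obstacle will be rigorously establishing the linear-order property of $\optset$ delayers along a shortest path in a cactus, and deducing from it that the innermost-delayer assignment is well defined and that it respects the cycle-respecting requirement in the ``both delayed'' sub-case. This requires carefully verifying that in a cactus, once a vertex $u$ lies on the unique shortest path from $r$ to $v$ with $v \notin \CoveredSet(u)$, any other delayer must lie either strictly before or strictly after $u$ on this path—ruling out incomparable delayers. Once this structural claim is in hand, the remainder of the verification is a direct check against the definitions of $\parta$, $\partb_t$, and $\partc_{t,s}$.
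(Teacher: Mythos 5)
Your treatment of the cycle-respecting property is sound and in fact more explicit than the paper's: non-redundancy (Lemma~\vref{lem:non-redundant}) caps $|C\cap\optset|$ at two, and the only delicate sub-case is a cycle whose two protected vertices are both delayed, which you handle by showing (via the cut-vertex structure of a cactus) that they share a common delayer whose part absorbs both; this is exactly the claim the paper's proof asserts without detail. The gap is in your disjointness step. The parts of $\partition$ are fixed by the definitions: a delayed vertex $\vopt\in\optset$ lies in $\optDelayedSet(\vopt_t)$ for \emph{every} delayer $\vopt_t\in\optset$, and hence belongs to every part anchored at such a delayer. You do not get to ``assign'' it canonically to one of them; declaring that it goes to its innermost delayer does not prove that the defined parts are pairwise disjoint, it silently replaces $\partition$ by a different partition (which would also be inconsistent with the later charging lemmas, since they use $\optDelayedSet(\vopt_t)$ exactly as defined). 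What actually has to be shown — and what the paper's proof argues, tersely — is structural: in a cactus every delayer is a cycle vertex, every delayed vertex of $\optset$ has a \emph{unique non-delayed} delayer in $\optset$, and parts are anchored only at non-delayed representatives (Type $b$) or at pairs containing a non-delayed vertex (Type $c$), so each delayed vertex is collected by exactly one part.

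Moreover, your tie-breaking rule points the wrong way. Take two cycles in series on the path from $r$ to a protected vertex $v$, with $\opt$ protecting $p_1$ on the cycle nearer $r$ and $p_2$ on the farther cycle, both delaying $v$. The innermost delayer (nearest $v$) is $p_2$, but $p_2$ is itself delayed by $p_1$; the partition attaches both $p_2$ and $v$ to the part of the unique non-delayed delayer $p_1$. So even read as a redefinition, the innermost-delayer rule disagrees with the intended parts, and the linear-order property of delayers along the shortest path that you plan to establish does not by itself resolve the double-membership issue. The missing ingredient is the uniqueness of the non-delayed delayer (which follows from the fact that delayers must strictly delay the fire and that each cycle of a cactus is entered through a single cut vertex), not a canonical choice along the path.
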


\begin{proof}
    By Lemma~\ref{lem:non-redundant}, a non-redundant optimal solution on a cactus graph protects at most two vertices on the same cycle.
    According to the definition of Type-$c$ parts, $\partition$ is cycle-respecting.
    
    By the definition of delayers that cannot cover the delayed vertex, in a cactus graph, a vertex cannot be delayed by a non-cycle vertex. 
    Moreover, every delayed vertex is delayed by exactly one non-delayed vertex $\vopt_p \in \optset$ (since we define the delayers as strictly delaying the time the fire will reach).
    Then, either there is no other vertex protected by $\opt$ in the cycle that contains $\vopt_p$, or the other vertex protected by $\opt$ in this cycle was also delayed by $\vopt_p$.
    Thus, any delayed non-cycle vertex is ``attached'' to its delayer, which, in the former case, is a single cycle vertex on a cycle, and belongs to a Type-$b$ part, and in the latter case, is a pair of protected cycle vertices of which at least one is not delayed, and belongs to a Type-$c$ part.
    Furthermore, consider any pair of cycle vertices on the same cycle that are both protected by $\opt$ and delayed by another cycle vertex in $\optset$.
    These two vertices will also belong to the Type-$b$ part that corresponds to their delayer.
    Therefore, every vertex in $\optset$ is captured by exactly one part.
\end{proof}

\paragraph*{Charging function $\charging$}
We define the charging function $\charging: \partition \to 2^{\algset}$ according to the type of parts.

\runtitle{Type-$\mathbf{a}$ part.}
If the Type-$a$ part $\parta$ exists, it is charged to $\charging(\parta)$, which consists of the vertices $\valg_t$ corresponding to all $\vopt_t \in \parta$.
Moreover, if $\valg_t$ is a cycle vertex, then all of its peers on the same cycle are also included in $\charging(\parta)$.
That is, $\charging(\parta) = \bigcup_{\vopt_t \in \parta}\bigl(\{\valg_t\} \cup \{\text{peers of } \valg_t \mid \valg_t \text{ is a cycle vertex}\}\bigl)$.

\runtitle{Type-$\mathbf{b}$ and Type-$\mathbf{c}$ parts.}
Similar to the Type-$a$ part, for every Type-$\mathbf{b}$ or Type-$\mathbf{c}$ $P$, its $\charging(P)$ consists of all $\valg_t$ corresponding to every $\vopt_t \in P$, together with their peers (if any).
Additionally, for Type-$b$ and Type-$c$ parts $P$, for every $\vopt_t \in P$, we further include two additional sets of vertices:
\begin{itemize}
    \item \texttt{Break-cycle rule}: If $\valg_t$ is protected via \cactusBreakCycle, we also include all vertices covered by $C$, where $C$ is the cycle containing $\valg_t$.
    
    \item \texttt{Cool-down rule}: If $\valg_t$ protects the heaviest vertex at time $t$ due to a positive $\cool$ timer, which was set in the previous break-cycle at time $s < t$, we also include $\valg_s$ and all vertices covered by $C_s$ containing $\valg_s$. 
\end{itemize}

By a rough approximation based on how the charged subset of $\algset$ is defined, each vertex in $\algset$ can only be charged by at most $13$ parts.
By a more careful counting, we get:

\begin{observation}\label{obs:cactusBeta}
    Any vertex in $\algset$ is charged at most $5$ times.
\end{observation}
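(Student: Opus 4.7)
The plan is to bound, for each vertex $\valg \in \algset$, the number of parts of $\partition$ that charge $\valg$ by inspecting the four sources of charging and showing that each contributes only a constant number.

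First I would classify every charging incident of $\valg$ by its responsible witness $\vopt_t \in P$ into four disjoint cases: (i) $\valg_t = \valg$; (ii) $\valg$ is a peer of the cycle vertex $\valg_t$; (iii) $\valg_t$ is the cycle-break vertex chosen by $\cactusBreakCycle$ on a cycle $C_t$ with $\valg \in C_t \cup \CoveredSet(C_t)$; or (iv) $\valg_t$ is a cool-down forced greedy whose initiating break-cycle $\valg_s$ on $C_s$ satisfies $\valg = \valg_s$ or $\valg \in \CoveredSet(C_s)$.

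Next I would bound each case. Case (i) contributes at most $1$ because $\valg_t = \valg$ pins down $t$ uniquely and $\vopt_t$ lies in exactly one part of $\partition$. For cases (iii) and (iv), the key tool is Observation~\vref{obs:algCactus}(\vref{obs:algCactusNoChain}): among all cycles on the path from $r$ to $\valg$ in $G$, at most one is ever broken by $\cactusBreakCycle$. Hence at most one cycle $C_t$ and at most one cycle $C_s$ can qualify. Combined with the fact that each $\cactusBreakCycle$ invocation can spawn at most one cool-down forced greedy protection, because $\cool$ is reset to $0$ as soon as the first greedy protection of the single-firefighter branch fires, cases (iii) and (iv) each yield at most one distinct charging part.

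The main obstacle is case (ii). Observation~\vref{obs:algCactus}(c) caps each cycle of $G$ at three algorithm-protected vertices, so $\valg$ has at most two peers on any single cycle. To reach an overall peer bound of $2$, I would argue that only one cycle through $\valg$ can accumulate algorithm-protected peers: the graph-reduction procedure removes $\valg$ from all subsequent subproblems once it is protected or absorbed into the fire source, and before that the algorithm operates with a single root cycle per round, so peers of $\valg$ can be produced only on the unique cycle of $G$ that serves as a root cycle during the rounds in which $\valg$ is still present. Hence case (ii) contributes at most $2$ parts, and summing over (i)--(iv) gives $1 + 2 + 1 + 1 = 5$. The delicate point, and the main technical obstacle, is this last argument for case (ii), since $\valg$ may a priori lie on many cycles of $G$ as an articulation point; the invariant that only one of them ever hosts algorithm-protected peers is what ultimately keeps the bound constant.
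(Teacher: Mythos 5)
Your decomposition is essentially the paper's own: one charge from the direct correspondence $\valg_t=\valg$, at most two via the peer mechanism, at most one via the \texttt{Break-cycle rule} (each cycle is broken at most once, together with Observation~\vref{obs:algCactus}(\vref{obs:algCactusNoChain})), and at most one via the \texttt{Cool-down rule} (each break spawns at most one forced greedy protection), giving $1+2+1+1=5$ for cycle vertices and $3$ for non-cycle vertices. Your treatment of cases (i), (iii) and (iv) matches the paper's argument.

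The divergence is your case (ii), and there your supporting claims are wrong as stated, even though the conclusion (at most two peer charges) is the one the paper asserts from its three-per-cycle cap. First, $\algCactus$ on a cactus graph may face \emph{several} root cycles in the same round (the algorithm explicitly ranges over $\rootC_1,\rootC_2,\cdots$), so ``the algorithm operates with a single root cycle per round'' is false; moreover a protected vertex need not lie on a root cycle at all, since the candidate set is $N(\newR)\cup\bigcup_h\rootC_h$. Second, peers of $\valg$ can be created \emph{after} $\valg$ has been protected and removed by the graph reduction: the other vertices of the same original cycle on the root side of $\valg$ remain in the reduced graph and may be protected later, so restricting attention to ``rounds in which $\valg$ is still present'' does not work. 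The structural fact you actually need is different: in a cactus, $\valg$ lies on at most one cycle on the root side of itself, and for every cycle whose entry vertex (the unique cycle vertex through which all paths to $r$ pass) is $\valg$, no other vertex of that cycle can ever be protected once $\valg$ is protected --- such vertices are in $\CoveredSet(\valg)$ and only enter the candidate set after $\valg$ burns, which never happens for a protected vertex. Hence all peers of a protected vertex lie on a single cycle, and the three-per-cycle bound of Observation~\vref{obs:algCactus}(c) caps them at two, which restores your count of $5$.
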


\begin{proof}
    We analyze the number of parts a vertex $\valg_t \in \algset$ can belong to.
    By Observation~\ref{obs:algCactus}(\ref{obs:algCactusNoChain}) and the fact that a cycle can only be broken once, any vertex $\valg \in \algset$ can be included in at most one Type-$b$ or Type-$c$ part using the \texttt{Break-cycle rule}.
    Moreover, due to the linear time labeling, at most one Type-$b$ or Type-$c$ part can include $\valg$ via the \texttt{Cool-down rule}.    
    
    If $\valg_t$ is a non-cycle vertex, it can be charged by $\vopt_t$. 
    Hence, together with the parts it can belong to via the \texttt{Break-cycle rule} and the \texttt{Cool-down rule}, $\valg_t$ can belong to at most $3$ parts.
    In contrast, any cycle vertex $\valg_t$ has at most two peers, each of which may contribute a part containing $\valg_t$.
    Together with the parts it is included by \texttt{Break-cycle rule} and \texttt{Cool-down rule}, a cycle vertex $\valg_t$ can belong to at most $5$ parts.
\end{proof}

Next, we analyze the exclusive weight of a part by its type.

\begin{lemma}[Non-cycle vertices in $\optset$ that are not delayed]\label{lem:algCactusNonCycleVertices}
    If $\parta$ is the Type-$a$ part, then $\exclusiveWeight(\parta) \leq \sqrt{n}\cdot \weight(\charging(\parta))$.
\end{lemma}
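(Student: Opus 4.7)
The plan is to bound $\exclusiveWeight(\parta)$ one vertex at a time and then sum. Since every $\vopt_t \in \parta$ is a non-cycle vertex that no vertex in $\optset$ delays, the covered sets $\CoveredSet(\vopt_t)$ in $G$ are pairwise disjoint, so $\exclusiveWeight(\parta) = \sum_{\vopt_t \in \parta} \exclusiveWeight(\vopt_t)$. Moreover, the non-delayed condition forces $\dist(G, r, \vopt_t) \geq t$, and since $\algCactus$'s cycle-breakings can only weakly increase distances, each such $\vopt_t$ is still available in $\Galg_t$, so by Lemma~\vref{lem:exclusiveWeight} we have $\exclusiveWeight(\vopt_t) \leq \weight_t(\vopt_t)$.

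The structural key I would prove first is that whenever $\vopt_t$ is a non-cycle vertex available in $\Galg_t$, the candidate set $N(\newR)\cup(\bigcup_h \rootC_h)$ in $\Galg_t$ contains a vertex $w$ that covers $\vopt_t$: walking from $\vopt_t$ toward $\newR$ in the cactus block-tree, the first cycle vertex encountered lies on a root cycle of $\Galg_t$ and covers $\vopt_t$. Hence $\weight_t(\vopt_t) \leq \weight_t(w) \leq \weight_t(v_1)$, where $v_1$ is the heaviest candidate in $\Galg_t$.

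I would then case-split on which branch of Algorithm~\vref{Alg:cactus} $\algCactus$ takes at time $t$. If $\valg_t$ is chosen via any of the greedy rules (so $\valg_t = v_1$), then $\exclusiveWeight(\vopt_t) \leq \weight_t(v_1) = \weight_t(\valg_t)$. If $\algCactus$ protects the cycle $\rootC_1$ with two firefighters, let $\valg_{t+1}$ be its partner; since the two heaviest weights in $\Galg_t$ sum to at most $\weight_t(\rootC_1)$, I would charge the two corresponding optimal vertices of that round jointly, obtaining $\exclusiveWeight(\vopt_t) + \exclusiveWeight(\vopt_{t+1}) \leq \weight_t(\rootC_1) = \weight_t(\valg_t) + \weight_{t+1}(\valg_{t+1})$. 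Because $\valg_t$ is a cycle vertex here, its peer $\valg_{t+1}$ is automatically included in $\charging(\parta)$ whenever $\vopt_t \in \parta$, so the charge remains inside $\charging(\parta)$ and no $\valg_s$ is paid for twice.

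The main obstacle is the branch where $\algCactus$ invokes $\cactusBreakCycle$: here the immediate gain $\weight_t(\valg_t)$ can be as small as one, so no per-vertex bound of the form $\weight_t(\vopt_t) \leq \weight_t(\valg_t)$ survives. The rescue is the triggering condition itself: this branch fires only when $\weight_t(v_1) < \sqrt{\weight_t(\rootC_1)}$ and $\weight_t(\rootC_1) > \sqrt{n}$, which combined with the candidate bound yields $\weight_t(\vopt_t) \leq \weight_t(v_1) < \sqrt{\weight_t(\rootC_1)} \leq \sqrt{n}$. Since $\weight_t(\valg_t) \geq 1$, this gives $\exclusiveWeight(\vopt_t) \leq \sqrt{n}\cdot \weight_t(\valg_t)$. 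Adding the sub-bounds and using that $\weight(\charging(\parta)) = \sum_{\valg_s \in \charging(\parta)} \weight_s(\valg_s)$ (the per-round covered sets being disjoint), the totals combine to the claimed $\sqrt{n}\cdot \weight(\charging(\parta))$.
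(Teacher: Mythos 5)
Your proposal is correct and takes essentially the same route as the paper's proof: non-delayed implies $\vopt_t$ is available to $\algCactus$ at time $t$, bound $\exclusiveWeight(\vopt_t)\leq\weight_t(\vopt_t)\leq\weight_t(v_1)$ via a covering candidate vertex, case-split on which branch the algorithm takes at time $t$ (greedy, protect $\rootC_1$ with a partner, or \cactusBreakCycle, where the $\sqrt{n}$ factor comes from the trigger $\weight_t(v_1)<\sqrt{\weight_t(\rootC_1)}\leq\sqrt{n}$), and sum against $\charging(\parta)$ with peers included; your explicit covering argument merely fills in a step the paper leaves implicit. Two cosmetic nits that do not affect the bound: the decomposition of $\exclusiveWeight(\parta)$ and the relation $\weight(\charging(\parta))=\sum_{s}\weight_s(\valg_s)$ should both be stated as inequalities in the needed direction (covered sets of non-cycle vertices may nest, and the residual covered sets are dominated by $\CoveredSet(\charging(\parta))$ only because peers are included), and ``$\dist(G,r,\vopt_t)\geq t$'' conflates the protection-count index $t$ with the number of elapsed rounds.
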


\begin{proof}
    Recall that $\parta$ has all non-cycle vertices $\vopt_t \in \optset$ that are not delayed by $\optset$.
    Since $\vopt_t$ is not delayed by $\optset$, it is available as an option to $\algCactus$ at time $t$. 
    Consider the graph $\newG_t$ that $\algCactus$ sees at time $t$. 
    Let $\rootC_1$ and $v_1$ be the heaviest root cycle and the heaviest vertex in $\newG_t$. 
    By Lemma~\ref{lem:exclusiveWeight}, $\exclusiveWeight(\vopt_t) \leq \weight_t(\vopt_t) \leq \weight_t(v_1)$.
    The vertex $\valg_t$ protected by $\algCactus$ at time $t$ can be chosen as 1) the heaviest vertex in $\newG_t$, 2) one of the two vertices for protecting $\rootC_1$, or 3) the vertex $\uCutCycle$ returned by \cactusBreakCycle, which breaks cycle $\CCutCycle$.

    In the first case, since $\valg_t$ is $v_1$, $\exclusiveWeight(\vopt_t) \leq \weight_t(\valg_t)$. 
    In the second case, let $\valg_{t+1}$ be the other vertex that protects $\rootC_1$ together with $\valg_t$.
    Observe that the second case only happens if $\weight_t(\rootC_1) > \weight_t(v_1) + \weight_t(v_2)$, where $v_2$ is the second heaviest vertex in $\newG_t$. 
    In the case where $\vopt_{t+1}$ is not in $\parta$, we have $\exclusiveWeight(\vopt_t) \leq \weight_t(v_1) < \weight_t(\rootC_1) = \weight_t(\{\valg_t, \valg_{t+1}\})$.
    Similarly, in the case where $\vopt_{t+1} \in \parta$, as, by the definition of the partition, neither $\vopt_t$ nor $\vopt_{t+1}$ are cycle vertices, $\exclusiveWeight(\{ \vopt_t, \vopt_{t+1} \}) \leq \weight_t(\{ \vopt_t, \vopt_{t+1} \}) = \weight_t(\vopt_t) + \weight_t(\vopt_{t+1}) < \weight_t(\rootC_1) = \weight_t(\{\valg_t, \valg_{t+1}\})$.
    By the algorithm definition, the third case only happens when $\weight_t(v_1)^2 \leq \weight_t(\rootC_1)$.
    Thus, $\exclusiveWeight(\vopt_t) \leq \weight_t(\vopt_t) \leq \weight_t(v_1) \leq \sqrt{\weight_t(\rootC_1)} \leq \sqrt{n}$.
    Therefore, $\exclusiveWeight(\vopt_t) \leq \sqrt{n} \cdot \weight_t(\valg_t)$.
    
    Recall the charging function; $\charging(\parta) = \bigcup_{\vopt_t \in \parta}\bigl(\{\valg_t\} \cup \{\text{peers of } \valg_t \}\bigl)$.
    In order not to count the weight of vertices in $\parta$ multiple times, we define sets $S_t^* \subseteq \parta$ and $S_t^\texttt{A} \subseteq \charging(\parta)$.
    In the derivations above we have shown for subsets (which can be singleton sets) of $\parta$, which we can charge them to subsets of $\charging(\parta)$.
    However, this subset of $\parta$ might contain multiple vertices $\vopt_t, \vopt_s, \dots$ in $\parta$.
    Given $t< s$, we want to count the weight of this subset (as well as the weight of its image) just once, when we consider the exclusive weight of the vertex protected at time step $t$, and hence not count the weight of this entire subset again whenever we count the exclusive weight of the vertex protected at time step $s$.
    Thus, we define $S_t^* \subseteq \parta$ as follows;
    \[S_t^* = \begin{cases}
    \emptyset & \exists \valg_s : \valg_s \text{ is a peer of } \valg_t \text{ and } s > t.\\
    \{\vopt_t \} \cup \{ \vopt_s \in \parta \mid \valg_s \text{ is a peer of } \valg_t \} & \text{otherwise.}
    \end{cases}\]
    Similarly, we define $S_t^\texttt{A} \subseteq \charging(\parta)$ as;
    \[S_t^\texttt{A} = \begin{cases}
    \emptyset & \exists \valg_s : \valg_s \text{ is a peer of } \valg_t \text{ and } s > t.\\
    \{\valg_t \} \cup \{ \valg_s \mid \valg_s \text{ is a peer of } \valg_t \} & \text{otherwise.}
    \end{cases}\]
    Then, we have $\exclusiveWeight(\parta) = \sum_{t:\vopt_t \in \parta} \exclusiveWeight(S_t) \leq \sum_{t:\vopt_t \in \parta} \sqrt{n} \cdot \weight_t(S_t^\texttt{A}) \leq \sqrt{n} \cdot \weight(\charging(\parta))$.
\end{proof}

For analyzing Type-$b$ or Type-$c$ parts, we first show that the critical case happens when \cactusBreakCycle breaks a cycle in $\newG_t$ and \opt also protects the first vertex it protects on a cycle at time $t$.

\begin{lemma}
    \label{lem:typeBCycleCut}
    If $\partb_{t}$ is a Type-$b$ part, and at time $t$ $\algCactus$ calls \cactusBreakCycle, then $\exclusiveWeight(\partb_{t}) \leq 2\sqrt{n}\cdot \weight(\charging(\partb_{t}))$.
\end{lemma}

\begin{proof}
    Given a Type-$b$ part $\partb_t = \{\vopt_t\} \cup \optDelayedSet(\vopt_t)$, where $\optDelayedSet(\vopt_t)$ is the set of vertices in $\optset$ that are delayed by $\vopt_t$ (that is, $\optDelayedSet(\vopt_t) = \delayedSet(\vopt_t) \cup \optset$).
    Let $C$ be the cycle containing $\vopt_t$. 
    We consider the graph $\newG_t$ that $\algCactus$ sees at time $t$, where $v_1$ and $\rootC_1$ are the heaviest vertex and the heaviest cycle in $\newG_t$, respectively.

    We consider the case where $\valg_t$ is protected to break a cycle. 
    We analyze this case by considering the vertices protected by \opt after time $t$ up until $\algCactus$ finishes the game.
    The procedure \cactusBreakCycle only happens when $\weight_t(v_1) \leq \sqrt{\weight_t(\rootC_1)}$.
    Thus, for any vertex $v$ in $\newG_t$, $\weight_t(v) \leq \weight_t(v_1) \leq \sqrt{\weight_t(\rootC_1)}$.
    Since, by definition of $\partb_t$, $\vopt_t$ is the only vertex in $C$ that is protected by $\opt$, no vertex in $C$ is in $\optDelayedSet(\vopt_t)$. 
    Thus, every vertex in $\optDelayedSet(\vopt_t)$ has weight at most $\sqrt{\weight_t(\rootC_1)}$.
    Hence, if $\weight(\charging(\partb_t)) \geq |\optDelayedSet(\vopt_t)| + 1$, then $\frac{\exclusiveWeight(\partb_t)}{\weight(\charging(\partb_t))} \leq \sqrt{\weight(\rootC_1)} \leq \sqrt{n}$.

    In contrast, consider the scenario where $\weight(\charging(\partb_t)) < |\optDelayedSet(\vopt_t)| + 1$, which happens when $\algCactus$ finishes the game earlier than $\opt$ does.
    Recall that by the \texttt{Break-cycle rule}, $\charging(\partb_t)$ consists of all vertices in $\CoveredSet(C_x) \cap \algset$, where $C_x$ is the cycle broken by $\valg_t$.
    Thus, letting $i_\ell$ be the round $\algCactus$ finishes the game and $i_t$ be the round $\valg_t$ is protected, $\charging(\partb_t)$ covers at least $\amountBelow(\newT{C_x}{\valg_t}, i_\ell-i_t) + \weight(\valg_t)$ vertices.
    Meanwhile, after round $i_\ell$, $\optDelayedSet(\vopt_t)$ covers at most $\amountBelow(\newT{C}{\vopt_t}, i_\ell-i_t)$ since the fire spreads for distance of $i_\ell-i_t$.
    By Lemma~\ref{lem:CooldownIsFine}, letting $\delta$ be the number of vertices protected by \opt between rounds $i_t$ and $i_\ell$, $\frac{\exclusiveWeight(\partb_t)}{\weight(\charging(\partb_t))} \leq \frac{\delta\cdot\sqrt{\weight_t(\rootC_1)} + \amountBelow(\newT{C}{\vopt_t}, i_\ell - i_t)}{\delta + \amountBelow(\newT{C_x}{\valg_t}, i_\ell - i_t) + \weight_t(\valg_t)} \leq 2\sqrt{n}$.
\end{proof}

\begin{lemma}
    \label{lem:typeCCycleCut}
    If $\partc_{t, s}$ is a Type-$c$ part, and at time $t$ $\algCactus$ calls \cactusBreakCycle, then $\exclusiveWeight(\partc_{t, s}) \leq 4\sqrt{n}\cdot \weight(\charging(\partc_{t, s}))$.
\end{lemma}

\begin{proof}
    Given a Type-$c$ part $\partc_{t,s} = \{\vopt_t, \vopt_s\} \cup \optDelayedSet(\vopt_t) \cup \optDelayedSet(\vopt_s)$. 
    Without loss of generality, we assume that $t < s$.
    Let $C$ be the cycle containing $\vopt_t$ and $\vopt_s$.
    We consider the graph $\newG_t$ that $\algCactus$ sees at time $t$, where $v_1$ and $\rootC_1$ are the heaviest vertex and the heaviest cycle in $\newG_t$, respectively.

    Note that when \cactusBreakCycle is called, $\weight_t(v) \leq \sqrt{\weight_t(\rootC_1)}$ for all $v \in V$.
    Then, it follows from Lemma~\ref{lem:exclusiveWeight}, $\exclusiveWeight(v) \leq \sqrt{\weight_t(\rootC_1)}$ for every $v$.
    We split the analysis into two claims: 1) $\exclusiveWeight(\{\vopt_t, \vopt_s\}) \leq 2\sqrt{n}\cdot\weight(\charging(\partc_{t,s}))$ and 2) $\exclusiveWeight(\optDelayedSet(\vopt_t) \cup \optDelayedSet(\vopt_s)) \leq 2\sqrt{n}\cdot\weight(\charging(\partc_{t,s}))$.
    Since the partition is cycle-respecting, we can conclude that $\frac{\exclusiveWeight(\partc_{t, s})}{\weight(\charging((\partc_{t, s}))} \leq \frac{\exclusiveWeight(\{\vopt_t, \vopt_s\})+\exclusiveWeight(\optDelayedSet(\vopt_t) \cup \optDelayedSet(\vopt_s))}{\weight(\charging(\partc_{t, s}))} \leq 4\sqrt{n}$, and the lemma is proven. 
    In the following, we prove the two claims.

    \medskip

    \runtitle{Claim 1: $\exclusiveWeight(\{\vopt_t, \vopt_s\}) \leq \sqrt{n}\cdot\weight(\charging(\partc_{t,s}))$.}
    By plugging $\newG_t$ as $G$, $(\valg_t, \valg_s)$ as $(\uCutCycle, u^\prime)$, and $(\vopt_t, \vopt_s)$ as $(x, x^\prime)$ in Lemma~\ref{lem:CactusBreakCycleIsGood}, we get $\frac{\exclusiveWeight(\{\vopt_t, \vopt_s\})}{\weight(\charging(\partc_{t,s}))} \leq \frac{\weight_t(\{\vopt_t, \vopt_s\})}{\weight_t(\{\valg_t, \valg_s\})} \leq 2\sqrt{n}$.

    \medskip

    \runtitle{Claim 2: $\exclusiveWeight(\optDelayedSet(\vopt_t) \cup \optDelayedSet(\vopt_s)) \leq 2\sqrt{n}\cdot\weight(\charging(\partc_{t,s}))$.}
    Similar to the proof of Lemma~\ref{lem:typeBCycleCut}, we focus on the case where $\algCactus$ finishes the game in round $i_\ell$, which is earlier than \opt does. 
    Assume that \opt protects $\delta$ vertices between round $i_t$ (in which $\valg_t$ is protected) and round $i_\ell$.
    Since no vertex in $C$ is included in $\optDelayedSet(\vopt_t) \cup \optDelayedSet(\vopt_s)$, every vertex covers at most $\sqrt{\weight_t(\rootC)}$ vertices. 
    By construction, $\charging(\partc_{t, s})$ consists of all vertices in $\CoveredSet(\rootC_1) \cup \algset$. 
    Thus, $\charging(\partc_{t, s})$ covers at least $\amountBelow(\newT{C_t}{\valg_t}, i_\ell-i_t) + \weight_t(\valg_t)$ vertices.
    Meanwhile, during or after round $i_\ell$, $\optDelayedSet(\vopt_t)$ protects at most $\amountBelow(\newT{C}{\vopt_t}, i_\ell-i_t)$ vertices since the fire spreads for a distance of $i_\ell-i_t$. 
    Therefore, by Lemma~\ref{lem:CooldownIsFine}, $\frac{\exclusiveWeight(\optDelayedSet(\vopt_t) \cup \optDelayedSet(\vopt_s))}{\charging(\partc_{t, s})} \leq \frac{\delta\cdot \sqrt{\weight_t(\rootC_1)} + \amountBelow(\newT{C}{\vopt_t}, i_\ell-i_t)}{\delta + \amountBelow(\newT{C_t}{\valg_t, i_\ell-i_t}) + \weight_t(\valg_t)} \leq 2\sqrt{n}$.
\end{proof}

We will next focus our attention on the case where the cycle $C$, which contains the representative vertices, is not a cycle from the perspective of $\algCactus$.
That is, $\algCactus$ has, in an earlier round, protected or broken the cycle $C$.

\begin{lemma}\label{lem:cycleIsBroken}
    Given any Type-$b$ or Type-$c$ part $P$ where the representative vertex (or vertices) is in cycle $C$, if $\algCactus$ has broken or protected the cycle $C$ earlier than \opt does, $\exclusiveWeight(P) \leq 4 \sqrt{n} \cdot \weight(\charging(P))$.
\end{lemma}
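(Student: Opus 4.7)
The plan is to reduce the bound to the structural tool already developed, namely Lemma~\vref{lem:CooldownIsFine}, by exploiting the fact that once $\algCactus$ has broken $C$, the cycle $C$ behaves like a tree from $\algCactus$'s viewpoint. Let $s^\ast < t_0$ be the earliest time at which $\algCactus$ breaks $C$ via \cactusBreakCycle, protecting some $\uCutCycle \in C$ with $\CCutCycle = C$, and let $\vopt_{t_0}$ be the (earlier of the) representative vertex or vertices of $P$ on $C$. First, I would establish that $\uCutCycle$ together with all vertices in $\CoveredSet_{s^\ast}(C)$ is contained in $\charging(P)$: either the cool-down timer set at $s^\ast$ is still positive at $t_0$ and the \texttt{Cool-down rule} fires when $\valg_{t_0}$ takes the forced greedy choice, or some later $\valg_t$ with $\vopt_t \in P$ performs another \cactusBreakCycle on $C$ and the \texttt{Break-cycle rule} fires. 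In either case, $\valg_{s^\ast}$ and $\CoveredSet_{s^\ast}(C)$ are added to $\charging(P)$.

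Next, I would split $\exclusiveCoveredSet(P)$ into (i) the vertices inside $\CoveredSet_{s^\ast}(C)$ and (ii) the delayed vertices of $P$ living in subgraphs beyond $C$. For part (i), I would apply Lemma~\vref{lem:CooldownIsFine} with $d = t_0 - s^\ast$, the distance the fire has spread since the break: for every $u \in C$, $\amountBelow(\newT{C}{u}, d) \leq 2\sqrt{n}\cdot(\amountBelow(\newT{\CCutCycle}{\uCutCycle}, d) + \weight(\uCutCycle))$, and the right-hand side is at most $2\sqrt{n}\cdot \weight(\charging(P))$ by the inclusion established above. This bounds $\opt$'s exclusive contribution on $C$ by $2\sqrt{n}\cdot \weight(\charging(P))$. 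For part (ii), each delayed vertex $\vopt_t \in P$ has a corresponding $\valg_t \in \charging(P)$ (together with its peers, if any), and because $C$ is already broken, a tree-style charging argument analogous to Lemma~\vref{lem:algAlmostPb} combined with Observation~\vref{obs:algCactus} gives $\exclusiveWeight(\vopt_t) \leq \weight(\valg_t)$ plus the weight of a peer, summing to at most $\weight(\charging(P))$.

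Combining the two contributions yields $\exclusiveWeight(P) \leq (1 + 2\sqrt{n})\cdot \weight(\charging(P))$, where the $+1$ term absorbs the tree-style charging for the delayed vertices in (ii) and the $2\sqrt{n}$ term is dictated by the cycle-break comparison of Lemma~\vref{lem:CooldownIsFine}. The main obstacle will be the case analysis needed to guarantee the inclusion $\{\uCutCycle\} \cup \CoveredSet_{s^\ast}(C) \subseteq \charging(P)$ in every subcase, particularly the residual situation in which the cool-down expires before $t_0$ and no subsequent $\valg_t$ re-breaks $C$. In that corner case I would argue that the fire has already reached far enough into $C$ that $\amountBelow(\newT{C}{u}, d)$ is negligible for every cycle vertex $u \in C$, so $\opt$'s alternative break on $C$ simply cannot save enough vertices to violate the bound. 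Once this bookkeeping is pinned down, the analytic step reduces to a direct invocation of Lemma~\vref{lem:CooldownIsFine}.
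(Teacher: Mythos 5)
There is a genuine gap in your argument, and the route you chose is also much heavier than what the statement needs. First, you assume the earlier break of $C$ was performed by \cactusBreakCycle, so that a vertex $\uCutCycle$, a tolerance $\dmax$ and a cool-down timer exist for $C$. The hypothesis only says that $\algCactus$ protected some vertex of $C$ before \opt did; this can equally happen through a greedy (heaviest-vertex) protection, since cycle vertices are among the greedy candidates, and then there is no $\uCutCycle$ or $\dmax$ for Lemma~\vref{lem:CooldownIsFine} to latch onto. Second, the inclusion $\{\uCutCycle\}\cup(\CoveredSet_{s^\ast}(C)\cap\algset)\subseteq\charging(P)$ is not guaranteed: the \texttt{Break-cycle rule} only adds the covered set of the cycle broken at a time $t$ with $\vopt_t\in P$ (which need not be $C$), and the \texttt{Cool-down rule} requires the timer set at $s^\ast$ to still be positive at $t_0$; but the timer is set to zero by any intervening single-firefighter greedy protection (line~\vref{Alg:cactus:line:greedy_with_one_firefighter}), possibly made on behalf of a different part, so neither rule need fire for $C$. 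Your fallback for that corner case --- that the fire must then have penetrated so deep into $C$ that $\amountBelow(\newT{C}{u},d)$ is negligible --- is unsupported precisely because the timer can die without the fire having advanced. Two further problems: you measure the fire's progress by the difference of time labels $t_0-s^\ast$, whereas the fire spreads per round and several protections can share a round; and the tree-style charging in your part (ii) presumes the delayed vertices of $P$ are still available in $\algCactus$'s residual graph, which can fail exactly because $\algCactus$ broke $C$ on the other side, so those vertices may already be burned from the algorithm's viewpoint.

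The paper's own proof needs none of this machinery. Let $t$ be the time \opt first protects on $C$. Every vertex of $P$ (representatives and their delayed sets) lies in $\CoveredSet(C)$, so $\exclusiveWeight(P)\le\weight_t(C)$; and since $\algCactus$ already split $C$ into two arcs $A_1,A_2$ at some time $s<t$, a three-case check of $\valg_t$ suffices: if $\valg_t$ is the greedy choice, then $\weight_t(\valg_t)\ge\max\{\weight_t(A_1),\weight_t(A_2)\}\ge\weight_t(C)/2$; if $\valg_t$ breaks a cycle, then at that moment every vertex, hence each arc, has weight at most $\sqrt{n}$, so $\weight_t(C)\le\weight_t(\valg_s)+2\sqrt{n}\le(1+2\sqrt{n})\cdot\weight(\charging(P))$; and if $\valg_t$ protects further vertices of $C$, then $\weight(C)\le\weight(\charging(P))$. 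The appeal to Lemma~\vref{lem:CooldownIsFine} and the cool-down bookkeeping is reserved in the paper for Lemmas~\vref{lem:algCactusOneCVertex} and~\vref{lem:algCactusTwoCVertices}, i.e., for the complementary case in which the algorithm has \emph{not} broken $C$ before \opt acts; importing it here both misplaces the tool and leaves the gaps above.
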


\begin{proof}
    Let the (first) vertex in $C$ protected by $\opt$ have time label $t$. 
    We consider the graph $\newG_t$ that $\algCactus$ sees at time $t$, and the heaviest vertex $v_1$ in $\newG_t$.
    
    Since $P$ contains the covered set of its representative vertex (or vertices, in the case of Type-$c$ parts), whose representative lies in cycle $C$, no algorithm can save more vertices by protecting $P$ than by protecting $C$.
    Consequently, $\weight_t(C) \geq \weight_t(P)$.

    Let $s < t$ be the time when $\algCactus$ broke or protected the cycle $C$ by protecting $\valg_s$ or $\valg_s$ and $\valg_{s+1}$.
    In the latter case, part $P$ is in the covered set $\CoveredSet(\{\valg_s, \valg_{s+1} \})$ and therefore $\exclusiveWeight(P) = 0$.
    Hence we assume that $\algCactus$ chose to break the cycle at time $s$.
    The cycle $C$ is broken by $\algCactus$ into two arcs, $A_1$ and $A_2$ (at most one of them can be empty).
    We consider three possibilities of $\valg_t$: 1) $\valg_t$ is chosen as the heaviest vertex, 2) $\valg_t$ is one of the two vertices protecting $\rootC_1$, or 3) $\valg_t$ breaks a cycle. 

    If $\algCactus$ protects $v_1$ greedily, since the cycle-break at vertex $\valg_s$ has split $C$ into at most two arcs, $\weight_t(\valg_t) = \weight_t(v_1) \geq \max\{\weight_t(A_1), \weight_t(A_2)\} \geq \weight_t(C)/2 \geq \weight_t(P)/2 \geq \exclusiveWeight(P)/2$.
    
    Note that the second case only happens when $\weight_t(v_1) + \weight_t(v_2) < \weight_t(\rootC_1)$.
    The other vertex that protects $\rootC_1$ is either $\valg_{t-1}$ or $\valg_{t+1}$. 
    Assume without loss of generality that it is $\valg_{t+1}$.
    It follows that $\weight_t(\{ \valg_t, \valg_{t+1} \}) = \weight_t(\rootC_1) > \weight_t(v_1) + \weight_t(v_2) \geq \weight_t(\valg_s) + \weight_t(A_1) + \weight_t(A_2) = \weight_t(C) \geq \weight_t(P) \geq \exclusiveWeight(P)$.
    Where the second inequality follows from that the sum of the weights vertices on on the cycle $C$ in the original graph, that lie next to the root at time $t$ is at most the sum of the weights of the remaining parts of the arcs $A_1$ and $A_2$ and the weight of the vertex used to split the cycle $\valg_s$ (if this vertex is still unburned at time $t$).
    Recall that, as $\valg_t$ and $\valg_{t+1}$ are on the same cycle, and $\charging(P)$ contains all peers of $\valg_t$, it follows that $\valg_{t+1} \in \charging(P)$.
    Then $\weight(\charging(P)) \geq \weight_t(\{\valg_t, \valg_{t+1}\}) \geq \exclusiveWeight(P)$.

    In the third case, where $\algCactus$ calls \cactusBreakCycle, it follows from Lemma~\ref{lem:typeBCycleCut} and Lemma~\ref{lem:typeCCycleCut} that $\exclusiveWeight(P) \leq 4 \sqrt{n} \cdot \weight(\charging(P))$.
    
    Therefore, in either case, $\exclusiveWeight(P) \leq 4 \sqrt{n} \cdot \weight(\charging(P))$.
\end{proof}

\begin{lemma}[One cycle vertex with its delayed set]\label{lem:algCactusOneCVertex}
    If $\partb_{t}$ is a Type-$b$ part, then $\exclusiveWeight(\partb_{t}) \leq 4 \sqrt{n} \cdot \weight(\charging(\partb_{t}))$.
\end{lemma}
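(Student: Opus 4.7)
The plan is to split on whether $\algCactus$ has broken the cycle $C$ containing $\vopt_t$ before time $t$. In the first case, Lemma~\vref{lem:cycleIsBroken} gives the $(1+2\sqrt{n})$ bound directly. In the remaining case, $C$ is intact in $\newG_t$, so $\vopt_t$ remains available to $\algCactus$, and since in a cactus graph every vertex delayed by the cycle vertex $\vopt_t$ lies in $\CoveredSet(C)$, Lemma~\vref{lem:exclusiveWeight} gives $\exclusiveWeight(\partb_t)\le \weight_t(\partb_t)\le \weight_t(C)$. The task then reduces to bounding $\weight_t(C)$ by $(1+2\sqrt{n})\cdot \weight(\charging(\partb_t))$ via a case analysis on $\algCactus$'s action at time $t$, following the pattern of the proof of Lemma~\vref{lem:algCactusNonCycleVertices}.

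Let $v_1$ and $\rootC_1$ denote the heaviest frontier vertex and the heaviest root cycle in $\newG_t$. A preliminary observation is that $\weight_t(C)\le \max\{\weight_t(\rootC_1),\weight_t(v_1)\}$: either $C$ is, or sits behind, some root cycle (in which case it is bounded by $\weight_t(\rootC_1)$), or it lies inside the subtree rooted at some non-cycle neighbor of $r$ and is bounded by $\weight_t(v_1)$. When $\valg_t$ is the heaviest vertex, or one of two vertices used to save $\rootC_1$, the charging set already contains $\valg_t$ and, when applicable, its peer $\valg_{t+1}$, so $\weight(\charging)\ge \weight(\valg_t)$ or $\weight(\charging)\ge \weight_t(\rootC_1)$; combining these with the triggering inequalities of $\algCactus$ (notably $\weight_t(v_1)^2\ge \weight_t(\rootC_1)$, together with the trivial bound $\weight_t(C)\le n \le \sqrt{n}\cdot\weight_t(v_1)$ when $\weight_t(v_1)\ge\sqrt{n}$) yields $\weight_t(C)\le \sqrt{n}\cdot \weight(\charging)$. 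When $\valg_t$ is produced by \cactusBreakCycle on some cycle $C_t$, the break-cycle rule augments $\charging$ with the covered set of $C_t$, and applying Lemma~\vref{lem:CooldownIsFine} to $C$ with any vertex $u\in C$ at distance $0$ translates $\weight_t(C)$ into at most $2\sqrt{n}$ times this augmentation plus an additive $O(\sqrt{n})$ contribution from $\weight(\uCutCycle)$. Finally, when $\valg_t$ is forced greedy by a positive cool-down timer set at an earlier break time $s$, the cool-down rule inserts $\valg_s$ and the cycle $C_s$ into $\charging$, and Lemma~\vref{lem:CactusBreakCycleIsGood} bounds $\weight_t(C)$ by $\sqrt{n}\cdot \weight(\{\valg_s,\valg_t\})\le \sqrt{n}\cdot\weight(\charging)$.

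The principal technical obstacle is the cool-down sub-case, where $\weight_t(C)$ must be realised as $\weight(\{x,x^\prime\})$ for a pair of vertices available at rounds $s$ and $t$ respectively in order for Lemma~\vref{lem:CactusBreakCycleIsGood} to apply. I intend to witness this by choosing $x$ to be a cycle vertex of $C$ that would save (most of) $C$ when available at round $s$, and $x^\prime$ the analogous residual witness at round $t$; the accompanying bookkeeping must then verify that the $\sqrt{n}$-factor blow-up from Lemma~\vref{lem:CactusBreakCycleIsGood} composes correctly with the per-vertex weight ceiling of $\sqrt{n}$ imposed by the break-cycle regime at round $s$, so that the terms collapse into the promised $(1+2\sqrt{n})$ constant rather than a worse polynomial-in-$\sqrt{n}$ factor.
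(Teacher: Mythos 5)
Your overall skeleton matches the paper's: invoke Lemma~\vref{lem:cycleIsBroken} when $\algCactus$ has already broken $C$, otherwise bound $\exclusiveWeight(\partb_t)\leq \weight_t(C)$ and do a case analysis on how $\valg_t$ is chosen; the greedy and save-the-cycle cases are handled essentially as in the paper. However, your treatment of the \cactusBreakCycle{} case has a genuine gap. Applying Lemma~\vref{lem:CooldownIsFine} ``at distance $0$'' makes the denominator $\amountBelow(\newT{\CCutCycle}{\uCutCycle},0)+\weight(\uCutCycle)$, which counts \emph{all} vertices currently unburned in the residual subgraph of the broken cycle. The \texttt{Break-cycle rule} only puts $\CoveredSet(C_t)\cap\algset$ into $\charging(\partb_t)$, and $\weight(\charging(\partb_t))$ is the number of vertices the algorithm actually \emph{saves} there; vertices that are merely unburned at time $t$ may still burn later (e.g., if few firefighters arrive), so the inequality $\weight(\charging(\partb_t))\gtrsim \amountBelow(\newT{C_t}{\uCutCycle},0)$ you implicitly need is false in general. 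The paper avoids this by evaluating $\amountBelow$ at $d=i_\ell-i_t$, where $i_\ell$ is the round at which $\algCactus$'s game ends --- only then does ``unburned'' coincide with ``saved'' --- and by first splitting off the regime where $\algCactus$ matches each of \opt's later protections one-to-one, in which case each vertex of $\optDelayedSet(\vopt_t)$ has weight at most $\sqrt{\weight_t(\rootC_1)}$ and the ratio is at most $\sqrt n$ without any counting of the cycle's subtree. Your argument as stated would need exactly this end-of-game bookkeeping to go through.

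For the cool-down subcase you take a genuinely different (and harder) route than the paper. You want to realise $\weight_t(C)$ as $\weight(\{x,x'\})$ and invoke Lemma~\vref{lem:CactusBreakCycleIsGood}; you correctly flag the witness construction and the hypothesis that $u'$ is the first vertex protected in round $i'$ as unresolved. The paper does something much simpler: it shows directly that $\weight_t(\rootC_1)\leq\sqrt n$ by contradiction --- if the heaviest root cycle at time $t$ had weight exceeding $\sqrt n$, then already at the break time $s$ there was an available vertex of weight exceeding $\sqrt n$ on the path towards it, so by line~\vref{Alg:cactus:line:greedy_with_one_firefighter} the break would only have been triggered if $\weight(C')>n$, which is impossible. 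Hence $\exclusiveWeight(\partb_t)\leq\weight_t(\rootC_1)\leq\sqrt n\leq\sqrt n\cdot\weight(\charging(\partb_t))$, with no appeal to Lemma~\vref{lem:CactusBreakCycleIsGood} at all (that lemma is reserved for Type-$c$ parts). So: fix the break-cycle case along the paper's lines, and replace the cool-down argument by the contradiction above (or complete your witness construction, which costs considerably more work for no gain in the constant).
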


\begin{proof}
    Given a Type-$b$ part $\partb_t = \{\vopt_t\} \cup \optDelayedSet(\vopt_t)$, where $\optDelayedSet(\vopt_t)$ is the set of vertices in $\optset$ that are delayed by $\vopt_t$ (that is, $\optDelayedSet(\vopt_t) = \delayedSet(\vopt_t) \cup \optset$).
    Let $C$ be the cycle containing $\vopt_t$.
    Since $\vopt_t$ is the only vertex protected by \opt in $C$, $\weight_t(C) \geq \weight_t(\partb_t)$.    
    We consider the graph $\newG_t$ that $\algCactus$ sees at time $t$, where $v_1$ and $\rootC_1$ are the heaviest vertex and the heaviest cycle in $\newG_t$, respectively.
    By Lemma~\ref{lem:cycleIsBroken}, we focus on the case where the cycle $C$ is in $\newG_t$.
    
    By Observation~\ref{obs:algCactus}(\ref{obs:algCactusNoChain}) and the partition $\partition$, $\vopt_t$ is not delayed by $\opt$.
    Thus, $\vopt_t$ is available in $\newG_t$.
    Since every vertex in $\optDelayedSet$ is also in $\CoveredSet(C)$, 
    \begin{gather}\label{eq:PandMaxC}
    \exclusiveWeight(\partb_t) = \exclusiveWeight(\{\vopt_t\}\cup \optDelayedSet(\vopt_t)) \leq \weight_t(\{\vopt_t\}\cup \optDelayedSet(\vopt_t)) \leq \weight_t(C) \leq \weight_t(\rootC_1).
    \end{gather}
    There are three possibilities on how $\valg_t$ is protected by $\algCactus$: 1) $\valg_t$ is chosen as the heaviest vertex since $\valg_t = v_1$, 2) $\valg_t$ saves $\rootC_1$ together with $\valg_{t+1}$, or 3) $\valg_t$ is chosen by \cactusBreakCycle to break a cycle.

    The first case can be further split into three subcases: 
    \begin{itemize}
        \item If $\algCactus$ protects the heaviest vertex $v_1$ in $\newG_t$ because $\weight_t(v_1) \geq \sqrt{\weight_t(\rootC_1)}$, then $\weight(\charging(\partb_t)) \geq \weight_t(\charging(\partb_t)) \geq \weight_t(\valg_t) = \weight_t(v_1) \geq \sqrt{\weight_t(\rootC_1)}$.

        \item If $\algCactus$ protects the heaviest vertex $v_1$ because $\weight(\rootC_1) \leq \sqrt{n}$, then $\frac{\exclusiveWeight(\partb_t)}{\weight(\charging(\partb_t))} \leq \exclusiveWeight(\partb_t) \stackrel{\text{Eq. }(\ref{eq:PandMaxC})}\leq \weight_t({\rootC_1}) \leq \sqrt{n}$.

        \item $\algCactus$ can protect the heaviest vertex $v_1$ due to the non-zero cool-down counter that was set in an earlier time $s < t$ by $\valg_s$ breaking a cycle $C^\prime$.
        In this case, by construction, $\charging(\partb_t)$ contains $\valg_s$ and the vertices in $\CoveredSet_s(C^\prime) \cap \algset$.
        We make a case distinction based on whether root cycle $\rootC_1$ (in $\newG_t$) was already a root cycle in $\newG_s$.
        When $\rootC_1$ was not yet a root cycle in $\newG_s$, similar to the previous case, $\frac{\exclusiveWeight(\partb_t)}{\weight(\charging(\partb_t))} \leq \exclusiveWeight(\partb_t) \stackrel{\text{Eq. }(\ref{eq:PandMaxC})}\leq \weight_t({\rootC_1}) \leq \sqrt{n}$.
        
        The last inequality holds by contradiction:
        Suppose to the contrary that $\weight_t({\rootC_1}) > \sqrt{n}$.
        At time $s$ when $\algCactus$ broke cycle $C^\prime$, it could have also protected a vertex $v_k$ that is on the path from $r$ to $\rootC_1$.
        Hence $\weight(v_k) > \sqrt{n}$.
        Since $\algCactus$ chose to break cycle $C^\prime$, by line~\ref{Alg:cactus:line:greedy_with_one_firefighter} of $\algCactus$, $\sqrt{\weight(C^\prime)} > \weight(v_1) \geq \weight(v_k) > \sqrt{n}$.
        Thus $\weight(C^\prime) > n$, and we have reached a contradiction.

        Then, in the case where $\rootC_1$ was already a root cycle in $\newG_s$, we define the vertices $u_\ell$ as the left neighbor (on $\rootC_1$) of the root at time $s$, and $u_r$ as the right neighbor (on $\rootC_1$) of the root at time $t$.
        Then, by Lemma~\ref{lem:CactusBreakCycleIsGood}, $\frac{\exclusiveWeight(\partb_t)}{\weight_t(\charging(\partb_t))} \leq \frac{\weight_t(C)}{\weight(\{ \valg_s, \valg_t \})} \leq \frac{\weight_t(\rootC_1)}{\weight(\{ \valg_s, \valg_t \})} \leq \frac{\weight(\{ u_\ell, u_r \} )}{\weight(\{ \valg_s, \valg_t \})} \leq 2\sqrt{n}$.
    \end{itemize}

    Consider the case where $\valg_t$ protects the heaviest cycle $\rootC_1$ together with another vertex $\valg_{t+1}$.
    By construction, $\charging(\partb_t)$ also contains $\valg_{t+1}$. 
    Thus, $\weight(\charging(\partb_t)) \geq \weight(\rootC_1)$.
    Hence, by Inequality~(\ref{eq:PandMaxC}), $\exclusiveWeight(\partb_t) \leq \weight_t(\rootC_1) \leq \weight(\rootC_1) \leq \weight(\charging(\partb_t))$.

    For the third case, by Lemma~\ref{lem:typeBCycleCut}, $\exclusiveWeight(\partb_t) \leq 2\sqrt{n} \cdot  \weight(\charging(\partb_t))$.

    Together with Lemma~\ref{lem:cycleIsBroken}, the proof is completed.
\end{proof}

Similar to Lemma~\ref{lem:algCactusOneCVertex} and with a slight modification due to the different characterization of the parts, we prove the following bound for Type-$c$ parts:

\begin{lemma}[Two cycle vertices with their delayed set]\label{lem:algCactusTwoCVertices}
    If $\partc_{t, s}$ is a Type-$c$ part, then $\exclusiveWeight(\partc_{t, s}) \leq 4\sqrt{n}\cdot \weight(\charging(\partc_{t, s}))$.
\end{lemma}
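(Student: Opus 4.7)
The plan is to adapt the three-case analysis of Lemma~\vref{lem:algCactusOneCVertex} to the two-representative setting of Type-$c$ parts. By Lemma~\vref{lem:cycleIsBroken}, the case in which $\algCactus$ has already broken the cycle $C$ containing $\vopt_t$ and $\vopt_s$ before time $\min\{t,s\}$ yields a bound of $(1+2\sqrt{n})\cdot\weight(\charging(\partc_{t,s}))$, which is dominated by $3\sqrt{n}\cdot\weight(\charging(\partc_{t,s}))$. Hence from now on assume $C$ is still a cycle in $\newG_{\min\{t,s\}}$. Without loss of generality take $t<s$. By the Type-$c$ definition, at least one of the representatives is not delayed by $\optset$; by relabeling if necessary assume $\vopt_t$ is undelayed, so $\vopt_t$ is available in $\newG_t$. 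Because $\partition$ is cycle-respecting and both representatives lie on $C$, every vertex of $\partc_{t,s}$ (including the attached delayed sets) lies in $\CoveredSet_t(C)$, giving $\exclusiveWeight(\partc_{t,s}) \leq \weight_t(C) \leq \weight_t(\rootC_1)$, where $\rootC_1$ is the heaviest root cycle in $\newG_t$.

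I then perform the same three-case analysis on how $\valg_t$ is selected by $\algCactus$. Cases~1 and~2, in which $\valg_t$ is the greedy heaviest vertex or $\valg_t$ together with $\valg_{t+1}$ saves $\rootC_1$, go through essentially verbatim from the Type-$b$ proof: $\charging(\partc_{t,s})$ contains $\valg_t$ and (where applicable) $\valg_{t+1}$, their peers, and any break-cycle or cool-down contributions, so the same calculations bound $\exclusiveWeight(\partc_{t,s})$ by $\sqrt{n}\cdot\weight(\charging(\partc_{t,s}))$. The key case is Case~3, in which $\valg_t$ is returned by \cactusBreakCycle. Here I apply Lemma~\vref{lem:CooldownIsFine} with $u$ ranging over cycle vertices on $C$ to bound the portion of $\partc_{t,s}$ that \opt covers on $C$ (and attached delayed sub-cacti) against what $\algCactus$ covers via $\valg_t$ and its cool-down-forced follow-up; this contributes a factor of at most $2\sqrt{n}$. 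For the extra contribution arising from $\vopt_s$ and $\optDelayedSet(\vopt_s)$ (the ``second'' representative on the same cycle), I use Lemma~\vref{lem:CactusBreakCycleIsGood} together with the inclusion in $\charging(\partc_{t,s})$ of the cool-down-forced vertex and $\valg_s$ with its peers, contributing an additional factor of at most $\sqrt{n}$, so summing yields the $3\sqrt{n}$ bound.

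The main obstacle is Case~3: because both $\vopt_t$ and $\vopt_s$ lie on the same cycle $C$, \opt may cover essentially all of $\CoveredSet(C)$ together with two separate delayed sub-cacti hanging off $C$, whereas $\algCactus$ breaks at most one cycle per \cactusBreakCycle invocation, and the cycle it breaks may not be $C$. The \texttt{Break-cycle rule} ensures $\charging(\partc_{t,s})$ contains every algorithm-protected vertex in the covered set of the cycle $\CCutCycle$ actually broken by $\valg_t$, and the \texttt{Cool-down rule} ensures the forced greedy vertex at the first round after time $t$ within the cool-down period is also included; this is precisely what Lemmas~\vref{lem:CooldownIsFine} and~\vref{lem:CactusBreakCycleIsGood} need in order to absorb both representatives, while the bound from Observation~\vref{obs:cactusBeta} on the multiplicity of charges prevents inflation. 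Combining the contributions yields $\exclusiveWeight(\partc_{t,s}) \leq 3\sqrt{n}\cdot \weight(\charging(\partc_{t,s}))$, as desired.
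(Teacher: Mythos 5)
Your outline tracks the paper's proof closely: reduce to the case where the cycle $C$ containing $\vopt_t,\vopt_s$ is still intact via Lemma~\vref{lem:cycleIsBroken}, bound $\exclusiveWeight(\partc_{t,s})\leq \weight_t(C)\leq\weight_t(\rootC_1)$, run the same three-case analysis on $\valg_t$ with Cases~1 and~2 inherited from the Type-$b$ proof, and in the break-cycle case split the part into two pieces bounded by $\sqrt{n}$ and $2\sqrt{n}$. However, in that crucial third case you have assigned the two key lemmas to the wrong pieces, and as stated neither application goes through. Lemma~\vref{lem:CactusBreakCycleIsGood} only compares the weight of a \emph{pair} of vertices $(x,x')$, protected in the break-cycle round and in the next round with firefighters inside the cool-down window, against $\weight(\{\uCutCycle,u'\})$; it cannot absorb $\{\vopt_s\}\cup\optDelayedSet(\vopt_s)$, since the delayed set may consist of many vertices protected over many later rounds. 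Conversely, Lemma~\vref{lem:CooldownIsFine} only bounds $\amountBelow$-counts, i.e.\ how many vertices can still be saved after breaking a cycle at some vertex; it does not by itself bound the arc of $C$ (plus hanging subtrees) that \opt saves jointly by the two protections $\vopt_t$ and $\vopt_s$, which is exactly the quantity your first piece includes.

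The paper does it the other way around. Claim~1: $\exclusiveWeight(\{\vopt_t,\vopt_s\})\leq\sqrt{n}\cdot\weight(\charging(\partc_{t,s}))$ by instantiating Lemma~\vref{lem:CactusBreakCycleIsGood} with $(\valg_t,\valg_s)$ as $(\uCutCycle,u')$ and $(\vopt_t,\vopt_s)$ as $(x,x')$ — this is the tool for the pair on the cycle. Claim~2: $\exclusiveWeight(\optDelayedSet(\vopt_t)\cup\optDelayedSet(\vopt_s))\leq 2\sqrt{n}\cdot\weight(\charging(\partc_{t,s}))$, whose substance you are missing: while $\algCactus$ is still playing, every vertex \opt protects has weight at most $\sqrt{\weight_t(\rootC_1)}$ (the precondition of \cactusBreakCycle) and is matched one-to-one with an algorithm protection, so only the tail after $\algCactus$ finishes, which is at most $\amountBelow(\newT{C}{\vopt_t}, i_\ell-i_t)$, needs Lemma~\vref{lem:CooldownIsFine} against $\amountBelow(\newT{C_t}{\valg_t}, i_\ell-i_t)+\weight_t(\valg_t)$, which the \texttt{Break-cycle rule} puts inside $\charging(\partc_{t,s})$. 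Swapping your assignments and adding this matching/finish-time argument recovers the intended $\sqrt{n}+2\sqrt{n}=3\sqrt{n}$ bound; as written, the plan has a genuine gap at precisely the step where the work lies.
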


\begin{proof}
    Given a Type-$c$ part $\partc_{t,s} = \{\vopt_t, \vopt_s\} \cup \optDelayedSet(\vopt_t) \cup \optDelayedSet(\vopt_s)$. 
    Without loss of generality, we assume that $t < s$.
    Let $C$ be the cycle containing $\vopt_t$ and $\vopt_s$.
    We consider the graph $\newG_t$ that $\algCactus$ sees at time $t$, where $v_1$ and $\rootC_1$ are the heaviest vertex and the heaviest cycle in $\newG_t$, respectively.
    By Lemma~\ref{lem:cycleIsBroken}, we focus on the case where the cycle $C$ is in $\newG_t$.

    Since every vertex in $\optDelayedSet(\vopt_t) \cup \optDelayedSet(\vopt_s)$ is also in the covered set of $C$, $\exclusiveWeight(\partc_{t,s}) = \exclusiveWeight(\{\vopt_t, \vopt_s\} \cup \optDelayedSet(\vopt_t) \cup \optDelayedSet(\vopt_s)) \leq \weight_t(C) \leq \weight_t(\rootC_1)$.

    There are three cases of $\valg_t$: 1) $\valg_t$ is selected as the heaviest vertex, 2) $\valg_t$ protects a cycle with another vertex $\valg_{t+1}$, or 3) $\valg_t$ breaks a cycle.
    The analysis of the first and the second cases is the same as the proof of Lemma~\ref{lem:algCactusOneCVertex}, where $\exclusiveWeight(\partc_{t,s}) \leq \sqrt{n} \cdot \weight(\charging(\partc_{t,s}))$.
    For the third case, where $\valg_t$ breaks a cycle $C_t$, it follows from Lemma~\ref{lem:typeCCycleCut} that $\exclusiveWeight(\partc_{t,s}) \leq 4\sqrt{n} \cdot \weight(\charging(\partc_{t,s}))$.
\end{proof}

\begin{proof}[Proof of Theorem~\ref{Thm:Cactus}]
    By Lemmas~\ref{lem:algCactusNonCycleVertices},~\ref{lem:algCactusOneCVertex}, and~\ref{lem:algCactusTwoCVertices}, for any part $P \in \partition$, $\exclusiveWeight(P) \leq 4\sqrt{n}\cdot \weight(\charging(P))$.
    Moreover, by Observation~\ref{obs:cactusBeta}, each vertex in $\algset$ is charged by at most $5$ times.
    Thus, by the non-redundancy of $\opt$, $\opt(\instance) \leq \alg(\instance) + (5 \cdot 4\sqrt{n})\alg(\instance) = (20\sqrt{n}+1)\alg(\instance)$.
    
    The optimality follows from Theorem~\ref{thm:TadpoleLB}.
\end{proof}

\section{Cactus graphs with even firefighters each round}
\label{Sec:even}
In this section, we extend our framework to a special case of online firefighting on cactus graphs, where, in each round, either there is no firefighter available or an even number of firefighters are available.
We show that a greedy algorithm modified from our algorithm for cactus graphs is exactly $3$-competitive.

\subsection{The greedy algorithm $\algEven$}
We follow the graph reduction framework.
In each round, the algorithm $\algEven$ places the firefighters in a way that maximizes the number of vertices immediately saved during this round.
More specifically, when two or more firefighters are available, $\algEven$ protects the heaviest cycle $\rootC_1$ if it is heavier than the total weight of the two heaviest vertices $v_1$ and $v_2$ in the graph.
Otherwise, the algorithm protects the heaviest vertex $v_1$.
When there is only one firefighter available, the algorithm protects $v_1$.
(See Algorithm~\ref{Alg:even} for details.)
Note that although each round an even number of firefighters (or none) become available, due to the greedy nature of our algorithm, which might choose to protect $v_1$ first, in the middle of a round the algorithm might be left with an odd number of firefighters.

\begin{algorithm}[t]
    \caption{Procedure of $\algEven$ for cactus graph with even firefighters in round $i$} 
    \label{Alg:even}
    \textbf{Input:} Graph $\newG$ with fire source $\newR$ and $f_i$ available firefighters\\
    \While{$f_i > 0$ and there are available vertices in $\newG$}{
        \eIf{$\newR$ is not a cycle vertex in $\newG$}{
            Protect the vertex $v$ in $N(\newR)$ with the highest weight.\\
            Set $f_i\gets f_i -1$, $\newG \gets \newG \setminus \CoveredSet(v)$.
        }{
            \tcp{There are root cycles $\rootC_1, \rootC_2, \cdots$ containing $\newR$ and $\weight(\rootC_1) \geq \weight(\rootC_2) \geq \cdots$}
            Consider unprotected vertices $v_1, v_2, \cdots$ in $N(\newR)\cup (\bigcup_h\rootC_h)$ with $\weight(v_1) \geq \weight(v_2) \geq \cdots$

            \eIf{$f_i \geq 2$}{
                \eIf{$\weight(v_1) + \weight(v_2) \geq \weight(\rootC_1)$}{   
                    Protect $v_1$.\\
                    Set $f_i \gets f_i - 1$, $\newG \gets \newG \setminus \CoveredSet(v_1)$.
                }{
                    Protect both vertices $v_\ell, v_r$ in $N(\newR) \cap \rootC_1$.\\ Set $f_i \gets f_i - 2$, $\newG \gets \newG \setminus \CoveredSet(\{ v_\ell, v_r \})$.
                }
            }{
                Protect $v_1$.\\
                Set $f_i \gets f_i - 1$, $\newG \gets \newG \setminus \CoveredSet(v_1)$.
            }
        }
    }
\end{algorithm}

\subsection{Competitive analysis}
Given the sets of vertices $\algset$ and $\optset$ protected by $\algEven$ and by a non-redundant optimal solution, respectively, we follow the analysis framework introduced in Section~\ref{Sec:Framework}.
Note that under algorithm $\algEven$, a cycle can be protected by vertices $\valg_t$ and $\valg_{t+1}$ in $\algset$, with $t$ being either odd or even.

\runtitle{Partition $\partition$ of $\optset$.}
We partition the vertices in $\optset$ into three types of parts, which is an adaption from the partition for analyzing $\algCactus$:
\begin{enumerate}[Type a:]
    \item \runtitle{One non-cycle vertex that is not delayed.}
    For any non-cycle vertex $\vopt_t \in \optset$ that is not delayed by $\optset$, there is a part $\parta_t = \{\vopt_t\}$, which is a singleton, and $\vopt_t$ is the representative vertex in $\parta_t$.

    \item \runtitle{One cycle vertex that is not delayed.}
    For each cycle vertex $\vopt_t \in \optset$ that is the only vertex in the cycle protected by $\optset$, there is a Type-$b$ part $\partb_t = \{\vopt_t\} \cup \optDelayedSet(\vopt_t)$.
    We call $\vopt_t$ the \emph{representative vertex} of $\partb_t$.
    
    \item \runtitle{Two cycle vertices with their delayed set.}
    For each cycle $C$ that contains two vertices $\vopt_t$ and $\vopt_{s}$, and at least one of $\vopt_t$ and $\vopt_{s}$ is not delayed by $\optset$, there is a part $\partc_{t,s} = \{\vopt_t, \vopt_{s}\} \cup \optDelayedSet(\vopt_t) \cup \optDelayedSet(\vopt_{s})$.
    Recall that the index of $\vopt_{\min \{ t, s \} }$ is the time label in our framework.
    We call $\vopt_{\min\{t,s\}}$ the \emph{representative vertex} of $\partc_{t,s}$.
\end{enumerate}

Note that, unlike the partition used in the analysis of the algorithm for cactus graphs, there may be multiple Type-$a$ parts.

\medskip

\runtitle{Charging function.}
Given any part $P$, our charging function maps every representative vertex $\vopt_t$ to $\valg_x$ and $\valg_{x+1}$, where $x$ is the \emph{anchor} of $P$, defined as the largest odd number which is at most $t$ (where $t$ is the first vertex in $P$ protected by $\opt$).
That is, for a Type-$a$ part $\parta$ or a Type-$b$ part $\partb$, $\charging(P^{(\cdot)}_{t}) = \{\valg_{x}, \valg_{x+1} \mid x \text{ is the anchor of }P^{(\cdot)}_{t}\}$.
For a Type-$c$ part $\partc_{t,s}$, $\charging(\partc_{t,s}) = \{\valg_{x}, \valg_{x+1} \mid x \text{ is the largest odd number and } x \leq \min\{t, s\}\}$.

\paragraph*{Proof of Theorem~\ref{Thm:Even}}
\begin{proof}
    Given a part $P$ with anchor time $x$, we consider the reduced graph $\newG_x$ with fire source $\newR$ from $\algEven$'s perspective at time $x$. 
    We denote $v_1$ the heaviest vertex in $\newG_x$, $v_2$ the second heaviest vertex in $\newG_x$, and $\rootC_1$ the heaviest root cycle in $\newG_x$.
    If there is no root cycle in $\newG_x$, we let $\rootC_1 = \emptyset$ and $\weight(\rootC_1) = 0$.
    Note that since each round $i$ has even firefighters and $x$ is odd, at time $x$ there are at least two firefighters available to $\algEven$. 

    \medskip

    \runtitle{Given a part $P$ with anchor time $x$, $\exclusiveWeight(P) \leq \weight_x(v_1)$ or $\exclusiveWeight(P)\leq \weight_x(\rootC_1)$:}
    \begin{itemize}
        \item \runtitle{Type-$a$ part $\parta_t$.}
        Given the anchor time $x$ of $\parta_t = \{\vopt_t\}$, under the algorithm, $\exclusiveWeight(\parta_t) = \weight_x(\vopt_t) \leq \weight_x(v_1)$.

        \item \runtitle{Type-$b$ part $\partb_t$.}
        Given the anchor time $x$ of $\partb_t = \{\vopt_t\} \cup \optDelayedSet(\vopt_t)$, let $\rootC$ be the root cycle containing $\vopt_t$. 
        Every vertex in $\optDelayedSet(\vopt_t)$ lies in $\CoveredSet(\rootC)$.
        Thus, $\exclusiveWeight(\partb_t) \leq \exclusiveWeight(\rootC) \leq \weight_x(\rootC) \leq \weight_x(\rootC_1)$.

        \item \runtitle{Type-$c$ part $\partc_{t,s}$.}
        Given the anchor time $x$ of $\partc_{t, s} = \{\vopt_t, \vopt_s\} \cup \optDelayedSet(\vopt_t) \cup \optDelayedSet(\vopt_s)$, let $\rootC$ be the root cycle containing $\vopt_t$ and $\vopt_s$.
        Since every vertex in $\optDelayedSet(\vopt_t)$ or $\optDelayedSet(\vopt_s)$ lies in $\CoveredSet(\rootC)$.
        Thus, $\exclusiveWeight(\partc_{t, s}) \leq \exclusiveWeight(\rootC) \leq \weight_x(\rootC) \leq \weight_x(\rootC_1)$.
    \end{itemize}
    
    \medskip

    For any $\valg_x \in \algset$, we define a variant of weight function: $\newWeight(\valg_x) = \frac{1}{2}\weight_{\min\{x,y\}}(\rootC_1)$ if $\valg_x$ is chosen to protect cycle $\rootC_1$ together with another vertex $\valg_y$. Otherwise, $\newWeight(\valg_x) = \weight_x(\valg_x)$.
    By construction, $\sum_{x} \newWeight(\valg_x) = \weight(\algset)$.
    We further define that given $S \subseteq \algset$, $\newWeight(S) := \sum_{\valg \in S} \newWeight(\valg)$. 
    Thus, $\newWeight(S) \geq \newWeight(\valg)$ for any $\valg \in S$.
    Note that it is different from the definition of $\weight(S)$, which is the size of $\CoveredSet(S)$.

    \medskip
    
    \runtitle{Given any part $P$ with anchor time $x$, $\weight_x(\rootC_1) \leq \newWeight(\charging(P))$ and $\weight_x(v_1) \leq \newWeight(\charging(P))$:}
    Consider the vertices $\valg_x$ and $\valg_{x+1}$ protected by $\algEven$ at any two consecutive times $x$ and $x+1$. 
    Recall that under the algorithm, any vertex can be chosen to protect the heaviest vertex $v_1$ or to protect ``half'' of the heaviest cycle $\rootC_1$. 
    Therefore, there are five cases:
    \begin{enumerate}[\text{Case} (i)]
        \item $\valg_x = v_1$ and $\valg_{x+1} = v_2$: This case happens when $\weight_x(v_1) + \weight_x(v_2) \geq \weight_x(\rootC_1)$.
        Thus, $\weight_x(v_1) = \newWeight(\valg_x) \leq \newWeight(\{\valg_x, \valg_{x+1}\})$, and $\weight_x(\rootC_1) \leq \weight_x(v_1) + \weight_x(v_2) = \newWeight(\valg_x) + \newWeight(\valg_{x+1}) = \newWeight(\{\valg_x, \valg_{x+1}\})$.

        \item $\valg_x = v_1$ and $\valg_{x+1}$ is protecting $\rootC_1$ (with $\valg_{x+2})$: It is because $\weight_x(v_1) + \weight_x(v_2) \geq \weight_x(\rootC_1)$.
        Thus, $\weight_x(v_1) = \newWeight(\valg_x) \leq \newWeight(\{\valg_x, \valg_{x+1}\})$.
        Moreover, since $\weight_x(v_1) + \weight_x(v_2) \geq \weight_x(\rootC_1)$, $\newWeight(\valg_x)=\weight_x(v_1) \geq \frac{1}{2}\weight_x(\rootC_1)$.
        Thus, $\weight_x(\rootC_1) \leq \newWeight(\valg_x)+\newWeight(\valg_{x+1}) = \newWeight(\{\valg_x, \valg_{x+1}\})$.

        \item $\valg_x$ is protecting cycle $\rootC_1$ with $\valg_{x+1}$: It is because $\weight_x(\rootC_1) > \weight_x(v_1) + \weight_x(v_2)$.
        In this case, $\weight_x(v_1) < \weight_x(\rootC_1) = \newWeight(\valg_x) + \newWeight(\valg_{x+1}) = \newWeight(\{\valg_x, \valg_{x+1}\})$.

        \item $\valg_x$ is protecting cycle $\rootC_0$ with $\valg_{x-1}$, and $\valg_{x+1} = v_1$: 
        In this case, $\weight_x(v_1) = \newWeight(\valg_{x+1}) \leq \newWeight(\{\valg_x, \valg_{x+1}\})$.
        Moreover, this case happens because $\weight_{x-1}(\rootC_0) \geq \weight_{x-1}(\rootC_1) = \weight_x(\rootC_1)$ and $\weight_x(v_1) + \weight_x(v_2) \geq \weight_x(\rootC_1)$. 
        Thus, $\weight_x(v_1) \geq \frac{1}{2}\weight_x(\rootC_1)$ since $\weight_x(v_1) \geq \weight_x(v_2)$.
        Therefore, $\newWeight(\{\valg_x, \valg_{x+1}\}) = \newWeight(\valg_x) + \newWeight(\valg_{x+1}) = \frac{1}{2}\weight_{x-1}(\rootC_0) + \weight_x(v_1) \geq \frac{1}{2}\weight_{x}(\rootC_1) + \frac{1}{2}\weight_x(\rootC_1) = \weight_x(\rootC_1)$.

        \item $\valg_x$ is protecting cycle $\rootC_0$ with $\valg_{x-1}$, and $\valg_{x+1}$ is protecting cycle $\rootC_1$ with $\valg_{x+2}$: 
        It is because $\weight_{x-1}(\rootC_0) \geq \weight_{x-1}(\rootC_1) = \weight_x(\rootC_1)$ and $\weight_x(\rootC_1) > \weight_x(v_1) + \weight_x(v_2)$. 
        In this case, $\newWeight(\{\valg_x, \valg_{x+1}\}) = \newWeight(\valg_x) + \newWeight(\valg_{x+1}) = \frac{1}{2}\weight_{x-1}(\rootC_0) + \frac{1}{2}\weight_x(\rootC_1) \geq \frac{1}{2}\weight_{x}(\rootC_1) + \frac{1}{2}\weight_{x}(\rootC_1) = \weight_x(\rootC_1) > \weight_x(v_1)$.
    \end{enumerate}
    Therefore, we can conclude that for any part $P$ with anchor time $x$, $\exclusiveWeight(P) \leq \newWeight(\charging(P))$ since $\charging(P) = \{\valg_x, \valg_{x+1}\}$. 
    
    Moreover, since the charging function $\charging$ charges each part $P$ to $\valg_x$ and $\valg_{x+1}$ where $x$ is the anchor time of $P$, each $\valg_t$ can be charged at most twice.
    Hence, by our framework, $\frac{\opt(\instance)}{\algEven(\instance)} = \frac{\weight(\optset)}{\weight(\algset)} = \frac{\exclusiveWeight(\optset)}{\weight(\algset)} + 1 = \frac{\sum_{P} \exclusiveWeight(P)}{\weight(\algset)}+1 \leq \frac{\sum_{P} \weight(\charging(P))}{\weight(\algset)} + 1 \leq \frac{2\weight(\algset)}{\weight(\algset)} + 1 = 3$.
\end{proof}

\begin{figure}[bt]
    \centering
    \includegraphics[width=0.5\linewidth]{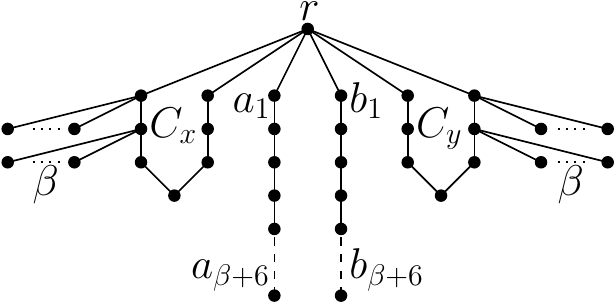}
    \caption{The construction of the graph where $\algEven$ is exactly $3$-competitive.}
    \label{fig:thm27}
\end{figure}

\begin{theorem}
    The algorithm $\algEven$ is exactly $3$-competitive.
\end{theorem}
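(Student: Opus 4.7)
The theorem asserts both the upper bound and the tightness at $3$. The upper bound is already established by Theorem~\vref{Thm:Even}, so my plan focuses on exhibiting a lower-bound instance family where $\opt(\instance)/\algEven(\instance)$ approaches $3$.

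My approach is to reverse-engineer the tight case of the analysis. Recall that in the proof of Theorem~\vref{Thm:Even}, the bound decomposes as $\opt(\instance) \leq \weight(\algset) + \exclusiveWeight(\optset) \leq \weight(\algset) + 2\,\weight(\algset)$. To saturate this, I would need an instance simultaneously realizing: (i) $\CoveredSet(\algset) \subseteq \CoveredSet(\optset)$, making the common contribution equal to $\weight(\algset)$; and (ii) two distinct parts $P_1, P_2$ of $\optset$ sharing the same anchor time $x$, each with $\exclusiveWeight(P_i) \approx \newWeight(\charging(P_i))$, so the exclusive contribution reaches $2\,\weight(\algset)$. Inspecting the five cases in the proof, the most promising candidate is Case~(iii), where $\valg_x$ and $\valg_{x+1}$ jointly protect the heaviest cycle $\rootC_1$, and $\opt$ simultaneously protects representatives of two Type-$c$ parts each covering a cycle of weight $\weight(\rootC_1)$ via staggered firefighter releases.

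The concrete construction I would try is a cactus $G_n$ containing a heavy root cycle $\rootC_1$ of weight $W$ through $r$, plus two additional ``shadow'' cycles $\rootC_A, \rootC_B$ of weight $W$ each that are carefully interleaved with $\rootC_1$ so that: (a) the reduction caused by $\algEven$ protecting $\rootC_1$ leaves $\rootC_A, \rootC_B$ available only as shrunken cycles that deliver little additional gain to $\algEven$ in later rounds; and (b) $\opt$ can protect one vertex of $\rootC_A$ and one of $\rootC_B$ in round $1$, then finish saving both cycles in later rounds, while simultaneously keeping every vertex in $\CoveredSet(\rootC_1)$ reachable to $\opt$ via a small non-cycle subtree bridging $\opt$'s protections to the vertices $\algEven$ covers. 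The firefighter schedule would be a fixed even release per round chosen so that $\algEven$'s greedy comparison $\weight(v_1)+\weight(v_2) < \weight(\rootC_1)$ forces the cycle choice, while the adversarial structure ensures $\opt$ has enough time across rounds to collect both shadow cycles.

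The main obstacle I anticipate is item~(i): since any two cycles in a cactus share at most one vertex (typically $r$ in our root-cycle setting), their covered sets are essentially disjoint, so na\"ively taking $\opt$ to protect different cycles from $\algEven$ yields only ratio~$2$ rather than~$3$. To overcome this, the construction must include an auxiliary non-cycle structure (for instance, subtrees hanging off the shadow cycles that pass through, or share covered-vertices with, the root cycle $\rootC_1$) so that $\opt$'s protections of $\rootC_A$ and $\rootC_B$ inherently dominate $\CoveredSet(\algset)$. Verifying that $\algEven$'s behavior at every iteration remains consistent with Case~(iii) (and does not slip into a case with smaller slack such as Case~(i)) will be the most delicate technical step, but once the parameters are fixed, the resulting ratio bound follows from a direct accounting of vertices saved by each side.
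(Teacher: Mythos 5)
Your framing is right—only the lower bound needs a new argument, since the upper bound is Theorem~\vref{Thm:Even}—but what you provide is a plan rather than a proof, and the plan stalls precisely at the step that carries all of the difficulty. No concrete instance, firefighter schedule, or accounting is given; conditions (i) and (ii) are stated as desiderata, and the obstacle you yourself flag is never resolved. In fact, as set up it cannot be resolved the way you suggest: if $\algEven$ protects the root cycle $\rootC_1$ (your Case~(iii)), then every vertex of $\CoveredSet(\rootC_1)$ has all of its paths to $r$ passing through $\rootC_1$, and in a cactus no protection placed on a different cycle (or on a subtree hanging off one) can separate such a vertex from $r$. So to realize (i), $\opt$ would have to spend firefighters on $\rootC_1$ itself, which your round-1 budget—both firefighters going to the two shadow cycles—does not allow; the ``small non-cycle subtree bridging'' is asserted, not constructed. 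On top of that, keeping $\algEven$ in Case~(iii) at every iteration (rather than slipping into the vertex-greedy cases) is itself a delicate constraint that you acknowledge but do not verify. As it stands, the tightness half of the theorem is not established.

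For contrast, the paper's tight family does not force $\algEven$ into the cycle-protection case at all. The graph has two $8$-cycles through $r$, each carrying two pendant clusters of size $\beta$ (at $x_1,x_2$ and $y_1,y_2$), plus two paths of length $\beta+6$ from $r$, with schedule $(2,0,0,0,4,0,\dots)$. In round~1 the greedy comparison makes $\algEven$ protect the two path heads (total $2\beta+12$), and by round~5 both cycles have burned, so the late firefighters buy it nothing. $\opt$ instead protects one cluster-bearing vertex on each cycle in round~1 (immediate gain plus delay), and in round~5 completes both cycles \emph{and} re-saves the tails of the two paths, totalling $6\beta+10$. The overlap with $\CoveredSet(\algset)$ thus comes from the path tails—cheap tree structure that $\opt$ can pick up late—rather than from any bridging gadget, which sidesteps exactly the cactus covered-set obstruction on which your construction founders. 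If you want to salvage your route, the realistic fix is to let $\opt$ spend its late firefighters partly on $\rootC_1$ itself (or on tree parts it shares with $\algEven$), i.e., to obtain the overlap from vertices $\opt$ protects in the same region $\algEven$ already saved, and then redo the timing analysis explicitly.
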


\begin{proof}
    We consider the input graph $G$ as follows:
    There is two cycles with $8$ vertices, $C_x = (r, x_1, x_2, \cdots, x_7)$ and $C_y = (r, y_1, y_2, \cdots, y_7)$, where $r$ is the fire source.
    In the first cycle, there are $\beta$ degree-$1$ vertices adjacent to $x_1$, and another $\beta$ degree-$1$ vertices adjacent to $x_2$.
    Similarly, in the second cycle, there are $\beta$ degree-$1$ vertices adjacent to $y_1$, and another $\beta$ degree-$1$ vertices adjacent to $y_2$.
    There are two other paths $r, a_1, a_2, \cdots, a_{\beta+6}$ and $(r, b_1, b_2, \cdots, b_{\beta+6})$.
    The firefighter sequence is $(2, 0, 0, 0, 4, 0, \cdots)$.
    That is, $f_1 = 2$, $f_5 = 4$, and $f_i = 0$ for all $i \notin \{1, 5\}$.

    In the original graph, the heaviest vertices are $a_1$ and $b_1$, and the heaviest cycle $C_x$ or $C_y$ has weights of $7+2\beta$, which is not heavier than $\weight(a_1) + \weight(b_1) = 2\beta + 12$.
    Thus, in round $1$, $\algEven$ protects $a_1$ in the first iteration.
    In the second iteration, the current heaviest vertex is $b_1$, and $\algEven$ protects $b_1$.
    Subsequently, at the beginning of round $5$, both cycles have been burnt entirely. 
    Hence, the $\algEven$ saves $2\beta+12$ vertices.

    On the other hand, an algorithm can protect the vertices $a_1$ and $b_1$ in the first round, and protect $a_5$, $b_5$, $x_3$, and $y_3$ in round $5$.
    Thus, $\opt$ saves at least $(\beta + 1) + (\beta + 1) + (\beta + 2) + (\beta + 2) + (\beta + 2) + (\beta + 2) = 6 \beta + 10$.
    By selecting an arbitrarily large $\beta$, the ratio tends to $3$.
\end{proof}

\newpage

\bibliography{ref}
\end{document}